\numberwithin{equation}{section} 
\renewcommand\subparagraph{\@startsection{subparagraph}{5}%
	{\parindent}
	{0pt}
	{-1em}
	{\normalfont\itshape}
} 
\newtheorem{theorem}{Theorem}
\newtheorem{lemma}{Lemma} 
\newtheorem{proposition}{Proposition}
\newtheorem{corollary}{Corollary}
\theoremstyle{definition}
\newtheorem{remark}{Remark}
\let\originalleft\left
\let\originalright\right
\renewcommand{\left}{\mathopen{}\mathclose\bgroup\originalleft}
\renewcommand{\right}{\aftergroup\egroup\originalright}
\def\[{\begin{equation}}
\def\]{\end{equation}}
\newcommand{\D}{\mathrm{d}}
\DeclareMathOperator{\E}{e}
\newcommand{\I}{\mathrm{i}}
\DeclarePairedDelimiter{\bra}{\langle}{\rvert}
\DeclarePairedDelimiter{\ket}{\lvert}{\rangle}
\DeclareMathOperator*{\ordprodopp}{\prod\limits^{\vbox to -.5ex{\kern-0.5ex\hbox{$\leftharpoonup$}\vss}}}
\DeclareMathOperator*{\ordprod}{\prod\limits^{\vbox to -.5ex{\kern-0.5ex\hbox{$\rightharpoonup$}\vss}}}
\newcommand{\Vector}[1]{\boldsymbol{#1}}
\newcommand{\To}{\cdots\mspace{-1mu}}
\DeclareMathOperator{\End}{End}
\DeclareMathOperator{\id}{\mathbbm{1}} 
 \let\Im\undefined \DeclareMathOperator{\Im}{Im}
\DeclareMathOperator{\Res}{Res}
\title{\textbf{Integral formula for elliptic SOS models \\ with domain walls and a reflecting end}}
\author{\normalsize \textsc{Jules Lamers} \\[\baselineskip] \normalsize \textit{Institute for Theoretical Physics} \\ \normalsize \textit{Center for Extreme Matter and Emergent Phenomena, Utrecht University} \\ \normalsize \textit{Leuvenlaan 4, 3584 CE Utrecht, The Netherlands} \\[\baselineskip] \normalsize \texttt{j.lamers@uu.nl}}
\date{}
\begin{document}

\begin{flushright}
	\footnotesize
	ITP-UU-15/11
\end{flushright}

{\let\newpage\relax\maketitle} 
\thispagestyle{empty}


\begin{abstract}
\noindent In this paper we extend previous work of Galleas and the author to elliptic \textsc{sos} models. We demonstrate that the dynamical reflection algebra can be exploited to obtain a functional equation characterizing the partition function of an elliptic \textsc{sos} model with domain-wall boundaries and one reflecting end. Special attention is paid to the structure of the functional equation. Through this approach we find a novel multiple-integral formula for that partition function.
\end{abstract}

\begin{center}
\small{Keywords: elliptic \textsc{sos} models, functional equations, open boundaries, domain walls \\ \textsc{pacs} numbers: 05.50.+q, 02.30.Ik} 
\bigskip

\date{October 2015}
\end{center}

\tableofcontents

\section{Introduction}
\label{sec:intro}

\emph{Solid-on-solid} (\textsc{sos}) models are Ising-type models for the growth of crystal surfaces. The `solid-on-solid condition' \cite{Ben_84} forbids the surface to have overhangs, so that the shape of the surface can be modelled by associating discrete height variables to the sites of a square lattice. The surface tension is taken into account by interactions between these height variables. In this work we are only interested in the exactly-solvable subclass consisting of models that are of `interaction-round-a-face' type, i.e.\ the interactions take place between the four vertices sharing a common face, and for which the height difference between all adjacent vertices is precisely one. Accordingly we take `\textsc{sos} models' to mean exclusively these models, as is standard in the literature on quantum integrability. The matrix containing the face weights satisfies a `face' (quantum) Yang-Baxter equation (\textsc{ybe}) guaranteeing the commutativity of the corresponding transfer matrices. The partition function is the product of the local weights, summed over all height configurations for the lattice.

Alternatively, \textsc{sos} models can be interpreted as `generalized six-vertex models'. The \textsc{sos} weight-matrix corresponds to a `generalized' or \emph{dynamical $R$-matrix} that depends on a \emph{dynamical} parameter keeping track of the original heights. Importantly, the dynamical $R$-matrix satisfies the ice rule. In addition it obeys the \emph{dynamical Yang-Baxter equation}, which was first encountered by Gervais and Neveu \cite{GN_84} in the context of (Liouville) conformal field theory and later independently obtained by Felder \cite{Fel_94b} as the quantization of the modified classical \textsc{ybe}. The appropriate quantum-algebraic setting is that of elliptic quantum groups \cite{Fel_94b, EF_98}. The ordinary six-vertex model may be recovered in the limit of infinite heights.

In the context of integrability, \textsc{sos} models were introduced by Baxter~\cite{Bax_73} as a convenient reformulation of the symmetric eight-vertex model. In algebraic terms~\cite{TF_79, FV_96b} the \emph{vertex-face} transformation relating the two models is a local (gauge) transformation mapping the eight-vertex $R$-matrix to the dynamical $R$-matrix. By the ice rule the latter allows for the construction of a highest-weight vector (pseudovacuum), opening the way for a Bethe-ansatz analysis. Like the eight-vertex $R$-matrix, the associated dynamical $R$-matrix involves elliptic functions; this elliptic \textsc{sos} model is sometimes referred to as the `eight-vertex \textsc{sos}' (\textsc{8vsos}) model. The dynamical $R$-matrix also appears in the quantization of the elliptic Ruijsenaars-Schneider model~\cite{ACF_97}, explaining the adjective `dynamical': here the dynamical parameter is a coordinate on the phase space of the classical integrable model.

Besides the relation with various models in statistical physics, \textsc{sos} models have several applications in physics and mathematics, depending on the choice of boundary conditions. The case with fixed \emph{domain-wall} boundary conditions, originating in the study of scalar products of Bethe vectors \cite{Kor_82}, is of particular interest for enumerative combinatorics in the spirit of Kuperberg~\cite{Kup_95}, whose work was generalized to elliptic \textsc{sos} models at root of unity by Rosengren~\cite{Ros_09}. Models with domain-wall boundaries further allow for concise expressions for the partition function as certain determinants. For the six-vertex model the partition function takes the form of a single determinant~\cite{Ize_87}, while for \textsc{sos} models it can be written as a sum of $2^L$ determinants for models on an $L\times L$ lattice~\cite{Ros_09}. Interestingly, the presence of one \emph{reflecting} (open) boundary makes it possible get a single determinant for six-vertex models \cite{Tsu_98} as well as for \textsc{sos} models \cite{FK_10, Fil_11}. Such determinant-formulas are elegant and helped uncovering striking physics like the possible dependence of thermodynamic quantities on the choice of boundary conditions imposed at finite system size in the computation~\cite{KZ_00, *TRK_15a, *TRK_15b}. Yet many applications of partition functions reside in the homogeneous limit, whose evaluation can be rather nontrivial~\cite{ICK_92}, especially when elliptic functions are involved. This is one issue that makes it interesting to find alternative ways to express partition function. For \textsc{sos} models with domain walls at all boundaries such alternatives are available~\cite{PRS_08, Gal_12, Gal_13a}, and also for the six-vertex model with one reflecting end~\cite{GL_14}. In \cite{Gal_12, Gal_13a, GL_14} multiple-integral formulas were obtained as the solution of certain functional equations. Such repeated contour integrals might allow one to access the thermodynamic limit more straightforwardly. In the present work we extend \cite{GL_14} to the case of elliptic \textsc{sos} models with domain-wall boundaries and one reflecting end, complementing the result of \cite{Fil_11}.

Functional equations are at the core of exactly solvable models of statistical mechanics. The functional equations derived in \cite{Gal_12, Gal_13a, GL_14} have particular origins: the dynamical Yang-Baxter algebra for \cite{Gal_12, Gal_13a} and the reflection algebra for \cite{GL_14}. The use of such algebras as a source of functional equations was put forward by Galleas in \cite{Gal_08} for spectral problems and in \cite{Gal_10} for vertex-model partition functions. Compact ways to write solutions of such functional equations were only obtained after several refinements of the method~\cite{Gal_11, Gal_12, Gal_13a, Gal_14, *Gal_13b, GL_14}. At the same time this \emph{algebraic-functional method} is a beautiful example of the rigid structure imposed by the (quantum) algebra underlying exactly solvable models in statistical physics, which is reflected in the many nice properties of the functional equations obtained in this way.

\paragraph{Outline.} This paper is organized as follows. To fix our notation and conventions we describe the elliptic \textsc{sos} model with domain-wall boundary conditions and one reflecting end in Section~\ref{sec:setup}, setting up a graphical notation for generalized six-vertex models along the way. Our main results are stated at the start of Section~\ref{sec:results}. The first result is a linear functional equation for the model's partition function, which we derive using the algebraic-functional method in Section \ref{sec:AF}. Secondly we show that, despite being linear, this functional equation has a \emph{unique} analytic solution, up to a constant overall factor. This is a consequence of a recursive relation between the functional equations for successive system sizes, established in Section~\ref{sec:reduction}. Our third result is a new multiple-integral formula for the partition function, which is found by solving our functional equation by recursion in Section~\ref{sec:sol}. We conclude with some remarks and open questions in Section~\ref{sec:conclusion}. Several technical details are left for the appendices.

\section{Set-up}
\label{sec:setup}

Consider an \textsc{sos} model with height variables taking values in $\theta + \gamma\mathbb{Z}$. The \emph{dynamical} parameter~$\theta$ sets a reference height, whilst the \emph{crossing} or \emph{anisotropy} parameter~$\gamma$ controls the step size. The condition that neighbouring heights differ by one allows for six types of height profiles around any face, each with an associated local (Boltzmann) weight as shown in Figure~\ref{fig:SOS_weights}. 

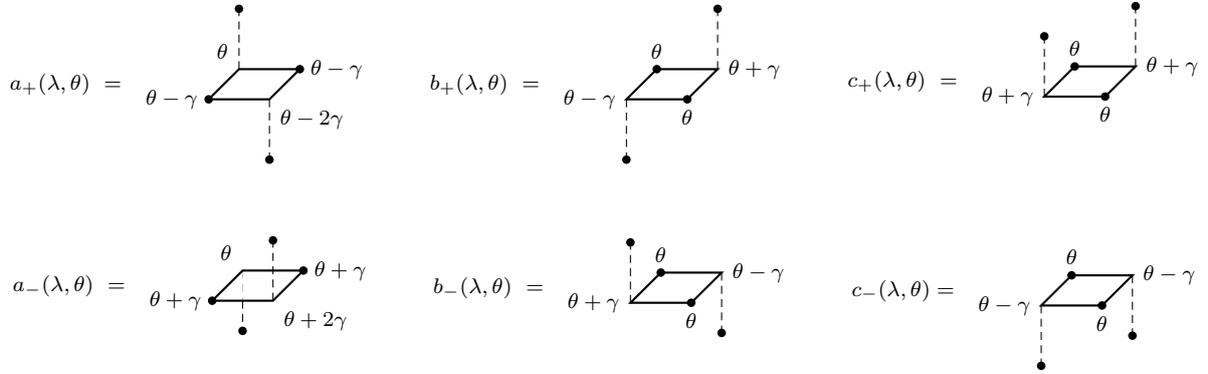
\begin{figure}[h]
	\begin{center}
		\begin{tikzpicture}[baseline={([yshift=-.5*11pt*0.8]current bounding box.center)},font=\scriptsize,scale=0.8,thick,x={(1cm,0)},y={(.5cm,0.5cm)}, z={(0cm,1cm)}]
		\node[left] at (-1.5,.5,0) {$a_+(\lambda,\theta) \ = $};
		
		\draw[thin,densely dashed] (0,1,0) -- (0,1,1);
		\draw[thin,densely dashed] (1,0,0) -- (1,0,-1);
		
		\draw (0,0,0) node[left]{$\theta-\gamma$} -- (1,0,0) node[below right]{$\theta-2\gamma$} -- (1,1,0) node[right]{$\theta-\gamma$} -- (0,1,0) node[above left]{$\theta$} -- cycle;
		
		\fill [black] (0,0,0) circle (2pt);
		\fill [black] (1,0,-1) circle (2pt); 
		\fill [black] (1,1,0) circle (2pt);
		\fill [black] (0,1,1) circle (2pt);
		\end{tikzpicture}
		\hskip .5cm
		\begin{tikzpicture}[baseline={([yshift=-.5*11pt*0.8]current bounding box.center)},font=\scriptsize,scale=0.8,thick,x={(1cm,0)},y={(.5cm,0.5cm)}, z={(0cm,1cm)}]
		\node[left] at (-1.5,.5,0) {$b_+(\lambda,\theta) \ = $};
		
		\draw[thin,densely dashed] (0,0,0) -- (0,0,-1);
		\draw[thin,densely dashed] (1,1,0) -- (1,1,1);
		
		\draw (0,0,0) node[left]{$\theta-\gamma$} -- (1,0,0) node[below]{$\theta$} -- (1,1,0) node[right]{$\theta+\gamma$} -- (0,1,0) node[above]{$\theta$} -- cycle;
		
		\fill [black] (0,0,-1) circle (2pt);
		\fill [black] (1,0,0) circle (2pt); 
		\fill [black] (1,1,1) circle (2pt);
		\fill [black] (0,1,0) circle (2pt); 
		\end{tikzpicture}
		\hskip .5cm
		\begin{tikzpicture}[baseline={([yshift=-.5*11pt*0.8]current bounding box.center)},font=\scriptsize,scale=0.8,thick,x={(1cm,0)},y={(.5cm,0.5cm)}, z={(0cm,1cm)}]
		\node[left] at (-1.5,.5,0) {$c_+(\lambda,\theta) \ = $};
		
		\draw[thin,densely dashed] (0,0,0) -- (0,0,1);
		\draw[thin,densely dashed] (1,1,0) -- (1,1,1);
		
		\draw (0,0,0) node[left]{$\theta+\gamma$} -- (1,0,0) node[below]{$\theta$} -- (1,1,0) node[right]{$\theta+\gamma$} -- (0,1,0) node[above]{$\theta$} -- cycle;
		
		\fill [black] (0,0,1) circle (2pt);
		\fill [black] (1,0,0) circle (2pt); 
		\fill [black] (1,1,1) circle (2pt);
		\fill [black] (0,1,0) circle (2pt); 
		
		\node at (0,0,-1) {}; 
		\end{tikzpicture}
		\\[\baselineskip]
		\begin{tikzpicture}[baseline={([yshift=-.5*11pt*0.8]current bounding box.center)},font=\scriptsize,scale=0.8,thick,x={(1cm,0)},y={(.5cm,0.5cm)}, z={(0cm,1cm)}]
		\node[left] at (-1.5,.5,0) {$a_-(\lambda,\theta) \ = $};
		
		\draw[gray!50,thin,densely dashed] (0,1,0) -- (0,1,-.5);
		\draw[thin,densely dashed] (0,1,-.5) -- (0,1,-1);
		\draw[thin,densely dashed] (1,0,0) -- (1,0,1);
		
		\draw (0,0,0) node[left]{$\theta+\gamma$} -- (1,0,0) node[below right]{$\theta+2\gamma$} -- (1,1,0) node[right]{$\theta+\gamma$} -- (0,1,0) node[above left]{$\theta$} -- cycle;
		
		\fill [black] (0,0,0) circle (2pt);
		\fill [black] (1,0,1) circle (2pt); 
		\fill [black] (1,1,0) circle (2pt);
		\fill [black] (0,1,-1) circle (2pt);
		\end{tikzpicture}
		\hskip .5cm
		\begin{tikzpicture}[baseline={([yshift=-.5*11pt*0.8]current bounding box.center)},font=\scriptsize,scale=0.8,thick,x={(1cm,0)},y={(.5cm,0.5cm)}, z={(0cm,1cm)}]
		\node[left] at (-1.5,.5,0) {$b_-(\lambda,\theta) \ = $};
		
		\draw[thin,densely dashed] (0,0,0) -- (0,0,1);
		\draw[thin,densely dashed] (1,1,0) -- (1,1,-1);
		
		\draw (0,0,0) node[left]{$\theta+\gamma$} -- (1,0,0) node[below]{$\theta$} -- (1,1,0) node[right]{$\theta-\gamma$} -- (0,1,0) node[above]{$\theta$} -- cycle;
		
		\fill [black] (0,0,1) circle (2pt);
		\fill [black] (1,0,0) circle (2pt); 
		\fill [black] (1,1,-1) circle (2pt);
		\fill [black] (0,1,0) circle (2pt); 
		\end{tikzpicture}
		\hskip .5cm
		\begin{tikzpicture}[baseline={([yshift=-.5*11pt*0.8]current bounding box.center)},font=\scriptsize,scale=0.8,thick,x={(1cm,0)},y={(.5cm,0.5cm)}, z={(0cm,1cm)}]
		\node[left] at (-1.5,.5,0) {$c_-(\lambda,\theta) = $};
		
		\draw[thin,densely dashed] (0,0,0) -- (0,0,-1);
		\draw[thin,densely dashed] (1,1,0) -- (1,1,-1);
		
		\draw (0,0,0) node[left]{$\theta-\gamma$} -- (1,0,0) node[below]{$\theta$} -- (1,1,0) node[right]{$\theta-\gamma$} -- (0,1,0) node[above]{$\theta$} -- cycle;
		
		\fill [black] (0,0,-1) circle (2pt);
		\fill [black] (1,0,0) circle (2pt); 
		\fill [black] (1,1,-1) circle (2pt);
		\fill [black] (0,1,0) circle (2pt); 
		
		\node at (0,1,1) {}; 
		\end{tikzpicture}
	\end{center}
	\caption{In \textsc{sos} models where all neighbouring heights differ by one unit there are six allowed height profiles around a face.}
	\label{fig:SOS_weights}
\end{figure}

We find it convenient to switch to the language of generalized vertex models. This is done via the following procedure, which seems to have been first pointed out by Lenard \cite[Note added in proof]{Lie_67}; see also \cite{Bei_77}. Pass to the dual lattice, for which the height variables are associated to the dual faces.Assign classical spins to the dual edges according to the following rule: going around a dual vertex in anti-clockwise fashion, place an arrow that points outwards (inwards) if the height increases (decreases) by one unit. Upon completing a circle one returns to the height one started at, which means that the arrows around the dual vertex must satisfy the ice rule from the six-vertex model. In the presence of a reflecting boundary it is useful to convert the arrows into signs, where we take a plus for arrows following some fixed orientation of the line (ordinarily: going up or to the right), see Figure~\ref{fig:vertex_weights}.
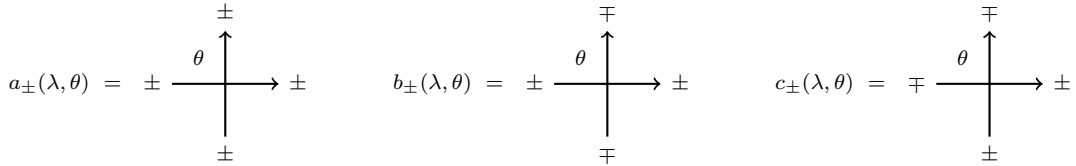
\begin{figure}[h]
	\begin{center}
		\begin{tikzpicture}[baseline={([yshift=-.5*11pt*0.7]current bounding box.center)},font=\scriptsize,scale=0.7,thick]
		\node[left] at (-.75,1) {$a_\pm(\lambda,\theta) \ = $};
		\draw[->] (0,1) node[left]{$\pm$} -- (2,1) node[right]{$\pm$};
		\draw[->] (1,0) node[below]{$\pm$} -- (1,2) node[above]{$\pm$};
		\node at (.5,1.5) {$\theta$};
		\end{tikzpicture}
		\hskip .75cm
		\begin{tikzpicture}[baseline={([yshift=-.5*11pt*0.7]current bounding box.center)},font=\scriptsize,scale=0.7,thick]
		\node[left] at (-.75,1) {$b_\pm(\lambda,\theta) \ = $};
		\draw[->] (0,1) node[left]{$\pm$} -- (2,1) node[right]{$\pm$};
		\draw[->] (1,0) node[below]{$\mp$} -- (1,2) node[above]{$\mp$};
		\node at (.5,1.5) {$\theta$};
		\end{tikzpicture}
		\hskip .75cm
		\begin{tikzpicture}[baseline={([yshift=-.5*11pt*0.7]current bounding box.center)},font=\scriptsize,scale=0.7,thick]
		\node[left] at (-.75,1) {$c_\pm(\lambda,\theta) \ = $};
		\draw[->] (0,1) node[left]{$\mp$} -- (2,1) node[right]{$\pm$};
		\draw[->] (1,0) node[below]{$\pm$} -- (1,2) node[above]{$\mp$};
		\node at (.5,1.5) {$\theta$};
		\end{tikzpicture}
	\end{center}
	\caption{The six vertex weights dual to the face weights from Figure~\ref{fig:SOS_weights}.}
	\label{fig:vertex_weights}
\end{figure}

The central quantity in this work is the partition function $\mathcal{Z}$ of an elliptic \textsc{sos} model on a rectangular lattice with $2L$ rows and $L$ columns, and specific boundary conditions: one reflecting end and domain walls at the three other ends, as depicted in Figure~\ref{fig:part_fn}. In this section we set up our notation and conventions, which mostly follow \cite{Gal_12, Gal_13a, GL_14}, culminating in an algebraic expression for $\mathcal{Z}$. The set-up is a generalization of that in \cite{GL_14} including (shifts of) the dynamical parameter~$\theta$ in the appropriate places.

\begin{figure}[h]
\begin{center}
\begin{tikzpicture}[baseline={([yshift=-.5*11pt*0.8]current bounding box.center)},font=\scriptsize,scale=0.8,thick]
	\foreach \x in {0,...,3} \draw (\x+.5,.5) -- +(0,4.5);
	\foreach \y in {0,...,6} \draw (.5,.75*\y+.5) -- +(3,0);
	
	\foreach \y in {1,...,3} \draw (.5,1.5*\y-.25) -- ++(-.5,.75) -- ++(0,-1.5) -- +(.5,.75);
	\foreach \y in {0,...,3} {
		\draw[thin,densely dashed] (.5,1.5*\y+.5) -- +(-.5,0);
		\fill [black] (0,1.5*\y+.5) circle (2pt);
	}
	\foreach \x in {0,...,3} \foreach \y in {0,...,6} \fill [black] (\x+.5,.75*\y+.5) circle (2pt);
	
	\node at (0,.125) {$\theta$}; 
	\node at (.5,.125) {$\theta$}; 
	\node at (1.5,.125) {$\theta-\gamma$};
	\node at (2.75,.125) {$\cdots$};
	\node[right] at (3.5,.125) {$\theta-L\gamma$}; 
	\node[right] at (3.5,1.25) {$\theta-(L-1)\gamma$}; 
	\node[right,yshift=3pt] at (3.75,2){$\vdots$};
	\node[right,yshift=3pt] at (3.75,3.5){$\vdots$};
	\node[right] at (3.5,4.25) {$\theta+(L-1)\gamma$}; 
	\node[right] at (3.5,5.375) {$\theta+L\gamma$}; 
	\node at (2.75,5.375) {$\cdots$};
	\node at (1.5,5.375) {$\theta+\gamma$};
	\node at (.5,5.375) {$\theta$};
	\node at (0,5.375) {$\theta$};
\end{tikzpicture}
\hskip 3cm
\begin{tikzpicture}[baseline={([yshift=-.5*11pt*0.8]current bounding box.center)},font=\scriptsize,scale=0.8,thick]
	\foreach \y in {1,...,3} {
		\draw[rounded corners=5pt]    (0,1.5*\y-.25) -- ++(.7,-.35) -- +(3.3,0) node[right]{$-$};
		\draw[rounded corners=5pt,->] (0,1.5*\y-.25) -- ++(.7,+.35) -- +(3.3,0) node[right]{$+$};
	}
	\fill[gray!30] (0,0) rectangle (-.15,5.5) ; \draw (0,0) -- (0,5.5);
	
	\foreach \x in {1,...,3} \draw[->] (\x,0) node[below]{$+$} -- +(0,5.5) node[above]{$-$};
	
	\node at (.5,.5) {$\theta$}; 
\end{tikzpicture}
\end{center}
\caption{The lattice of an \textsc{sos} model with domain-wall boundaries and one reflecting end (left) and the dual lattice with the corresponding generalized six-vertex model (right).}
\label{fig:part_fn}
\end{figure}
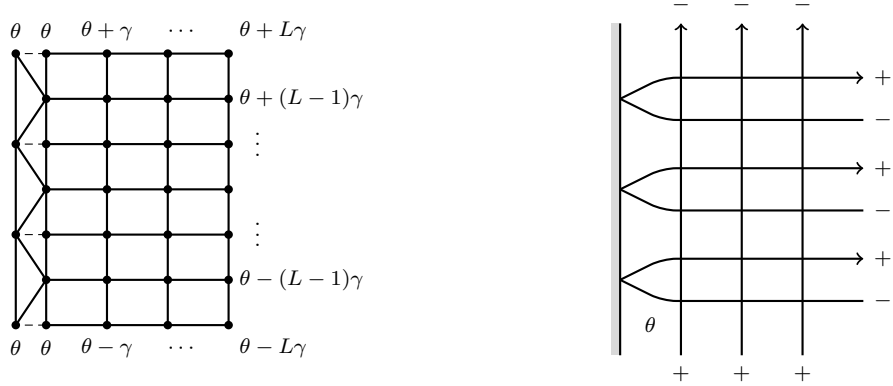

\subsection{Bulk: dynamical Yang-Baxter algebra}

Let $V \coloneqq \mathbb{C} \, e_+ \oplus \mathbb{C} \, e_-$ be a complex two-dimensional vector space with standard basis vectors~$e_\pm$, to which we assign \emph{weights} $\mathrm{wt}(e_\pm) = \pm 1$. This grading of $V$ by weights is conveniently accounted for via the $\mathfrak{sl}_2$ Cartan generator on $V$, which we denote by $h \in\End(V)$ and acts as $h \, e_\pm = \mathrm{wt}(e_\pm) \, e_\pm$. Write $\mathfrak{h} \subseteq \mathfrak{sl}_2$ for the corresponding Cartan subalgebra. 

Let $\lambda,\theta,\gamma\in\mathbb{C}$ be generic complex numbers. We think of the spectral parameter~$\lambda$ and the dynamical parameter~$\theta$ as variables, and of the crossing parameter~$\gamma$ as fixed.

\paragraph{Dynamical \textit{R}-matrix.} The statistical weights of the height configurations in an \textsc{sos} model are encoded in the dynamical \textit{R}-matrix $R(\lambda,\theta) \in \End(V\otimes V)$,
which must be invertible for generic $\lambda$ and~$\theta$ and is required to satisfy the \emph{dynamical Yang-Baxter equation} on $V\otimes V\otimes V$:
	\[ \label{dybe} 
	\begin{aligned} 
	& R_{12}(\lambda_1-\lambda_2,\theta-\gamma \, h_3) \ R_{13}(\lambda_1-\lambda_3,\theta) \ R_{23}(\lambda_2-\lambda_3,\theta-\gamma \,h_1) \\ 
	& \qquad\qquad = \ R_{23}(\lambda_2-\lambda_3,\theta) \ R_{13}(\lambda_1-\lambda_3,\theta -\gamma\, h_2) \ R_{12}(\lambda_1-\lambda_2,\theta) \ . 
	\end{aligned} 
	\]
Here we employ the usual tensor-leg notation, where subscripts indicate on which copies of $V$ in the tensor product the operators act nontrivially; for example,
	\[ \label{R12;3}
	R_{12}(\lambda,\theta-\gamma \,h_3) \, e_1 \otimes e_2 \otimes e_\pm = \big( R(\lambda,\theta \mp \gamma) \, e_1 \otimes e_2 \big) \otimes e_\pm \ ,
	\]
or equivalently, as a matrix acting on the third copy in $V\otimes V\otimes V$,
	\[
	R_{12}(\lambda,\theta-\gamma \,h_3) = \begin{pmatrix} R_{12}(\lambda,\theta-\gamma) & 0 \\ 0 & R_{12}(\lambda,\theta+\gamma) \end{pmatrix}_{\!3} \ .
	\]
Through the dynamical parameter each \textit{R}-matrix in \eqref{dybe} is sensitive to the weights in any copy of $V$ present on the left of the $V\otimes V$ on which that \textit{R}-matrix acts. This can also be seen in terms of a graphical notation in which one depicts the dynamical \textit{R}-matrix and its entries, labelled by indices $\alpha,\beta,\alpha',\beta'\in\{\pm\}$, by including $\theta$ in the usual notation:
\begin{equation}
	R_{12}(\lambda_1-\lambda_2,\theta) \ = \
	\begin{tikzpicture}[baseline={([yshift=-.5*11pt*0.8+8pt]current bounding box.center)}, scale=0.8,font=\scriptsize,thick]
		\draw[->] (0,1) node[left]{$\lambda_1$} -- (2,1);
		\draw[->] (1,0) node[below]{$\lambda_2$} -- (1,2);
		
		\node at (.5,1.5) {$\theta$};
	\end{tikzpicture} 
	\ \ , \qquad
	R_{\alpha,\,\beta}^{\alpha'\!,\,\beta'}(\lambda,\theta) \ = \
	\begin{tikzpicture}[baseline={([yshift=-.5*11pt*0.8]current bounding box.center)}, scale=0.8,font=\scriptsize,thick]
		\draw[->] (0,1) node[left]{$\alpha$} -- (2,1) node[right]{$\alpha'$};
		\draw[->] (1,0) node[below]{$\beta$} -- (1,2) node[above]{$\beta'$};
		\draw[thin,densely dotted,->] (1-.4,1) arc(180:270:.4);
		
		\node at (.5,1.5) {$\theta$};
		\node at (.6,.6) {$\lambda$};
	\end{tikzpicture}
	\ .
\end{equation}
The arrows serve to keep track of the orientation as we will need to rotate this picture. The dynamical Yang-Baxter equation~\eqref{dybe} now acquires the usual graphical form as for the six-vertex model, but with fixed height~$\theta$ in the left-most face of both sides of the equation:
\begin{equation}
	\begin{tikzpicture}[baseline={([yshift=-.5*11pt*0.6]current bounding box.center)},scale=0.6,font=\scriptsize,thick]
	\pgfmathsetmacro{\csc}{1/sin(130)}
		\draw[->] (-130:1.5*\csc) node[below]{$\lambda_1$} -- (50:1.5*\csc);
		\draw[->] (.5,-1.5) node[below]{$\lambda_2$} -- (.5,1.5);
		\draw[->] (-50:1.5*\csc) node[below]{$\lambda_3$} -- (130:1.5*\csc);
		
		\node at (-1,0) {$\theta$};
	\end{tikzpicture}
	\ \ = \ \
	\begin{tikzpicture}[baseline={([yshift=-.5*11pt*0.6]current bounding box.center)},scale=0.6,font=\scriptsize,thick]
	\pgfmathsetmacro{\csc}{1/sin(130)}
		\draw[->] (-130:1.5*\csc) node[below]{$\lambda_1$} -- (50:1.5*\csc);
		\draw[->] (-.5,-1.5) node[below]{$\lambda_2$} -- (-.5,1.5); 
		\draw[->] (-50:1.5*\csc) node[below]{$\lambda_3$} -- (130:1.5*\csc);
		
		\node at (-1.25,0) {$\theta$};
	\end{tikzpicture}
	\ .
\end{equation}

Consider the subspace $\End_\mathfrak{h} (V\otimes V) \subseteq \End (V\otimes V)$ of $\mathfrak{h}$-invariant, i.e.\ weight-preserving, operators on $V \otimes V$. In other words, an element of $\End_\mathfrak{h} (V\otimes V)$ satisfies the ice rule: it commutes with $h_1 + h_2 \in V \otimes V$. The \textsc{sos} model that we are interested in corresponds to the elliptic solution of \eqref{dybe} lying in $\End_\mathfrak{h} (V\otimes V)$,
	\[ \label{rmat}
	R(\lambda,\theta) = \begin{pmatrix} 
	a_+(\lambda,\theta) & 0 & 0 & 0 \\ 
	0 & b_+(\lambda,\theta) & c_+(\lambda,\theta) & 0 \\ 
	0 & c_-(\lambda,\theta) & b_-(\lambda,\theta) & 0 \\ 
	0 & 0 & 0 & a_-(\lambda,\theta) 
	\end{pmatrix} \, .
	\]
The entries of this dynamical $R$-matrix, cf.\ Figure~\ref{fig:SOS_weights}, are given by
	\[ \label{weights} 
	a_\pm(\lambda,\theta) = f(\lambda+\gamma) \ , \quad b_\pm(\lambda,\theta) = f(\lambda) \, \frac{f(\theta\mp\gamma)}{f(\theta)} \ , \quad  c_\pm(\lambda,\theta) = \frac{f(\theta\mp\lambda)}{f(\theta)} \, f(\gamma) \ ,
	\]
where in turn $f(\lambda) \coloneqq -\I \, \E^{-\I \pi \tau/4} \, \vartheta_1(\I\lambda|\tau)/2$ is essentially the odd Jacobi theta function with elliptic nome $\E^{\I \pi \tau}\in\mathbb{C}$ satisfying $\Im(\tau) > 0$. The properties of $f$ required for this work are collected in Appendix~\ref{sec:theta}. The algebraic structure underlying \eqref{rmat}--\eqref{weights} is the elliptic quantum group~$E_{\tau, \gamma}(\mathfrak{sl}_2)$ whose properties are discussed in \cite{Fel_94b, FV_96a, EF_98}. To make contact with the ordinary six-vertex model one takes the trigonometric limit, $\lim_{\tau\to\I\infty} f(\lambda) = \sinh(\lambda)$, and subsequently lets $\theta\to\infty$, which the yields the [$U_q(\widehat{\mathfrak{sl}_2})$-invariant form of the] six-vertex model's \textit{R}-matrix considered in \cite{GL_14,Gal_12}.

We will also need the following \emph{unitarity} property of the \textit{R}-matrix:
	\[ \label{unitarity}
	R_{12}(\lambda,\theta) \ R_{21}(-\lambda,\theta) = f(\gamma+\lambda) \, f(\gamma-\lambda) \, \id \ .  
	\]
Here $R_{21}(\lambda,\theta) = P \, R_{12}(\lambda,\theta) \, P \in \End_\mathfrak{h} (V \otimes V)$ with $P \in\End_\mathfrak{h} (V \otimes V)$ the permutation operator. Notice that the proportionality factor in \eqref{unitarity} does not depend on the dynamical parameter. Finally we record that transposing the first leg of the $R$-matrix yields an operator satisfying the ice rule  $[-h_1+h_2,R_{12}^{t_1}(\lambda,\theta)]=0$ and the \emph{crossing symmetry}
	\[ \label{crossing}
	{- \sigma^y_1} \ {\colon\! R_{12}^{t_1}({-\lambda}-\gamma,\theta+\gamma \, h_2)\colon} \sigma^y_1 \ \frac{f(\theta-\gamma\,h_1)}{f(\theta)} = R_{21}(\lambda,\theta) \ .
	\]
Here $\sigma^y \in \End(V)$ is the second Pauli matrix, and the colons indicates that the $h_1$ is taken to act \emph{after} (on the left of) the transposed $R$-matrix.

\paragraph{Monodromy matrices.} A central role in the quantum inverse-scattering method (\textsc{qism})~\cite{TF_79} is played by the monodromy matrix. We are interested in its dynamical version, introduced in \cite{FV_96b}, which is built out of solutions of \eqref{dybe} lying in $\End_\mathfrak{h}(V \otimes V)$.

Focus on any single row of the lattice, with associated `auxiliary space' $V_0 \cong V$ and spectral parameter $\lambda\in\mathbb{C}$. Every column $1 \leq j\leq L$ comes with a `local quantum space' $V_j \cong V$. The `global quantum space' is defined as the $L$-fold tensor product $W \coloneqq V_1 \otimes V_2 \otimes \cdots \otimes V_L$. Since \eqref{dybe} depends only on differences of spectral parameters we can easily incorporate a (fixed) inhomogeneity parameter~$\mu_j \in\mathbb{C}$ for each $V_j$, $1\leq j\leq L$.

There are two monodromy matrices $T_0(\lambda,\theta), \bar{T}_0(\lambda,\theta) \in \End (V_0 \otimes W)$,
defined as oppositely ordered products
	\[  \label{mono}
	\begin{aligned}
	& T_0(\lambda,\theta) \coloneqq \ordprod_{1 \leq j \leq L} R_{0 j} (\lambda - \mu_j ,\theta - \gamma \sum_{i=j+1}^L h_i) \ = 
	\begin{tikzpicture}[baseline={([yshift=-.5*11pt*.8+8pt]current bounding box.center)},scale=0.8,font=\scriptsize,thick]
		\draw[->] (0,1) node[left]{$\lambda$} -- (2.6,1) (3.4,1) -- (5,1);
		
		\draw[->] (1,0) node[below]{$\mu_L$} -- (1,2);
		\draw[->] (2,0) node[below]{$\mu_{L-1}$} -- (2,2);
		\draw[->] (4,0) node[below]{$\mu_1$} -- (4,2);
		
		\foreach \y in {-1,...,1} \draw (3+.2*\y,1) node{$\cdot\mathstrut$};
		
		\node at (.5,1.5) {$\theta$};
		\node at (1.5,1.5) {$\theta\mp\gamma$};
		\node at (2.5,1.5) {$\cdots$};
	\end{tikzpicture} 
	\ \ , \\
	& \bar{T}_0(\lambda,\theta) \coloneqq \ordprodopp_{L \geq j \geq 1} R_{j0} (\lambda + \mu_j ,\theta - \gamma \sum_{i=j+1}^L h_i) \ = \ 
	\begin{tikzpicture}[baseline={([yshift=-.5*11pt*.8+8pt]current bounding box.center)},scale=0.8,font=\scriptsize,thick]
		\draw[<-] (0,1) -- (2.6,1); \draw (3.4,1) -- (5,1) node [right]{${-\lambda}$};
		
		\draw[->] (1,0) node[below]{$\mu_L$} -- (1,2);
		\draw[->] (2,0) node[below]{$\mu_{L-1}$} -- (2,2);
		\draw[->] (4,0) node[below]{$\mu_1$} -- (4,2);
		
		\foreach \y in {-1,...,1} \draw (3+.2*\y,1) node{$\cdot\mathstrut$};
		
		\node at (.5,.5) {$\theta$};
		\node at (1.5,.5) {$\theta\mp\gamma$};
		\node at (2.5,.5) {$\cdots$};
	\end{tikzpicture}
	\ .
	\end{aligned}
	\]

The operator $H \coloneqq \sum_{j=1}^L h_j$, representing the generator of the Cartan subalgebra $\mathfrak{h} \subseteq \mathfrak{sl}_2$ on $W$, endows $W$ with a grading by weights: $H\, v = \mathrm{wt}(v) \, v$ for homogeneous vectors $v\in W$. Since $R(\lambda,\theta)\in \End_\mathfrak{h} (V \otimes V)$ we have $T_0(\lambda,\theta), \bar{T}_0(\lambda,\theta) \in \End_\mathfrak{h} (V_0 \otimes W)$, where the ice rule now entails commuting with $h_0+H \in V_0 \otimes W$.

Usually only one of the monodromy matrices in \eqref{mono} is considered since by \eqref{unitarity} the two are essentially inverse to each other:
	\[ \label{unitarity_mono} 
	T_0(\lambda,\theta) \ \bar{T}_0(-\lambda,\theta) = \prod_{j=1}^L f(\gamma-\lambda+\mu_j) f(\gamma+\lambda-\mu_j) \, \id_0 \otimes \id \ .
	\]
The two monodromy matrices are also related by crossing symmetry, following from the ice rule for the transposed $R$-matrix and \eqref{crossing},
	\[ \label{crossing_mono}
	(-1)^L \, \sigma^y_0 \ {\colon\! T_{0}^{t_0}({-\lambda}-\gamma,\theta+\gamma \, h_0)\colon} \sigma^y_0 \ \frac{f(\theta-\gamma\,H)}{f(\theta)} = \bar{T}_0(\lambda,\theta) \ .
	\]
For our purposes, however, it will be convenient to include both $T$ and $\bar{T}$.

\paragraph{Dynamical Yang-Baxter algebra.} The monodromy matrices~\eqref{mono} can be viewed as matrices in auxiliary space with entries acting on $W$:
	\[ \label{abcd}
	T_0(\lambda,\theta) = \begin{pmatrix}	
	A(\lambda,\theta) & B(\lambda,\theta) \\
	C(\lambda,\theta) & D(\lambda,\theta) 
	\end{pmatrix}_{\!0} \ , \qquad 
	\bar{T}_0(\lambda,\theta) = \begin{pmatrix}
	\bar{A}(\lambda,\theta) & \bar{B}(\lambda,\theta) \\
	\bar{C}(\lambda,\theta) & \bar{D}(\lambda,\theta) \end{pmatrix}_{\!0} \ .
	\]
Together with the ice rule, the dynamical Yang-Baxter equation~\eqref{dybe} implies that $T$ obeys the quadratic relation
	\[ \label{dyba}
	\begin{aligned}
	& R_{12} (\lambda_1 - \lambda_2, \theta-\gamma \,H ) \ T_1 (\lambda_1,\theta) \ T_2 (\lambda_2,\theta-\gamma\, h_1) \\
	& \qquad\qquad = \ T_2 (\lambda_2,\theta) \ T_1 (\lambda_1,\theta-\gamma \, h_2) \ R_{12} (\lambda_1 - \lambda_2,\theta) \ .
	\end{aligned}
	\]
It follows that the entries of $T$ as in \eqref{abcd} furnish $W$ with a representation of the operator algebra associated with $E_{\tau,\gamma}(\mathfrak{sl}_2)$, known as the \emph{(dynamical) Yang-Baxter algebra}, which we denote by $\mathfrak{A} = A_{\tau,\gamma}(\mathfrak{sl}_2)$. Since $T_0(\lambda,\theta) \in \End_\mathfrak{h} (V_0 \otimes W)$ preserves the grading by weights of $V_0 \otimes W$, and not that of $W$, the generators on the anti-diagonal of \eqref{abcd} are not $\mathfrak{h}$-invariant:
	\[ \label{abcd_ice}
	\begin{aligned}
	& [H,A(\lambda,\theta)] = 0 \ , && [H,B(\lambda,\theta)] = -2 \, B(\lambda,\theta) \ , \\
	& [H,C(\lambda,\theta)] = 2 \, C(\lambda,\theta) \ , \qquad && [H,D(\lambda,\theta)] = 0 \ .
	\end{aligned}
 	\]
Nevertheless the dynamical Yang-Baxter algebra is compatible with the grading of $W$ when $\mathfrak{A}$ itself is viewed as a graded algebra: $B$ and $C$ have weights ${-2}$ and $2$ respectively, while $A$ and $D$ have weight zero.

The generators contained in $\bar{T}$ are analogously seen to have weights $\mathrm{wt}(\bar{B})=-2$, $\mathrm{wt}(\bar{C})=2$ and $\mathrm{wt}(\bar{A})=\mathrm{wt}(\bar{D})=0$. The monodromy matrix $\bar{T}$ obeys
	\[ \label{dyba_bar}
	\begin{aligned}
	& R_{21} (\lambda_1 - \lambda_2, \theta) \ \bar{T}_1 (\lambda_1,\theta-\gamma \ h_2) \ \bar{T}_2 (\lambda_2,\theta) \\
	& \qquad\qquad = \ \bar{T}_2 (\lambda_2,\theta-\gamma\, h_1) \ \bar{T}_1 (\lambda_1,\theta) \ R_{21} (\lambda_1 - \lambda_2,\theta -\gamma \,H ) \ .
	\end{aligned}
	\]
This relation is equivalent to \eqref{dyba} due to the unitarity relations \eqref{unitarity} and \eqref{unitarity_mono}. In particular, the entries of $\bar{T}$ as in \eqref{abcd} subject to the relations encoded in \eqref{dyba_bar} also provide a representation of $\mathfrak{A}$. From \eqref{unitarity_mono} and \eqref{dyba} it follows that the generators $A,B,C,D$ and $\bar{A},\bar{B},\bar{C},\bar{D}$ are connected by \eqref{unitarity_mono} as well as by the relations contained in
	\[ \label{dyba_ttbar}
	\begin{aligned}
	& T_1 (\lambda_1,\theta-\gamma \,h_2) \ R_{12} (\lambda_1+\lambda_2,\theta) \ \bar{T}_2 (\lambda_2,\theta-\gamma \,h_1) \\
	& \qquad\qquad = \ \bar{T}_2 (\lambda_2,\theta) \ R_{12} (\lambda_1+\lambda_2,\theta-\gamma \,H) \ T_1 (\lambda_1,\theta) \ .
	\end{aligned}
	\]

\subsection{Reflecting end: dynamical reflection algebra}\label{sec:reflection}

One can regard the Yang-Baxter algebra, or its dynamical version, as governing the bulk of the lattice. The reflecting boundary is described by an operator $K_0(\lambda,\theta)\in\End(V_0)$
satisfying the \emph{reflection equation}, also known as \emph{boundary Yang-Baxter equation},
	\[ \label{dREQ}
	\begin{aligned}
	& R_{12} (\lambda_1 - \lambda_2,\theta) \ K_1 (\lambda_1,\theta) \ R_{21} (\lambda_1 + \lambda_2,\theta) \ K_2 (\lambda_2,\theta) \\
	& \qquad\qquad = \ K_2 (\lambda_2,\theta) \ R_{12} (\lambda_1 + \lambda_2,\theta) \ K_1 (\lambda_1,\theta) \ R_{21} (\lambda_1 - \lambda_2,\theta) \ .
	\end{aligned}
	\]
Although we consider a dynamical \textit{R}-matrix, notice that there are no shifts in the dynamical parameter: this is just the ordinary reflection equation. We are interested in the solution of \eqref{dREQ} that lies in $\End_\mathfrak{h} (V_0)$, which was obtained in \cite{BPO_96} and reads
	\[ \label{kmat}
	K_0(\lambda,\theta) = \begin{pmatrix}
	k_{+}(\lambda,\theta) & 0 \\
	0 & k_{-}(\lambda,\theta)  \end{pmatrix}_{\!0} \ ,
	\]
with nonzero entries
	\[ \label{reflection}
	k_+(\lambda,\theta) = f(\zeta+\lambda) \, \frac{f(\theta+\zeta-\lambda)}{f(\theta+\zeta+\lambda)} \ , \qquad k_-(\lambda,\theta) = f(\zeta-\lambda) 
	\]
depending on a fixed \emph{boundary parameter} $\zeta\in\mathbb{C}$. These quantities can be depicted as
\begin{equation}
K_0(\lambda,\theta) \ = \
\begin{tikzpicture}[baseline={([yshift=-.5*11pt*.8]current bounding box.center)}, font=\scriptsize,scale=0.8,thick]
\draw[->] (-35:1.2) -- (0,0) -- (35:1.2);

\fill[gray!30] (0,-1) rectangle (-.15,1) ; \draw (0,-1) -- (0,1);

\node at (-62.5:.8) {$\theta$};
\node at (-35:1.5) {${-\lambda}$};
\end{tikzpicture}
\ ,	\qquad	k_\pm(\lambda,\theta) \ = \
\begin{tikzpicture}[baseline={([yshift=-.5*11pt*.8]current bounding box.center)}, font=\scriptsize,scale=0.8,thick]
\draw[->] (-35:1.2) -- (0,0) -- (35:1.2);

\fill[gray!30] (0,-1) rectangle (-.15,1) ; \draw (0,-1) -- (0,1);

\draw[thin,densely dotted,->] (0,-.4) arc(-90:-35:.4);

\node at (-35:1.5) {$\pm$};		
\node at (-62.5:.6) {$\lambda$};
\node at (-62.5:1.1) {$\theta$};
\node at (62.5:.8) {$\theta$};
\node at (35:1.5) {$\pm$};
\end{tikzpicture}
\ .
\end{equation}
The dynamical reflection equation \eqref{dREQ} then becomes
\begin{equation}
	\begin{tikzpicture}[baseline={([yshift=-.5*11pt*.8]current bounding box.center)}, font=\scriptsize,scale=0.8,thick]
		\begin{scope}[shift={(0,-.5)}]
			\draw[->] (-50:1.25) -- (0,0) -- (50:2.75);
			\node at (-70:.8) {$\theta$};
			\node at (-50:1.7) {${-\lambda_2}$};
		\end{scope}
		\begin{scope}[shift={(0,.5)}]
			\draw[->] (-15:2) -- (0,0) -- (15:2);
			\node at (-15:2.5) {${-\lambda_1}$};
		\end{scope}
		\fill[gray!30] (0,-1.75) rectangle (-.15,1.75); \draw (0,-1.75) -- (0,1.75);
	\end{tikzpicture}
	\ = \ \
	\begin{tikzpicture}[baseline={([yshift=-.5*11pt*.8]current bounding box.center)}, font=\scriptsize,scale=0.8,thick]
		\begin{scope}[shift={(0,.5)}]
			\draw[->] (-50:2.75) -- (0,0) -- (50:1.25);
			\node at (-50:3.2) {${-\lambda_2}$};
		\end{scope}
		\begin{scope}[shift={(0,-.5)}]
			\draw[->] (-15:2) -- (0,0) -- (15:2);
			\node at (-52.5:.8) {$\theta$};
			\node at (-15:2.5) {${-\lambda_1}$};
		\end{scope}
		\fill[gray!30] (0,-1.75) rectangle (-.15,1.75); \draw (0,-1.75) -- (0,1.75);
	\end{tikzpicture}
	\ .
\end{equation}

\paragraph{Sklyanin's monodromy matrix.} The \emph{(dynamical) double-row monodromy matrix} $\mathcal{T}_0(\lambda,\theta)\in\End_\mathfrak{h} (V_0\otimes W)$
is defined as the composition
	\[ \label{full_mono}
	\mathcal{T}_0(\lambda,\theta) \coloneqq T_0(\lambda,\theta) \ K_0(\lambda,\theta) \ \bar{T}_0(\lambda,\theta) \ = \ \
	\begin{tikzpicture}[baseline={([yshift=-.5*11pt*.8+8pt]current bounding box.center)}, font=\scriptsize,scale=0.8,thick]
		\draw[rounded corners=5pt]    (0,0) -- (.7,-.35) -- (2.6,-.35) (3.4,-.35) -- (5,-.35) node[right] {${-\lambda}$} ;
		\draw[rounded corners=5pt,->] (0,0) -- (.7,+.35) -- (2.6,+.35) (3.4,+.35) -- (5,+.35);
		
		\foreach \y in {-1,...,1} {
			\draw (3+.2*\y,-.35) node{$\cdot\mathstrut$};
			\draw (3+.2*\y,+.35) node{$\cdot\mathstrut$};
		}
		
		\fill[gray!30] (0,-1.25) rectangle (-.15,1.25) ; \draw (0,-1.25) -- (0,1.25);
		
		\draw[->] (1,-1.25) node[below]{$\mu_L$} -- (1,1.25);
		\draw[->] (2,-1.25) node[below]{$\mu_{L-1}$} -- (2,1.25);
		\draw[->] (4,-1.25) node[below]{$\mu_1$} -- (4,1.25);
		
		\node at (.5,-.75) {$\theta$}; 
		\node at (1.5,-.75) {$\theta\mp\gamma$};
		\node at (2.5,-.75) {$\cdots$};
	\end{tikzpicture}
	\ .
	\]
As in \cite{Skl_88} one can use the relations~\eqref{dyba}--\eqref{dyba_ttbar} for $\mathfrak{A}$ and the reflection equation~\eqref{dREQ} to show that this operator satisfies the following quadratic relation:
	\[ \label{dREA}
	\begin{aligned}
	& R_{12} (\lambda_1 - \lambda_2,\theta-\gamma \, H) \ \mathcal{T}_1 (\lambda_1,\theta) \ R_{21} (\lambda_1 + \lambda_2,\theta-\gamma \, H) \ \mathcal{T}_2 (\lambda_2,\theta)  \\
	& \qquad\qquad = \ \mathcal{T}_2 (\lambda_2,\theta) \ R_{12} (\lambda_1 + \lambda_2,\theta-\gamma \, H) \ \mathcal{T}_1 (\lambda_1,\theta) \ R_{21} (\lambda_1 - \lambda_2,\theta-\gamma \, H) \ .
	\end{aligned}
	\]
Note that, unlike \eqref{dyba} and \eqref{dyba_bar}--\eqref{dyba_ttbar}, the relation \eqref{dREA} does not involve shifts in the dynamical parameters of any monodromy matrix. This permits us to think of $\theta$ as a constant and suppress the dependence on this parameter when writing \eqref{full_mono} as a matrix in auxiliary space with entries in $\End(W)$:
	\[ \label{ABCD}
	\mathcal{T}_0(\lambda,\theta) = \begin{pmatrix}
	\mathcal{A}(\lambda) & \mathcal{B}(\lambda) \\
	\mathcal{C}(\lambda) & \mathcal{D}(\lambda) \end{pmatrix}_{\!0} \ .
	\]

\paragraph{Dynamical reflection algebra.} The entries of \eqref{ABCD} together with $H$, subject to the relations contained in \eqref{dREA} and the ice rule for $\mathcal{T}$, endow $W$ with a representation of the \emph{(dynamical) reflection algebra}, which we denote by $\mathfrak{B}$. For our purposes it is convenient to follow \cite{Skl_88} and replace the generator $\mathcal{D}$ of $\mathfrak{B}$ by the combination
	\[ \label{D_tilde}
	\tilde{\mathcal{D}}(\lambda) \coloneqq \mathcal{D}(\lambda) - \frac{f(\gamma) \, f(\theta-\gamma(H-1)+2\lambda)}{f(2\lambda+\gamma) \, f(\theta-\gamma(H-1))} \, \mathcal{A}(\lambda) \ .
	\]

Just like for $\mathfrak{A}$, the ice rule for the double-row monodromy matrix endows $\mathfrak{B}$ with a grading by weights: $\mathrm{wt}(\mathcal{B})=-2$, $\mathrm{wt}(\mathcal{C})=2$ and $\mathrm{wt}(\mathcal{A})=\mathrm{wt}(\mathcal{D})=\mathrm{wt}(\tilde{\mathcal{D}})=0$. That is, \eqref{abcd_ice} also holds for $\mathcal{A},\mathcal{B},\mathcal{C},\tilde{\mathcal{D}}$ and it follows that
	\[ \label{ABCD_ice}
	\begin{aligned}
	& f(H) \, \mathcal{A}(\lambda) = \mathcal{A}(\lambda) \, f(H) \ , &&  \mspace{1mu} f(H) \, \mathcal{B}(\lambda)  \mspace{1mu} = \mathcal{B}(\lambda) \, f(H-2) \ , \\
	& \, f(H) \, \mathcal{C}(\lambda) = \, \mathcal{C}(\lambda) \, f(H+2) \mspace{1mu}  \ , \qquad && f(H) \, \tilde{\mathcal{D}}(\lambda) = \tilde{\mathcal{D}}(\lambda) \, f(H) \ .
	\end{aligned}
	\]
[Thus, instead of including $H$ as one of the generators of $\mathfrak{B}$, one may equivalently view $\mathfrak{B}$ as a graded algebra, with grading by weights.] Finally, with the help of \eqref{dyba_ttbar} and \eqref{ABCD_ice} one can check that \eqref{crossing_mono} implies that $\mathcal{B}$ enjoys the crossing symmetry
	\[ \label{crossing_B}
	\mathcal{B}(-\lambda-\gamma) =  { -\frac{ f(2\lambda + 2\gamma) \, f(\theta + \zeta + \lambda)}{f(2\lambda) \, f(\theta + \zeta - \lambda - \gamma)} } \, \mathcal{B}(\lambda) \ .
	\]

\subsection{Domain walls and the partition function}
\label{sec:PF}

The final ingredient for the partition function are the domain-wall boundary conditions on the three remaining boundaries, which need to be chosen such that $\mathcal{Z}$ is nontrivial. We consider the case first introduced for vertex models in \cite{Kor_82} and extended to \textsc{sos} models in \cite{KZ_00}. These boundary conditions are included in the \textsc{qism} with the help of the following vectors in $W$.

\paragraph{Pseudovacua.} The highest- and lowest-weight vectors $\ket{0}, \ket{\bar{0}} \in W$ are defined as
	\[ \label{zero}	
	\ket{0} \coloneqq 
	e_+ \otimes \cdots \otimes e_+ \ , \quad  H \ket{0} = +L\ket{0} \ , \qquad \ket{\bar{0}} \coloneqq 
	e_- \otimes \cdots \otimes e_- \ , \quad  H \ket{\bar{0}} = -L\ket{\bar{0}} \ . 
	\]
These are also eigenvectors for the entries on the diagonal of the double-row monodromy matrix~\eqref{ABCD}. We will need the following `vacuum' eigenvalues:
	\[ \label{Lambda_def}
	\mathcal{A}(\lambda) \, \ket{0} = \Lambda_{\mathcal{A}} (\lambda) \, \ket{0} \ , \qquad
	\tilde{\mathcal{D}}(\lambda) \, \ket{0} = \Lambda_{\tilde{\mathcal{D}}} (\lambda) \, \ket{0} \ , \qquad
	\bra{\bar{0}} \, \mathcal{A}(\lambda) = \bar{\Lambda}_{\mathcal{A}} (\lambda) \,  \bra{\bar{0}} \ .
	\]
In Appendix~\ref{sec:vevs} we show that these eigenvalues are given by
	\[ 
	\label{Lambda}
	\begin{aligned}
	\Lambda_{\mathcal{A}} (\lambda) = \ & [\zeta+\lambda] \frac{[\theta+\zeta-\lambda]}{[\theta+\zeta+\lambda]} \prod_{j=1}^L  [\lambda-\mu_j+\gamma \,, \lambda+\mu_j+\gamma] \ , \\
	\Lambda_{\tilde{\mathcal{D}}} (\lambda) = \ & [\zeta-\lambda-\gamma] \frac{[2\lambda \,, \theta+\zeta+\lambda+\gamma \,,  \theta-L\gamma]}{[2\lambda+\gamma \,, \theta+\zeta+\lambda \,, \theta-(L-1)\gamma]} \prod_{j=1}^L [\lambda-\mu_j \,, \lambda+\mu_j] \ , \\
	\bar{\Lambda}_{\mathcal{A}} (\lambda) = \ & [\zeta-\lambda] \, \frac{[\gamma \,,  \theta+(L-1)\gamma-2\lambda]}{[2\lambda+\gamma \,, \theta+(L-1)\gamma]} \prod_{j=1}^L [\lambda-\mu_j+\gamma \,, \lambda+\mu_j+\gamma] \\
	& + [\zeta+\lambda+\gamma] \frac{[2\lambda \,, \theta+\zeta-\lambda-\gamma \,,  \theta+L\gamma]}{[2\lambda+\gamma \,, \theta+\zeta+\lambda \,, \theta+(L-1)\gamma]} \prod_{j=1}^L [\lambda-\mu_j \,, \lambda+\mu_j] \ ,
	\end{aligned}
	\]
where we abbreviate $[\lambda] \coloneqq f(\lambda)$ and $[\lambda_1 \,, \lambda_2 \,, \cdots] \coloneqq [\lambda_1] \, [\lambda_2] \cdots$. Observe that
	\[ \label{crossing_Lambda}
	\Lambda_{\mathcal{A}}(-\lambda-\gamma) = \frac{[ 2\lambda+\gamma \,, \theta + \zeta + \lambda \,, \theta - (L-1)\gamma ]}{ [2\lambda \,, \theta + \zeta - \lambda - \gamma \,, \theta - L \gamma] } \, \Lambda_{\tilde{\mathcal{D}}}(\lambda) \ .
	\]

\paragraph{Partition function.} For models with toroidal boundary conditions one is usually able to express the partition function as a trace of a product of transfer matrices, so that the evaluation of the partition function becomes an eigenvalue problem. This is not the case when other types of boundary conditions are considered. Vertex models with domain-wall boundary conditions~\cite{Kor_82} still admit an algebraic formulation, yet evaluating the partition function is not a priori an eigenvalue problem. In the case of domain-wall boundaries, possibly with a reflecting end, the partition function can be written as the expectation value of a product of operators. For \textsc{sos} models the situation is analogous, and the partition function of the inhomogeneous \textsc{sos} model with domain walls and one reflecting end, as in Figure~\ref{fig:part_fn}, is given by
	\[ \label{PF}
	\mathcal{Z}(\Vector{\lambda}) = \bra{\bar{0}} \ordprod_{1 \leq j \leq L} \! \mathcal{B}(\lambda_j) \ \ket{0} \ .
	\]
It depends on all parameters of the model: $L$ spectral parameters  $\Vector{\lambda}=(\lambda_1, \To , \lambda_L)$, the dynamical parameter~$\theta$, $L$ inhomogeneities~$\mu_j$, the crossing parameter~$\gamma$, the boundary parameter~$\zeta$, and the elliptic nome~$\E^{\I\pi\tau}$. (Throughout this work vectors in $\mathbb{C}^L$ are printed in boldface.) In the graphical notation \eqref{PF} is represented by the diagram on the right in Figure~\ref{fig:part_fn}.

To conclude this section we collect the properties of $\mathcal{Z}$ that are needed to characterize the partition function in our approach. Tracing back the dependence on the spectral and dynamical parameters it is clear that $\mathcal{Z}$, being polynomial in the statistical weight~\eqref{weights} and boundary weights~\eqref{reflection}, is meromorphic. Moreover, up to an overall factor originating  the denominator in \eqref{reflection}, it is entire in the spectral parameters:
\begin{lemma}[Polynomial structure] \label{lem:pol_1}
The partition function~\eqref{PF} can be written as
	\[
	\label{Zbar}
	\mathcal{Z}(\Vector{\lambda}) = \frac{ \bar{\mathcal{Z}}(\Vector{\lambda}) }{ \prod_{j=1}^{L} [\theta + \zeta + \lambda_j]} \ ,
	\]
where the normalized partition function~$\bar{\mathcal{Z}}$, viewed as a function of any single $\lambda_j$, is a theta function of order~$2(L+1)$ and norm~$(L-1)\gamma$. 
\end{lemma}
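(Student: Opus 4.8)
The plan is to isolate the unique source of $\lambda_j$-poles by means of Sklyanin's decomposition of $\mathcal{B}$, and then to read off the order and the norm by counting the theta-function factors carried by a single double row. Expanding the double-row monodromy matrix \eqref{full_mono} in auxiliary space via \eqref{abcd} and \eqref{kmat} gives, for the off-diagonal generator of \eqref{ABCD},
\begin{equation*}
\mathcal{B}(\lambda) = A(\lambda,\theta)\,k_+(\lambda,\theta)\,\bar{B}(\lambda,\theta) + B(\lambda,\theta)\,k_-(\lambda,\theta)\,\bar{D}(\lambda,\theta) \, .
\end{equation*}
By \eqref{mono} and \eqref{weights} the entries of $T$ and $\bar T$ are built from face weights $f(\pm\lambda+\text{const})$ with $\lambda$-independent denominators, hence are entire in $\lambda$; likewise $k_-$ is entire, while by \eqref{reflection} the only $\lambda$-dependent denominator anywhere in $\mathcal{B}(\lambda)$ is the single factor $f(\theta+\zeta+\lambda)$ inside $k_+$, which produces a simple pole since $f$ has simple zeros. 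In \eqref{PF} the variable $\lambda_j$ enters through exactly one factor $\mathcal{B}(\lambda_j)$, so $\mathcal{Z}$ is meromorphic in $\lambda_j$ with at worst a simple pole at the zeros of $f(\theta+\zeta+\lambda_j)$; multiplying by $\prod_{i=1}^L[\theta+\zeta+\lambda_i]$ removes it and yields the entire function $\bar{\mathcal{Z}}$ of \eqref{Zbar}.

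Since $\lambda_j$ is carried by the single combination $[\theta+\zeta+\lambda_j]\,\mathcal{B}(\lambda_j)$, it suffices to analyse that combination, so I would expand $\mathcal{Z}$ as the sum over the admissible height configurations of the $j$-th double row. Each term is a product of the $2L$ bulk face weights \eqref{weights} along that row---each an order-one theta function of $\lambda_j$, being $f$ evaluated at $\pm\lambda_j+\text{const}$---times the boundary weight $[\theta+\zeta+\lambda_j]\,k_\pm(\lambda_j)$. By \eqref{reflection} this boundary factor equals $f(\zeta+\lambda_j)\,f(\theta+\zeta-\lambda_j)$ or $f(\zeta-\lambda_j)\,f(\theta+\zeta+\lambda_j)$, in either case a product of two order-one factors. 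Invoking the quasi-periodicity of $f$ from Appendix~\ref{sec:theta}, every term---and hence their entire sum $\bar{\mathcal{Z}}$---is a theta function of order $2L+2=2(L+1)$.

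It remains to verify that all terms share the same norm $(L-1)\gamma$, so that their sum lies in a single space of theta functions. Computing a term's norm as the (signed) sum of the positions of its zeros, the inhomogeneities cancel in pairs: each column $i$ contributes $+\mu_i$ through the $T$-strand and $-\mu_i$ through the $\bar T$-strand. The $\theta$-dependent contributions of the off-diagonal ($c$-type) face weights telescope along the horizontal line and are cancelled against the $\theta$-dependent zero of the boundary weight, because the reflecting end ties the two strands together so that these contributions close up; this is why the norm is independent of $\theta$ (and of $\zeta$). What is left is a $\gamma$-dependent total, fixed by the vertex-type counts that the domain walls enforce, which should evaluate to $(L-1)\gamma$.

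The step carrying all the weight is this last one: whereas the pole structure and the order follow immediately, showing that \emph{every} configuration carries the same norm---rather than merely the same order $2(L+1)$---requires combining the ice rule with the specific boundary data, and it is precisely here that the reflecting end and the domain walls must be used in earnest. As a consistency check one may feed the crossing relation \eqref{crossing_B} into \eqref{Zbar} to obtain the $\theta$- and $\zeta$-independent identity $\bar{\mathcal{Z}}(-\lambda_j-\gamma) = -\,f(2\lambda_j+2\gamma)\,f(2\lambda_j)^{-1}\,\bar{\mathcal{Z}}(\lambda_j)$, which is compatible with order $2(L+1)$ and a $\theta$-independent norm.
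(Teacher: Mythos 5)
Your pole analysis and order count are correct and match what a direct approach must look like: the decomposition of $\mathcal{B}$ is exactly \eqref{ABCD_via_abcd}, the only $\lambda_j$-dependent denominator is indeed the $f(\theta+\zeta+\lambda_j)$ in $k_+$, and each double-row configuration is a product of $2L$ face weights of order one in $\lambda_j$ times a normalized boundary factor of order two, giving order $2(L+1)$ per term. Note, however, that the paper does not prove this lemma along these lines at all: it defers entirely to \cite{Fil_11}, where a factorizing Drinfel'd twist yields an explicit expression from which the polynomial structure is read off. So you are attempting a genuinely more elementary route — but the step you yourself flag as carrying all the weight is where the proposal has a real gap, and your heuristic for it is not sound as stated. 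First, the claim that the residual $\gamma$-total is ``fixed by the vertex-type counts that the domain walls enforce'' fails: with domain-wall boundaries the numbers of $a$-, $b$- and $c$-type faces are \emph{not} configuration-independent (only certain combinations are), and the $a$-faces each contribute $+\gamma$ to the norm. Second, the $\theta$- and $\gamma$-bookkeeping do not decouple the way your telescoping sketch suggests: by \eqref{weights} a $c_\pm$-face contributes $\mp\theta'$ where $\theta'=\theta+k\gamma$ is the \emph{local} height, with $k$ depending on position and configuration, so the $c$-faces inject configuration-dependent multiples of $\gamma$ that must conspire with the varying $a$-counts. Third, the boundary contribution is itself term-dependent: the normalized factor $[\theta+\zeta+\lambda]\,k_+$ contributes $-\theta$ to the norm while $[\theta+\zeta+\lambda]\,k_-$ contributes $+\theta$, so the cancellation pattern differs between the $k_+A\bar{B}$ and $k_-B\bar{D}$ terms (one can check for $L=1$ that the two terms are saved by \emph{different} $c$-weights appearing in them). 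None of this is impossible — the lemma is true — but ``should evaluate to $(L-1)\gamma$'' is precisely the nontrivial combinatorial identity, and you have not proved it.

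If you want to complete the argument without per-configuration bookkeeping, use the characterization of $\Theta_{N,t}$ in Appendix~\ref{sec:higher_order}: since you have already shown $\bar{\mathcal{Z}}$ is entire in $\lambda_j$, by \eqref{quasiper_orderN} it suffices to establish the two quasi-periodicity relations
\[
\bar{\mathcal{Z}}\big|_{\lambda_j\mapsto\lambda_j+\I\pi} = \bar{\mathcal{Z}} \ , \qquad
\bar{\mathcal{Z}}\big|_{\lambda_j\mapsto\lambda_j+\I\pi\tau} = \E^{-2(L-1)\gamma}\,\big({-\E^{-2\lambda_j}}\E^{-\I\pi\tau}\big)^{2(L+1)}\,\bar{\mathcal{Z}} \ ,
\]
and these follow at the \emph{operator} level: the quasi-periodicity of $f$ implies an exact transformation law for the dynamical $R$-matrix under $\lambda\mapsto\lambda+\I\pi\tau$ (an overall scalar times conjugation by a diagonal, height-dependent gauge factor), which propagates through $T$, $K$, $\bar{T}$ to give $\mathcal{B}(\lambda+\I\pi\tau)$ as an explicit multiple of a gauge-conjugated $\mathcal{B}(\lambda)$; the gauge factors act diagonally on $\ket{0}$ and $\bra{\bar{0}}$ and produce exactly the prefactor above, uniformly over all configurations. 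That replaces your telescoping claim by a single computation. Your final consistency check is fine: \eqref{crossing_B} does give $\bar{\mathcal{Z}}(-\lambda_j-\gamma) = -\,[2\lambda_j+2\gamma]\,[2\lambda_j]^{-1}\,\bar{\mathcal{Z}}(\lambda_j)$, but being consistent with a norm is far weaker than establishing it.
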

\noindent The basics of higher-order theta functions, which are also known as elliptic polynomials, are summarized in Appendix~\ref{sec:higher_order}. For a proof of Lemma~\ref{lem:pol_1} we refer to \cite{Fil_11}, in which a factorizing Drinfel'd twist \cite{MS_96,*ABFPR_00} is used to determine the dependence of the partition function on the spectral parameters. In fact, in \cite{Fil_11} it is shown that $\bar{\mathcal{Z}}$ can be further written as $\prod_{j=1}^{L} [2\lambda_j]$ times a higher-order theta function of order~$2(L-1)$ and norm~$(L-1)\gamma$ in each variable, but \eqref{Zbar} suffices for our purposes.

\begin{lemma}[Symmetric function] \label{lem:symm_1}
The partition function~\eqref{PF} is symmetric in the $L$ spectral parameters contained in $\Vector{\lambda}$.
\end{lemma}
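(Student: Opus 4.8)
The plan is to reduce the claim to the commutativity of the Sklyanin creation operators and to extract that commutativity from the dynamical reflection algebra. Since the symmetric group $S_L$ is generated by the adjacent transpositions, it suffices to show that $\mathcal{Z}$ is invariant under swapping any neighbouring pair $\lambda_k \leftrightarrow \lambda_{k+1}$. In the ordered product \eqref{PF} such a swap merely interchanges the two neighbouring factors $\mathcal{B}(\lambda_k)$ and $\mathcal{B}(\lambda_{k+1})$, so the whole statement follows once I establish the exchange relation $\mathcal{B}(\lambda_1)\,\mathcal{B}(\lambda_2) = \mathcal{B}(\lambda_2)\,\mathcal{B}(\lambda_1)$ for the generators of $\mathfrak{B}$.

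To obtain this relation I would project the operator identity \eqref{dREA} onto the component in which both auxiliary spaces $V_1,V_2$ enter in the state $e_-\otimes e_-$ and leave in $e_+\otimes e_+$, i.e.\ the matrix element $\bra{e_+\otimes e_+}(\cdots)\ket{e_-\otimes e_-}$ in $V_1\otimes V_2$. The ice rule for $\mathcal{T}$ fixes the total weight, so this component lowers the weight on $W$ by four; since $\mathcal{A}$ and $\mathcal{D}$ preserve the weight while $\mathcal{C}$ raises it (cf.\ \eqref{ABCD_ice}), the only surviving contributions are those in which both copies of $\mathcal{T}$ contribute their $\mathcal{B}$-entry. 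A short bookkeeping of the intermediate auxiliary states then shows that the left-hand side of \eqref{dREA} collapses to a single term proportional to $\mathcal{B}(\lambda_1)\,\mathcal{B}(\lambda_2)$ and the right-hand side to a single term proportional to $\mathcal{B}(\lambda_2)\,\mathcal{B}(\lambda_1)$, the intermediate branches that would produce $\mathcal{A}\mathcal{B}$, $\mathcal{B}\mathcal{D}$, etc.\ all being killed by the final projection since the $R$-matrices preserve the auxiliary weight.

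It then remains to compare the two scalar prefactors. On both sides the surviving path involves exactly one weight of type $a$, equal to $f(\lambda_1-\lambda_2+\gamma)$ and hence independent of the dynamical parameter by \eqref{weights}, together with one weight $b_+$ carrying the argument $\lambda_1+\lambda_2$ and a dynamical shift $\theta-\gamma H$. The decisive point is that in \eqref{dREA} the double-row monodromy matrices themselves carry no dynamical shift, so that the only $\theta$-dependence enters through these $R$-matrices; and in both the left- and right-hand sides the relevant $R$-matrix is met after precisely one $\mathcal{B}$ has acted, so that $H$ is evaluated on a vector of the same weight in both cases. The two prefactors therefore coincide, and cancelling the common nonzero factor $f(\lambda_1-\lambda_2+\gamma)$ yields $\mathcal{B}(\lambda_1)\,\mathcal{B}(\lambda_2) = \mathcal{B}(\lambda_2)\,\mathcal{B}(\lambda_1)$, completing the argument.

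I expect the only genuine obstacle to be this last piece of dynamical bookkeeping: one must verify that the shifts $\theta-\gamma H$ concealed inside the two $R$-matrices of \eqref{dREA} are evaluated on vectors of equal weight on each side, which requires tracking carefully how $H$ is displaced past $\mathcal{B}$ by means of the grading \eqref{ABCD_ice}. Conceptually this is precisely the feature, already emphasised below \eqref{dREA}, that $\mathfrak{B}$ carries no shift on its generators, so that the $\mathcal{B}(\lambda)$ form a commutative family exactly as in the non-dynamical six-vertex case treated in \cite{GL_14}; the symmetry of $\mathcal{Z}$ is then immediate from \eqref{PF}.
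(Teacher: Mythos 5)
Your proposal is correct and follows essentially the same route as the paper: the paper's proof likewise extracts the exchange relation $\mathcal{B}(\lambda_1)\,\mathcal{B}(\lambda_2)=\mathcal{B}(\lambda_2)\,\mathcal{B}(\lambda_1)$ (its Eq.~\eqref{rel_BB}) from the dynamical reflection algebra \eqref{dREA} together with the grading \eqref{ABCD_ice}, and concludes the symmetry of \eqref{PF} immediately. You merely spell out the weight-projection bookkeeping that the paper leaves implicit, and your key observation --- that the $b_+$-entry with dynamical argument $\theta-\gamma H$ is met after exactly one $\mathcal{B}$ on each side, so the hidden $H$-dependence cancels --- is exactly the content of ``obtained using \eqref{ABCD_ice}'' in the paper's one-line proof.
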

\begin{proof}
The reflection algebra \eqref{dREA} includes the following relation, obtained 
using \eqref{ABCD_ice}:
	\[ \label{rel_BB}
	\mathcal{B}(\lambda_1) \, \mathcal{B}(\lambda_2) = \mathcal{B}(\lambda_2) \, \mathcal{B}(\lambda_1) \ .
	\]
The symmetry of \eqref{PF} is thus obvious.
\end{proof}
\noindent Although we will not need it let us further mention that the partition function is also symmetric in the inhomogeneity parameters~$\mu_j$ \cite{Fil_11}.

\begin{remark}[Crossing symmetry] \label{rmk:crossing}
Let us finally record that the partition function~\eqref{PF} inherits crossing symmetry \eqref{crossing_B} of $\mathcal{B}$: for any $1\leq j\leq L$ it satisfies
	\[
	\label{PF_crossing}
	\mathcal{Z}(\Vector{\lambda})\Big|_{\lambda_j\mapsto-\lambda_j-\gamma} = { -\frac{ [2\lambda_j + 2\gamma \,, \theta + \zeta + \lambda_j ]}{ [2\lambda_j \,, \theta + \zeta - \lambda_j - \gamma]} } \, \mathcal{Z}(\Vector{\lambda}) \ .
	\]
This property is not a necessary ingredient for our approach, yet it will be useful to get a better understanding of certain issues that we will encounter along the way.
\end{remark}


\section{Functional equations}
\label{sec:results}

In this section we use the dynamical reflection algebra to obtain a functional equation that determines the partition function~\eqref{PF}. Our main results are as follows.

\begin{theorem}[Functional equation] \label{thm:fun}
The partition function \eqref{PF} of the \textsc{sos} model with one reflecting end and domain-wall boundaries obeys the linear functional equation
	\[ \label{FE}
	\sum_{\nu=0}^L M_\nu(\lambda_0;\Vector{\lambda}) \, \mathcal{Z}(\lambda_0, \To , \widehat{\lambda_\nu} , \To , \lambda_L ) = 0 \ ,
	\]
where the caret indicates that the $\nu$\!'th spectral parameter is omitted, and the coefficients~$M_\nu$ feature the vacuum eigenvalues from \eqref{Lambda}:
	\begin{align}
	& M_0(\lambda_0;\Vector{\lambda}) \coloneqq \bar{\Lambda}_{\mathcal{A}} (\lambda_0) - \Lambda_{\mathcal{A}} (\lambda_0) \prod_{j=1}^{L} \frac{[\lambda_j-\lambda_0+\gamma \,, \lambda_j+\lambda_0]}{[\lambda_j-\lambda_0 \,, \lambda_j+\lambda_0+\gamma]} \ , \nonumber \\
	& M_i(\lambda_0;\Vector{\lambda}) \coloneqq \frac{[2\lambda_i \,, \gamma \,, \theta+(L-1)\gamma+\lambda_i-\lambda_0]}{[2\lambda_i+\gamma \,, \lambda_i-\lambda_0 \,, \theta+(L-1)\gamma]} \, \Lambda_{\mathcal{A}} (\lambda_i) \prod_{\substack{j=1 \\j \neq i}}^{L} \frac{[\lambda_j-\lambda_i+\gamma \,, \lambda_j+\lambda_i]}{[\lambda_j-\lambda_i \,, \lambda_j+\lambda_i+\gamma]} \label{M0Mi} \\
	& \hphantom{M_i} + \frac{[\gamma \,, \theta+(L-2)\gamma-\lambda_i-\lambda_0 \,, \theta-(L-1)\gamma]}{[\lambda_i+\lambda_0+\gamma \,, \theta+(L-1)\gamma \,, \theta-L\gamma]} \, \Lambda_{\tilde{\mathcal{D}}} (\lambda_i) \prod_{\substack{j=1 \\j \neq i}}^{L} \frac{[\lambda_i-\lambda_j+\gamma \,, \lambda_i+\lambda_j+2\gamma]}{[\lambda_i-\lambda_j \,, \lambda_i+\lambda_j+\gamma]} \ . \nonumber
	\end{align}
\end{theorem}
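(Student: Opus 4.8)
The plan is to apply the \emph{algebraic-functional method}: introduce an auxiliary spectral parameter~$\lambda_0$ and evaluate the mixed correlator $\bra{\bar{0}} \, \mathcal{A}(\lambda_0) \, \ordprod_{1 \leq j \leq L} \mathcal{B}(\lambda_j) \, \ket{0}$ in two different ways, whose equality is precisely \eqref{FE}. Since $\mathcal{B}$ lowers the weight by two, this correlator carries the same nonzero weight structure as $\mathcal{Z}$ itself, so both evaluations are well defined.

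First I would extract from the dynamical reflection algebra \eqref{dREA} the three commutation relations that express $\mathcal{A}(\lambda_0)\,\mathcal{B}(\lambda_i)$ as a linear combination of the \emph{wanted} term $\mathcal{B}(\lambda_i)\,\mathcal{A}(\lambda_0)$ and the two \emph{unwanted} terms $\mathcal{B}(\lambda_0)\,\mathcal{A}(\lambda_i)$ and $\mathcal{B}(\lambda_0)\,\tilde{\mathcal{D}}(\lambda_i)$. This is where the combination $\tilde{\mathcal{D}}$ from \eqref{D_tilde} earns its keep: it is tailored so that the quadratic relation closes on $\{\mathcal{A},\tilde{\mathcal{D}},\mathcal{B}\}$ with coefficients that are ratios of theta functions, and so that the dynamical shifts in~$\theta$ are absorbed consistently. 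Reading off the relevant entries of the two $R$-matrices in \eqref{dREA}, together with the ice rule \eqref{ABCD_ice} that dictates how each shift in~$H$ acts after passing a $\mathcal{B}$, should produce these coefficients explicitly.

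The second step is the left/right evaluation. Acting with $\mathcal{A}(\lambda_0)$ on $\bra{\bar{0}}$ immediately gives $\bar{\Lambda}_{\mathcal{A}}(\lambda_0)\,\mathcal{Z}(\Vector{\lambda})$ by \eqref{Lambda_def}. To compute the correlator the other way, I would commute $\mathcal{A}(\lambda_0)$ rightwards through the entire string of $\mathcal{B}$'s until it reaches $\ket{0}$, where the diagonal generators are replaced by their vacuum eigenvalues \eqref{Lambda}. The fully-wanted path yields $\Lambda_{\mathcal{A}}(\lambda_0)$ times a product of wanted coefficients times $\mathcal{Z}(\Vector{\lambda})$; collecting this with the left-action contribution produces exactly $M_0$. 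Each unwanted path instead freezes one $\lambda_i$ onto a diagonal generator and promotes $\lambda_0$ into a fresh $\mathcal{B}(\lambda_0)$, so that upon contraction one recovers $\mathcal{Z}(\lambda_0, \To, \widehat{\lambda_i}, \To, \lambda_L)$ weighted by $\Lambda_{\mathcal{A}}(\lambda_i)$ or $\Lambda_{\tilde{\mathcal{D}}}(\lambda_i)$; these assemble into $M_i$ as in \eqref{M0Mi}. Equating the two evaluations and rearranging then gives \eqref{FE}.

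The main obstacle is the bookkeeping of the unwanted terms: a naive commutation generates a proliferation of multiple-exchange contributions, and showing that they resum into the single clean coefficient $M_i$ --- in particular into the stated products $\prod_{j \neq i}$ over the remaining parameters --- is the delicate part. I would tame this using the commutativity of the $\mathcal{B}$'s from \eqref{rel_BB}: reorder the product so that $\mathcal{B}(\lambda_i)$ sits directly to the right of $\mathcal{A}(\lambda_0)$, perform a single exchange, and let the liberated $\mathcal{A}(\lambda_i)$ or $\tilde{\mathcal{D}}(\lambda_i)$ sweep through the remaining factors picking up only the diagonal coefficients. The internal consistency of this reduction --- and the relative normalisation of the $\mathcal{A}$- and $\tilde{\mathcal{D}}$-contributions --- can be cross-checked against the crossing symmetry \eqref{crossing_B} and the eigenvalue identity \eqref{crossing_Lambda}, which encode the elliptic-dynamical features most likely to hide sign or shift errors.
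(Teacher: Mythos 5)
Your proposal is correct and follows essentially the same route as the paper: insert $\mathcal{A}(\lambda_0)$ via the vacuum eigenvalue relation, commute it through the string of $\mathcal{B}$'s using the reflection-algebra exchange relations \eqref{rel_AB}--\eqref{rel_DB} (with $\tilde{\mathcal{D}}$ chosen precisely so the relations close on $\{\mathcal{A},\tilde{\mathcal{D}},\mathcal{B}\}$), and resum the unwanted terms into single-exchange contributions by exploiting the commutativity \eqref{rel_BB} --- which is exactly how the paper arrives at \eqref{rel_ABB} before applying $\pi_{L+1}$. Your single-exchange bookkeeping argument is the same standard device the paper uses implicitly, and your two evaluations ($\bar{\Lambda}_{\mathcal{A}}(\lambda_0)\,\mathcal{Z}$ on the left versus the $\Lambda_{\mathcal{A}}$-- and $\Lambda_{\tilde{\mathcal{D}}}$--weighted terms on the right) assemble into $M_0$ and $M_i$ just as in the paper.
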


In the terminology of \cite{Kuc_64}, \eqref{FE} is a cyclic linear
functional equation. Observe that it features $L+1$ variables whilst the partition function~\eqref{PF} depends on only $L$ spectral parameters. In analogy with the conventions from field theory in this work we use the Latin alphabet for indices ranging through $\{1,2,\To,L\}$ and Greek for indices in $\{0,1,2,\To,L\}$.

Theorem~\ref{thm:fun} asserts the \emph{existence} of solutions to the functional equation \eqref{FE}. In view of the linearity of \eqref{FE} the \emph{uniqueness} of its solutions has to be investigated as well. This issue is addressed by the following result.

\begin{theorem}[Uniqueness] \label{thm:unique}
Up to an overall $\Vector{\lambda}$-independent normalization factor, \eqref{FE} has a unique solution within the class of analytic functions on $\mathbb{C}^L$.
\end{theorem}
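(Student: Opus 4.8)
The equation~\eqref{FE} is linear and homogeneous in $\mathcal{Z}$, so its analytic solutions form a complex vector space, which Theorem~\ref{thm:fun} shows to be nonzero since the partition function~\eqref{PF} belongs to it. The claim is therefore that this space is one-dimensional, and the plan is to prove it by induction on the system size~$L$, using the recursive relation between the functional equation at size~$L$ and the one at size~$L-1$ that is to be established in Section~\ref{sec:reduction}. The induction is arranged so that the single normalization constant available at size~$L-1$ is the only freedom that survives at size~$L$.

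For the base case $L=1$ the sum in~\eqref{FE} has just two terms, $M_0(\lambda_0;\lambda_1)\,\mathcal{Z}(\lambda_1)+M_1(\lambda_0;\lambda_1)\,\mathcal{Z}(\lambda_0)=0$, so that
\[
	\mathcal{Z}(\lambda_0) = -\frac{M_0(\lambda_0;\lambda_1)}{M_1(\lambda_0;\lambda_1)}\,\mathcal{Z}(\lambda_1) \ .
\]
Since $M_1\not\equiv 0$ for generic parameters, the dependence of any solution on its single argument is fixed as soon as its value at one point is prescribed; hence the solution space is one-dimensional for $L=1$.

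For the inductive step I would take an arbitrary analytic solution $F$ at size~$L$ and apply the reduction of Section~\ref{sec:reduction}: specializing the auxiliary variable~$\lambda_0$ and freezing one spectral parameter at a distinguished value~$v$ (dictated by the boundary and inhomogeneity data) degenerates the coefficients so that~\eqref{FE} collapses, up to a common nonvanishing prefactor, onto the size-$(L-1)$ functional equation for the restricted function~$\widehat{F}$. By the induction hypothesis $\widehat{F}$ is proportional to the size-$(L-1)$ partition function, and comparing with the analogous recursion satisfied by the physical $\mathcal{Z}$ yields a constant~$c$ with $\widehat{F}=c\,\widehat{\mathcal{Z}}$. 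Setting $G\coloneqq F-c\,\mathcal{Z}$, this means $G$ is an analytic solution of~\eqref{FE} that vanishes on the reduction locus. To close the argument I would isolate the $\nu=0$ term, writing $M_0(\lambda_0;\Vector{\lambda})\,G(\Vector{\lambda})=-\sum_{i=1}^{L}M_i(\lambda_0;\Vector{\lambda})\,G(\lambda_0,\lambda_1,\To,\widehat{\lambda_i},\To,\lambda_L)$, and specialize $\lambda_0=v$: every term on the right then carries the argument~$v$ and so lands on the locus where $G$ already vanishes, forcing $M_0(v;\Vector{\lambda})\,G(\Vector{\lambda})=0$. As $M_0(v;\,\cdot\,)\not\equiv0$ for generic parameters, $G$ vanishes away from the zero set of $M_0(v;\,\cdot\,)$ and hence, being analytic, identically; thus $F=c\,\mathcal{Z}$.

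The main obstacle is the reduction itself together with the nondegeneracy conditions it must satisfy. One has to pin down the distinguished value~$v$, verify that the surviving coefficients and the common prefactor do not vanish identically for generic $\gamma,\zeta,\theta$ and $\mu_j$, and confirm that the restriction of~$F$ genuinely solves the size-$(L-1)$ equation with the \emph{same} coefficients. The closing step additionally requires that the locus reached by the right-hand arguments at $\lambda_0=v$ coincides with the vanishing locus of~$G$, which is where the symmetry recorded in Lemma~\ref{lem:symm_1} and the crossing relation of Remark~\ref{rmk:crossing} enter as consistency checks; and one must ensure that the poles and zeros of the $M_\nu$ and of the overall factor $\prod_j[\theta+\zeta+\lambda_j]$ from Lemma~\ref{lem:pol_1} do not permit spurious analytic solutions to evade this reasoning.
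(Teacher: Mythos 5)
Your proposal is correct and follows essentially the same route as the paper: induction on $L$, with the base case handled as in Lemma~\ref{lem:L=1} and the inductive step resting on the reduction of Proposition~\ref{prop:red}. Your closing argument with $G=F-c\,\mathcal{Z}$ is the paper's reconstruction formula \eqref{Z_via_Ztilde} in disguise: at $\lambda_0=v=\mu_L-\gamma$ the coefficient $M_0(v;\Vector{\lambda})=\bar{\Lambda}_{\mathcal{A}}(\mu_L-\gamma)$ is not merely generically nonvanishing but a $\Vector{\lambda}$-independent nonzero constant, so \eqref{FE} expresses $G(\Vector{\lambda})$ directly in terms of its restrictions with one argument at $v$ — note that the symmetry of Lemma~\ref{lem:sol_properties}(ii) is a genuinely needed ingredient here (to transfer the vanishing from the frozen slot to the $\lambda_0$ slot), not just a consistency check; it is available since $G$ is an analytic solution by linearity. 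The one place your sketch misrepresents the mechanism is the claim that \eqref{FE} ``collapses, up to a common nonvanishing prefactor, onto the size-$(L-1)$ functional equation for the restricted function'': in the paper the restriction first factorizes through the special zeroes (Lemma~\ref{lem:zeroes}, Corollary~\ref{cor}) as $\tilde{Z}$ times a product of theta functions, and $\tilde{Z}$ a priori satisfies the reduced equation \eqref{FE_red}, whose coefficients \eqref{M0Mi_red} are \emph{not} proportional to \eqref{M0Mi} at length $L-1$ once $L\geq 3$; relating \eqref{FE_red} to the genuine length-$(L-1)$ system is precisely the determinant claim the paper verifies only for $L\leq 7$ (and Lemma~\ref{lem:zeroes} itself only for $L\leq 9$). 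So the ``main obstacle'' you flag is real, and it is exactly the gap the paper itself leaves open; granting Proposition~\ref{prop:red}, as the paper's own proof does, your argument is complete.
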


In fact, this uniqueness is established by exploiting the presence of the extra variable~$\lambda_0$ to find a recursion relation between the functional equations for length $L$ and $L-1$. This recursion moreover allows us to solve \eqref{FE} to obtain a multiple-integral formula for the partition function:

\begin{theorem}[Multiple-integral formula] \label{thm:sol}
The partition function~\eqref{PF} can be written as
	\[
	\label{sol}
	\begin{aligned}
	\mathcal{Z}(\Vector{\lambda}) = \ \Omega_L \, [\gamma]^L \, f'(0)^L \oint\limits_{\Gamma_{\Vector{\lambda}}^{\ \times L}} \! \frac{\D^L \Vector{z}}{(2\pi\I)^L} \, & \frac{ \prod_{\substack{i,j=1 \\ i\neq j}}^L [z_i-z_j] }{ \prod_{i,j=1}^L [z_i-\lambda_j]} \, \prod_{1\leq i<j\leq L} [z_i - \mu_j ,\, z_i + \mu_j + \gamma] \\
	& \qquad \times \prod_{l=1}^L  \frac{[2z_l]}{[2z_l+\gamma]} \, m_l(z_1,\To,z_l) \ ,
	\end{aligned}
	\]
where $\Omega_L$ is a constant (i.e.\ $\Vector{\lambda}$-independent) normalization factor, the integration contours $\Gamma_{\Vector{\lambda}}$ consist of small counter-clockwise loops enclosing only the poles of \eqref{sol} at $\lambda_j$, $1\leq j \leq L$, and the functions $m_l$ read
	\[
	\label{mi}
	\begin{aligned}
	m_l(z_1,\To,z_l) \coloneqq \ & [z_l+\zeta] \frac{ [\theta+\zeta-z_l \,, \theta+ l \gamma+z_l-\mu_l] }{ [\theta+\zeta+z_l \,, \theta+(l-1)\gamma] } [z_l+\mu_l+\gamma] \\
	& \qquad \times \prod_{j=1}^{l-1} [z_l-\mu_j+\gamma \,, z_l+\mu_j+\gamma] \frac{ [z_j-z_l+\gamma \,, z_j+z_l] }{ [z_j-z_l \,, z_j+z_l+\gamma] } \\ 
	& - [z_l-\zeta+\gamma] \frac{ [\theta+\zeta+z_l+\gamma \,, \theta+(l-1)\gamma-z_l-\mu_l] }{ [\theta+\zeta+z_l \,, \theta+(l-1)\gamma] } [z_l-\mu_l] \\
	& \qquad \times \prod_{j=1}^{l-1} [z_l-\mu_j \,, z_l+\mu_j] \frac{ [z_l-z_j+\gamma \,, z_l+z_j+2\gamma] }{ [z_l-z_j \,, z_l+z_j+\gamma] } \ .
	\end{aligned}
	\]
\end{theorem}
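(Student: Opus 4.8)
The plan is to take Theorem~\ref{thm:unique} as the logical backbone: since the analytic solution space of \eqref{FE} is one-dimensional, it suffices to show that the right-hand side of \eqref{sol}, which I will call $\mathcal{I}_L(\Vector{\lambda})$, is a nonzero analytic solution of \eqref{FE} with the pole structure demanded by Lemma~\ref{lem:pol_1}, and then to fix the single constant $\Omega_L$ so that $\mathcal{I}_L=\mathcal{Z}$. The natural way to verify that $\mathcal{I}_L$ obeys \eqref{FE} is by recursion in the system size $L$, mirroring the reduction of Section~\ref{sec:reduction} that proves Theorem~\ref{thm:unique}. Before that I would dispose of analyticity: the contours $\Gamma_{\Vector{\lambda}}$ are small loops encircling only the kernel poles at $z_i=\lambda_j$ and deform together with the $\lambda_j$, while the Jacobi-theta numerator $\prod_{i\neq j}[z_i-z_j]$ cancels the diagonal poles that arise when two integration variables collide. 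Hence $\mathcal{I}_L$ is meromorphic in $\Vector{\lambda}$ with poles only at $\theta+\zeta+\lambda_j=0$—coming from the factors $[\theta+\zeta-z_l]/[\theta+\zeta+z_l]$ in \eqref{mi} once one takes residues at $z_l=\lambda_j$—so that $\mathcal{I}_L\,\prod_j[\theta+\zeta+\lambda_j]$ is entire, consistent with \eqref{Zbar}. Its symmetry in $\Vector{\lambda}$ (needed to invoke Theorem~\ref{thm:unique}) follows from evaluating the integral as a sum of residues over assignments $z_i\mapsto\lambda_{\sigma(i)}$: the antisymmetry of $\prod_{i\neq j}[z_i-z_j]$ against the symmetric remainder produces a symmetric function.

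The heart of the argument is the recursion, and here I would exploit the triangular (ordered) structure built into \eqref{sol}. The key observation is that the ordered products $\prod_{i<j}[z_i-\mu_j,\,z_i+\mu_j+\gamma]$ do not contain the outermost variable $z_L$, so that all $z_L$-dependence sits in the Cauchy-type kernel, in the factor $[2z_L]/[2z_L+\gamma]$, and in the block $m_L(z_1,\To,z_L)$ of \eqref{mi}. Comparing \eqref{mi} with \eqref{M0Mi}, one sees that $m_l$ has been engineered precisely so that its two terms reproduce, respectively, the $\Lambda_{\mathcal{A}}$- and $\Lambda_{\tilde{\mathcal{D}}}$-contributions that build the coefficients $M_\nu$ of \eqref{FE}: the prefactors $[z_l+\zeta]$ and $[z_l-\zeta+\gamma]$ match the $[\zeta+\lambda]$ and $[\zeta-\lambda-\gamma]$ in \eqref{Lambda}, the products $\prod_{j=1}^{l-1}$ match the $j\neq i$ products in \eqref{M0Mi} truncated to $j<l$, and the shifted arguments $\theta+(l-1)\gamma$, $\theta+l\gamma$ record exactly the dynamical shift generated by peeling off one row of the lattice. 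I would therefore specialize the auxiliary variable $\lambda_0$ (or take the residue) as in the reduction of Section~\ref{sec:reduction} to collapse the $z_L$-integration, express $\mathcal{I}_L$ in terms of $\mathcal{I}_{L-1}$ times an explicit elliptic prefactor, and check that this reproduces the reduction of \eqref{FE} from length $L$ to length $L-1$. The base case $L=1$ is verified directly: the single contour integral collapses to the residue at $z_1=\lambda_1$, reproducing $\bra{\bar{0}}\mathcal{B}(\lambda_1)\ket{0}$ computed from \eqref{reflection}, and simultaneously fixes $\Omega_1$; the recursion then propagates the normalization to general $\Omega_L$.

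I expect the main obstacle to be the residue bookkeeping in the recursive step. The delicate parts are keeping track of the dynamical shifts $\theta\mapsto\theta\pm\gamma$ hidden inside the vacuum eigenvalues \eqref{Lambda} against the explicit $\theta+(l-1)\gamma$ and $\theta+l\gamma$ arguments in \eqref{mi}, and verifying that the crossing symmetry of Remark~\ref{rmk:crossing} is respected at each level—this is what guarantees that the two pieces of $m_l$, built from $[z_l+\zeta]$ and $[z_l-\zeta+\gamma]$, assemble into a single object with the correct behaviour under $z_l\mapsto-z_l-\gamma$, in line with \eqref{PF_crossing}. Once these elliptic identities are matched and the factor $[2z_l]/[2z_l+\gamma]$ (pulled out of $m_l$ in \eqref{sol}) is correctly accounted for in both terms, the recursion closes. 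Theorem~\ref{thm:unique} then identifies $\mathcal{I}_L$ with the partition function $\mathcal{Z}$ up to the single $\Vector{\lambda}$-independent constant $\Omega_L$, completing the proof of \eqref{sol}.
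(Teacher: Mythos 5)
Your backbone --- uniqueness (Theorem~\ref{thm:unique}) plus a recursion in $L$ anchored at the $L=1$ base case --- is exactly the paper's, and your preliminary observations (analyticity, pole only at $\theta+\zeta+\lambda_j=0$, symmetry via residues) are sound, if not strictly needed. The genuine problem is the mechanism of your inductive step. First, a small confusion: the integral $\mathcal{I}_L$ does not depend on $\lambda_0$ at all, so ``specializing the auxiliary variable $\lambda_0$'' cannot collapse the $z_L$-integration; the only available operation there is the residue theorem at $z_L=\lambda_i$. Second, and more seriously, when you do take that residue the block $m_L(z_1,\To,z_{L-1},\lambda_i)$ remains \emph{under} the residual $(L-1)$-fold integral, entangled with the surviving integration variables through the cross-ratios $[z_j-\lambda_i+\gamma \,, z_j+\lambda_i]/[z_j-\lambda_i \,, z_j+\lambda_i+\gamma]$ and their analogues in the second term of \eqref{mi}. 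Consequently $\mathcal{I}_L$ is \emph{not} ``$\mathcal{I}_{L-1}$ times an explicit elliptic prefactor'': the collapse yields an $(L-1)$-fold integral of the $\mathcal{I}_{L-1}$-integrand with nontrivial $z$-dependent insertions, and your recursion does not close as stated. The paper sidesteps precisely this by splitting the argument in two. Proposition~\ref{prop:sol} iterates the reduction \eqref{Z_via_Ztilde} at the level of a \emph{symmetrized sum}, where the entangling factors are explicit coefficients $M_l(\mu_l-\gamma;\cdot)$ multiplying values of the lower-length solution, with no integral in the way; only then does Lemma~\ref{lem:sum_via_oint} convert the full sum \eqref{sol_symm_sum} into the multiple integral in one stroke, via an induction valid for an \emph{arbitrary} density $g$, in which the insertion problem is absorbed by applying the induction hypothesis to $g(\Vector{z})\big|_{z_L=\lambda_i}\prod_{j=1}^{L-1}f(\lambda_i-z_j)$ rather than to $g$ itself. (In particular your worry about matching crossing symmetry and dynamical shifts at each level evaporates: Lemma~\ref{lem:sum_via_oint} needs only $f(0)=0\neq f'(0)$ and regularity of $g$ near $z_i=\lambda_j$.)

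There is also a logical wrinkle in how you invoke uniqueness. To apply Theorem~\ref{thm:unique} you need $\mathcal{I}_L$ to be a solution of \eqref{FE}, but your recursion --- even once repaired --- shows at best that $\mathcal{I}_L$ satisfies the same reduction identity as $\mathcal{Z}$. The converse of Proposition~\ref{prop:red}, namely that inserting a length-$(L-1)$ solution into \eqref{Z_via_Ztilde} produces a length-$L$ solution, is established nowhere and is not obviously true on its own. The repair is to run the induction on the statement ``$\mathcal{I}_L\propto\mathcal{Z}$'' directly: Proposition~\ref{prop:red} applied to $\mathcal{Z}$ yields some analytic $\tilde{Z}$ solving the length-$(L-1)$ equation, which by Theorem~\ref{thm:unique} at length $L-1$ and the induction hypothesis is proportional to $\mathcal{I}_{L-1}$; substituting back into \eqref{Z_via_Ztilde} and comparing with the (repaired) residue expansion of $\mathcal{I}_L$ then gives $\mathcal{Z}\propto\mathcal{I}_L$ without ever needing $\mathcal{I}_L$ to be certified as an FE solution beforehand. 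After these two repairs your argument coincides with the paper's proof rather than constituting an alternative route.
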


Direct inspection further suggests that the normalization constant in \eqref{sol} is given by
	\[
	\label{OmegaL}
	\Omega_L = \frac{[\theta+(L+1)\gamma]}{[\theta+ L \gamma]} \prod_{i=1}^L \frac{[\zeta-\mu_i \,, \theta+\zeta+\mu_i]}{[\zeta+\mu_i \,, \theta+\zeta-\mu_i]} \prod_{K = 0}^{\lfloor L/2 \rfloor} \frac{[\theta-(L-2 K)\gamma]}{[\theta+(L-2 K +1)\gamma]} \ ,
	\]
where $\lfloor L/2 \rfloor$ is the integer part of $L/2$.

Let us pause for a moment to look at the analytic structure of \eqref{sol}--\eqref{mi}. Although \eqref{sol} has apparent poles at $z_l = -\gamma/2$, these simple poles are removable as the residues of $m_l(z_1,\To,z_l)/[2z_l+\gamma]$ at these points vanish. The simple poles in \eqref{mi} at $z_l=z_j$ cancel against the corresponding zeroes in \eqref{sol}, while the poles at $z_l=-z_j-\gamma$ are related to those at $z_l=z_j$ by crossing symmetry, see Remark~\ref{rmk:crossing} above and Lemma~\ref{lem:M0Mi_symm} below. Finally the simple poles of \eqref{mi} at $z_l = -\theta-\zeta$ matches Lemma~\ref{lem:pol_1}.

\paragraph{Trigonometric degeneration.} Before turning to the proofs let us briefly discuss the limit $\tau\to\I\infty$ in which $f(\lambda) \propto \vartheta_1(\I\lambda|\tau)$ degenerates into a trigonometric function: $f(\lambda) \to \sinh(\lambda)$, cf.\ Appendix~\ref{sec:theta}. The trigonometric \textsc{sos} model with a diagonal reflecting end arises naturally by applying a vertex-face transformation to the six-vertex model with nondiagonal reflection~\cite{CLSW_03, *YZ_07}. The partition function of this model was studied in \cite{FK_10}, where it was expressed as a determinant. The set-up of Section~\ref{sec:setup} and all our results carry through upon reinterpreting $[\lambda] = \sinh(\lambda)$. In Lemma~\ref{lem:pol_1} the notion of elliptic polynomials has to be replaced by that of trigonometric (or Fourier) polynomials, which may equivalently be described as polynomials in $x=\E^{2\lambda}$ up to an overall factor of $\E^{-\lambda}$. The degree of the limiting trigonometric polynomial is at most equal to the order of an elliptic polynomial, though it can be less as the example $[2\lambda]$ illustrates.

When we further take the non-dynamical limit $\theta\to\infty$ our functional equation~\eqref{FE} and its solution~\eqref{sol}--\eqref{mi}, as well as the normalization~\eqref{OmegaL}, boil down to the results of \cite{GL_14}. Indeed, with our conventions all factors involving~$\theta$ come in ratios, and the expressions of \cite{GL_14} are heuristically recovered from those presented here by simply dropping all such ratios. [To match \eqref{mi} see Remark~\ref{rmk:alternative_2} at the end of Section~\ref{sec:sol} below.]

\subsection{Algebraic-functional method}
\label{sec:AF}

The algebraic-functional method \cite{Gal_08, Gal_10, Gal_11, Gal_12, Gal_13a, Gal_14, *Gal_13b, GL_14} exploits the (quantum-)algebraic structure to obtain functional equations. Here we apply this approach to derive the functional equation \eqref{FE} for the partition function~\eqref{PF} using the relations of the dynamical reflection algebra~$\mathfrak{B}$.

To keep track of the number of generators $\mathcal{A},\mathcal{B},\mathcal{C},\tilde{\mathcal{D}}$ that constitute any element of the dynamical reflection algebra we consider a grading $\mathfrak{B} = \bigoplus_{n\geq 0} \mathfrak{B}^{(n)}$ in addition to the grading by weights described in Section~\ref{sec:reflection}. This time we simply declare the generators on the right-hand side of \eqref{ABCD} to have degree one. Thus, the degree-$n$ part $\mathfrak{B}^{(n)} \subseteq \mathfrak{B}$ is the linear span of products of precisely $n$ generators $\mathcal{A},\mathcal{B},\mathcal{C},\tilde{\mathcal{D}}$. Note that this grading by number is compatible with the defining relations in \eqref{dREA} since those only involve terms of degree two.

\begin{proof}[Proof of Theorem~\ref{thm:fun}.]
The idea for obtaining our functional equation for the partition function \eqref{PF} is to use \eqref{Lambda_def} to insert an $\mathcal{A}$ on one side of the product of $\mathcal{B}$'s in \eqref{PF} and use the reflection-algebra relations to move it to the other side.

To save space let us indicate the arguments of the generators of $\mathfrak{B}$ as subscripts, e.g.\ $\mathcal{A}_0 \coloneqq \mathcal{A}(\lambda_0)$. We start by finding the appropriate algebraic relations in the dynamical reflection algebra. For this we need \eqref{rel_BB} together with the relations in $\mathfrak{B}^{(2)} \subseteq \mathfrak{B}$ obtained from
\eqref{dREA} using \eqref{ABCD_ice}:
	\[ \label{rel_AB}
	\begin{aligned}
	\mathcal{A}_0 \, \mathcal{B}_1 = \ & \frac{[\lambda_1-\lambda_0+\gamma \,, \lambda_1+\lambda_0]}{[\lambda_1-\lambda_0 \,, \lambda_1+\lambda_0+\gamma]} \, \mathcal{B}_1 \, \mathcal{A}_0 \\ 
	& - \frac{[\gamma \,, 2\lambda_1 \,, \theta-\gamma(H+1)+\lambda_1-\lambda_0]}{[\lambda_1-\lambda_0 \,, 2\lambda_1+\gamma \,, \theta-\gamma(H+1)]} \, \mathcal{B}_0 \, \mathcal{A}_1 \\
	& - \frac{[\gamma \,, \theta-\gamma(H+2)-\lambda_1-\lambda_0]}{[\lambda_1+\lambda_0+\gamma \,, \theta-\gamma(H+2)]} \, \mathcal{B}_0 \, \tilde{\mathcal{D}}_1 \ , 
	\end{aligned}
	\]
	\[ \label{rel_DB}
	\begin{aligned}
	\tilde{\mathcal{D}}_0 \, \mathcal{B}_1 = \ & \frac{[\lambda_0-\lambda_1+\gamma \,, \lambda_1+\lambda_0+2\gamma \,, \theta-\gamma H \,, \theta-\gamma(H+1)]}{[\lambda_0-\lambda_1 \,, \lambda_1+\lambda_0+\gamma \,, \theta-\gamma(H-1) \,, \theta-\gamma(H+2)]} \, \mathcal{B}_1 \, \tilde{\mathcal{D}}_0 \\
	& - \frac{[\gamma \,, 2\lambda_0+2\gamma \,, \theta-\gamma H \,, \theta-\gamma(H+1)+\lambda_0-\lambda_1]}{[\lambda_0-\lambda_1 \,, 2\lambda_0+\gamma \,, \theta-\gamma(H-1) \,, \theta-\gamma(H+2)]} \, \mathcal{B}_0 \, \tilde{\mathcal{D}}_1 \\
	& + \frac{[2\lambda_0+2\gamma \,, 2\lambda_1 \,, \gamma \,, \theta-\gamma H \,, \theta-\gamma H+\lambda_0+\lambda_1]}{[2\lambda_0+\gamma \,, 2\lambda_1+\gamma \,, \lambda_1+\lambda_0+\gamma \,, \theta-\gamma(H-1) \,, \theta-\gamma(H+1)]} \, \mathcal{B}_0 \, \mathcal{A}_1 \ ,
	\end{aligned} 
	\]
where $\tilde{\mathcal{D}}$ was defined in \eqref{D_tilde} such that no term proportional to $\mathcal{B}_1 \, \mathcal{A}_0$ appears on the right-hand side of \eqref{rel_DB}. By repeated application \eqref{dREA} gives rise to relations in $\mathfrak{B}^{(n)}$. In particular, \eqref{rel_BB} allows us to suppress the harpoon in \eqref{PF}, and from \eqref{rel_AB}--\eqref{rel_DB} it follows that in degree $n=L+1$ we have the following relation:
\begin{align}\label{rel_ABB}
	& \mathcal{A}_0 \, \prod_{j=1}^L \mathcal{B}_j = \prod_{j=1}^L \frac{[\lambda_j-\lambda_0+\gamma \,, \lambda_j+\lambda_0]}{[\lambda_j-\lambda_0 \,, \lambda_j+\lambda_0+\gamma]} \, \prod_{j=1}^L \mathcal{B}_j \ \mathcal{A}_0 \nonumber \\ 
	& \qquad\, - \sum_{i=1}^L \frac{[\gamma \,, 2\lambda_i \,, \theta-\gamma(H+1)+\lambda_i-\lambda_0]}{[\lambda_i-\lambda_0 \,, 2\lambda_i+\gamma \,, \theta-\gamma(H+1)]} \prod_{\substack{j=1\\j\neq i}}^L  \frac{[\lambda_j-\lambda_i+\gamma \,, \lambda_j+\lambda_i]}{[\lambda_j-\lambda_i \,, \lambda_j+\lambda_i+\gamma]} \, \prod_{\substack{\nu =0 \\ \nu\neq i}}^L \mathcal{B}_\nu \ \mathcal{A}_i \\
	& - \frac{[\theta-\gamma(H+2L-1)]}{[\theta-\gamma(H+2L)]} \sum_{i=1}^L \frac{[\gamma \,, \theta-\gamma(H+2)-\lambda_i-\lambda_0]}{[\lambda_i+\lambda_0+\gamma \,, \theta-\gamma(H+1)]} \prod_{\substack{j=1 \\ j\neq i}}^L \frac{[\lambda_i-\lambda_j+\gamma \,, \lambda_i+\lambda_j+2\gamma]}{[\lambda_i-\lambda_j \,, \lambda_i+\lambda_j+\gamma]} \, \prod_{\substack{\nu=0 \\ \nu\neq i}}^L \mathcal{B}_\nu \ \tilde{\mathcal{D}}_i \ . \nonumber
\end{align}

Now we multiply this relation from the left by $\bra{\bar{0}}$ and from the right by $\ket{0}$. This may be described more formally as follows. Consider the linear map $\mathrm{\pi} \colon \mathfrak{B} \longrightarrow \mathbb{C}[\lambda_0^{\pm1},\To,\lambda_L^{\pm1}]$ assigning multivariate meromorphic functions to elements of the reflection algebra defined by
	\[ \label{pi}
	\mathrm{\pi} (X) \coloneqq \bra{\bar{0}} X \ket{0} \ , \qquad\qquad X \in \mathfrak{B} \ .  
	\]
Write $\mathrm{\pi}_n \ \coloneqq \left.\mathrm{\pi}\right|_{\mathfrak{B}^{(n)}} \colon \mathfrak{B}^{(n)} \longrightarrow \mathbb{C}[\lambda_0^{\pm1},\To,\lambda_L^{\pm1}]$ for the restriction of $\mathrm{\pi}$ to the degree-$n$ part of $\mathfrak{B}$. Apply $\pi_{L+1}$ to \eqref{rel_ABB} and use \eqref{zero}--\eqref{Lambda_def} to express the result in terms of $\pi_{L}$. This yields the desired functional equation with coefficients \eqref{M0Mi}.
\end{proof}

A few comments are in order. Firstly, rather than inserting an $\mathcal{A}(\lambda_0)$ in \eqref{PF}, one could equally well use $\tilde{\mathcal{D}}(\lambda_0)$. In \cite{Gal_14, *Gal_13b} the corresponding functional equations are referred to as being of `type \textsc{a}' or `type \textsc{d}'. From the structural similarity between \eqref{rel_AB} and \eqref{rel_DB} it is clear that the equation of type \textsc{d} is also of the form~\eqref{FE} but with different coefficients. In the following section we will see that, like in \cite{GL_14}, the functional equation \eqref{FE}--\eqref{M0Mi} already suffices to characterize the partition function, and we do not need the explicit form of the functional equation of type~\textsc{d}.

Secondly, in view of the commutativity~\eqref{rel_BB} of the $\mathcal{B}$'s, the above proof also shows that the functional equation defined in \eqref{FE} is invariant under the interchange of variables $\lambda_i\leftrightarrow\lambda_j$ for $i,j\in\{1,\To,L\}$. Similarly, by \eqref{crossing_B} our equation is also invariant under crossing $\lambda_j\mapsto{-\lambda_j}-\gamma$ for all $j\in\{1,\To,L\}$. This agrees with the following observations. 

\begin{lemma}\label{lem:M0Mi_symm}
The coefficients \eqref{M0Mi}, viewed as functions $M_\nu \colon \mathbb{C}^{L+1} \to \mathbb{C}$, enjoy the following symmetry properties:
\begin{itemize}
	\item $M_0(\lambda_0;\Vector{\lambda})$ is symmetric in all $\lambda_j$ and invariant under $\lambda_j\mapsto{-\lambda_j}-\gamma$, for $1\leq j\leq L$;
	\item $M_i(\lambda_0;\Vector{\lambda})$ is symmetric in the $\lambda_j$ and invariant under $\lambda_j\mapsto{-\lambda_j}-\gamma$ for $j \in \{1,\To,L\}\setminus\{i\}$;
	\item $\left.M_i(\lambda_0;\Vector{\lambda})\right|_{\lambda_i\leftrightarrow\lambda_j} = M_j(\lambda_0;\Vector{\lambda})$, while
		\[
		\left.M_i(\lambda_0;\Vector{\lambda})\right|_{\lambda_i\mapsto{-\lambda_i}-\gamma} =  { -\frac{ [2\lambda_i + 2\gamma \,, \theta + \zeta + \lambda_i ]}{ [2\lambda_i \,, \theta + \zeta - \lambda_i - \gamma]} } \, M_i(\lambda_0;\Vector{\lambda}) \ .
		\]
\end{itemize}
\end{lemma}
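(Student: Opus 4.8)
The plan is to establish all three items by direct computation, resting on two structural facts about the building blocks in \eqref{M0Mi}: the products over the spectator indices are manifestly symmetric, and $f=[\,\cdot\,]$ is odd (cf.\ Appendix~\ref{sec:theta}), so that each ratio of theta functions has a controlled behaviour under the reflection $\lambda\mapsto-\lambda-\gamma$. I would separate the argument into the cheap symmetries (spectator interchange and spectator reflection, together with the whole of $M_0$), the interchange $M_i|_{\lambda_i\leftrightarrow\lambda_j}=M_j$, and the reflection of the \emph{active} variable, which is where the real work lies.

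For the spectator symmetries, note that in $M_0$ the only $\Vector{\lambda}$-dependence sits in the product $\prod_{j=1}^L [\lambda_j-\lambda_0+\gamma \,, \lambda_j+\lambda_0]/[\lambda_j-\lambda_0 \,, \lambda_j+\lambda_0+\gamma]$, which is visibly symmetric in $\lambda_1,\To,\lambda_L$, and likewise each of the two products in $M_i$ runs over $j\neq i$ and is symmetric in the spectators, while the prefactors and $\Lambda_{\mathcal{A}}(\lambda_i),\Lambda_{\tilde{\mathcal{D}}}(\lambda_i)$ do not involve them. For the reflection $\lambda_j\mapsto-\lambda_j-\gamma$ of a spectator, one checks factor by factor that the map permutes the four theta functions in each ratio up to signs that cancel by oddness: in the $M_0$-type ratio the numerator arguments $\lambda_j-\lambda_0+\gamma$ and $\lambda_j+\lambda_0$ are swapped up to an overall sign, so their product is preserved, and similarly for the denominator; in the second product of $M_i$ the pairs $(\lambda_i-\lambda_j+\gamma,\lambda_i+\lambda_j+2\gamma)$ and $(\lambda_i-\lambda_j,\lambda_i+\lambda_j+\gamma)$ are each stable as a set. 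This settles the first two bullet points, $M_0$ included.

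The interchange $M_i|_{\lambda_i\leftrightarrow\lambda_j}=M_j$ is then essentially a relabelling. Writing $M_i=M_i^{(1)}+M_i^{(2)}$ for the two lines defining $M_i$ in \eqref{M0Mi}, I would split off the $k=j$ factor from each product, $\prod_{k\neq i}=(\text{factor }k=j)\cdot\prod_{k\neq i,j}$, apply the swap, and observe that the prefactor times $\Lambda_{\mathcal{A}}(\lambda_i)$ (resp.\ $\Lambda_{\tilde{\mathcal{D}}}(\lambda_i)$) becomes the same object with $i$ replaced by $j$, that the isolated $k=j$ factor turns into precisely the $k=i$ factor of $M_j^{(1)}$ (resp.\ $M_j^{(2)}$), and that the residual product over $k\neq i,j$ is already symmetric. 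Reassembling gives $M_i^{(1)}|_{\text{swap}}=M_j^{(1)}$ and $M_i^{(2)}|_{\text{swap}}=M_j^{(2)}$, hence $M_i|_{\text{swap}}=M_j$; since $\lambda_0$ is untouched, the shifts in $\lambda_i\mp\lambda_0$ cause no difficulty.

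The genuinely computational part, and the main obstacle, is the reflection of the active variable, $M_i|_{\lambda_i\mapsto-\lambda_i-\gamma}$. The key input is the crossing relation \eqref{crossing_Lambda} for the vacuum eigenvalues together with its inverse, obtained by substituting $\lambda\mapsto-\lambda-\gamma$ in \eqref{crossing_Lambda} and using oddness to clear signs:
	\[
	\Lambda_{\tilde{\mathcal{D}}}(-\lambda-\gamma)=\frac{[2\lambda+2\gamma \,, \theta+\zeta+\lambda \,, \theta-L\gamma]}{[2\lambda+\gamma \,, \theta+\zeta-\lambda-\gamma \,, \theta-(L-1)\gamma]}\,\Lambda_{\mathcal{A}}(\lambda)\ .
	\]
Using the factorwise reflection bookkeeping already employed for spectators, one checks that the two spectator products of $M_i$ are \emph{interchanged} under $\lambda_i\mapsto-\lambda_i-\gamma$. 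Feeding in \eqref{crossing_Lambda} and its inverse, $\Lambda_{\mathcal{A}}(\lambda_i)$ in $M_i^{(1)}$ becomes proportional to $\Lambda_{\tilde{\mathcal{D}}}(\lambda_i)$ and $\Lambda_{\tilde{\mathcal{D}}}(\lambda_i)$ in $M_i^{(2)}$ proportional to $\Lambda_{\mathcal{A}}(\lambda_i)$, so that after also transforming the rational prefactors the reflection sends $M_i^{(1)}$ to $-\frac{[2\lambda_i+2\gamma \,, \theta+\zeta+\lambda_i]}{[2\lambda_i \,, \theta+\zeta-\lambda_i-\gamma]}\,M_i^{(2)}$ and $M_i^{(2)}$ to $-\frac{[2\lambda_i+2\gamma \,, \theta+\zeta+\lambda_i]}{[2\lambda_i \,, \theta+\zeta-\lambda_i-\gamma]}\,M_i^{(1)}$; summing yields the stated factor times $M_i$. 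Concretely this reduces to verifying two identities between products of theta functions, the crux being the cancellation of a common $[2\lambda_i+\gamma]$ in the first and of $[\theta-(L-1)\gamma]$, $[\theta-L\gamma]$ in the second once the crossing relations are inserted. These cancellations I would carry out explicitly, which completes the proof.
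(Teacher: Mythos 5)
Your proposal is correct and follows the same route as the paper, which disposes of the lemma by direct inspection of \eqref{M0Mi} together with the crossing relation \eqref{crossing_Lambda}; you have simply written out the factor-by-factor bookkeeping (oddness of $[\,\cdot\,]$ swapping theta-function arguments, the interchange of the two spectator products under $\lambda_i\mapsto-\lambda_i-\gamma$, and the cancellations of $[2\lambda_i+\gamma]$, $[\theta-(L-1)\gamma]$ and $[\theta-L\gamma]$) that the paper leaves implicit. All of these computations, including your inverted form of \eqref{crossing_Lambda}, check out.
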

\begin{proof}
The symmetry is clear from \eqref{M0Mi}; for crossing also use \eqref{crossing_Lambda}.
\end{proof}

Finally, when we instead exchange $\lambda_0\leftrightarrow\lambda_j$ for $j\in\{1,\To,L\}$ in \eqref{FE} we get another functional equation for $\mathcal{Z}$, which is of the same form \eqref{FE} but with different coefficients. For later use we record that, applying Lemma~\ref{lem:M0Mi_symm}, the coefficient of $\mathcal{Z}(\lambda_0,\To,\widehat{\lambda_\nu},\To,\lambda_L)$ in the equation obtained from \eqref{FE} by switching $\lambda_0\leftrightarrow\lambda_\rho$ for $0\leq \rho\leq L$ can be written in terms of the original coefficients \eqref{M0Mi} as
	\[
	\label{Mrhonu}
	M_{\rho,\nu}(\lambda_0,\lambda_1,\To,\lambda_L) = 
	\begin{cases}
		M_\nu(\lambda_\rho;\lambda_0,\To,\widehat{\lambda_\rho},\To,\lambda_L) & \qquad 0 \leq \rho < \nu \leq L \ , \\
		M_0(\lambda_\rho;\lambda_0,\To,\widehat{\lambda_\rho},\To,\lambda_L) & \qquad 0 \leq \rho = \nu \leq L \ , \\
		M_{\nu+1}(\lambda_\rho;\lambda_0,\To,\widehat{\lambda_\rho},\To,\lambda_L) & \qquad 0 \leq \nu < \rho \leq L \ .
	\end{cases}	
	\]
Thus, in fact, we obtain $L+1$ different functional equations, all satisfied by the partition function~\eqref{PF}. As the partition function certainly is nonzero the matrix with entries \eqref{Mrhonu} must have vanishing determinant. This can indeed be verified by direct inspection for given $L$.

\subsection{Reduction and uniqueness}
\label{sec:reduction}

In this and the next section we solve our functional equation. Our approach elaborates on the analysis developed in \cite{Gal_12, Gal_13a, GL_14}. As a byproduct we will be able prove Theorem~\ref{thm:unique}.

Let us reserve the symbol `$\mathcal{Z}$' for the partition function~\eqref{PF} and study the linear functional equation
	\[ \label{FE_v2}
	\sum_{\nu=0}^L M_\nu(\lambda_0;\Vector{\lambda}) \, Z(\lambda_0, \To , \widehat{\lambda_\nu} , \To , \lambda_L ) = 0
	\]
with coefficients~\eqref{M0Mi}. In view of Lemmas \ref{lem:pol_1} and~\ref{lem:symm_1} we are looking for solutions~$Z$ to this equation in the space of symmetric higher-order theta functions (up to an overall factor) on $\mathbb{C}^L$. We begin by collecting some useful properties of the functional equation \eqref{FE_v2}.

First, the properties from Lemmas \ref{lem:pol_1} and~\ref{lem:symm_1} (and Remark~\ref{rmk:crossing}) are in fact automatic when we look for sufficiently nice solutions:

\begin{lemma}[Properties of analytic solutions] \label{lem:sol_properties}
Any analytic solution $Z$ of \eqref{FE_v2} has the following properties:
\begin{enumerate}
	\item[i)] The normalized solution
		\[
		\label{Zbar_2}
		\bar{Z}(\Vector{\lambda}) \coloneqq Z(\Vector{\lambda}) \prod_{j=1}^{L} [\theta + \zeta + \lambda_j]
		\]
	is a theta function of order~$2(L+1)$ and norm~$(L-1)\gamma$ with respect to each $\lambda_j$. 
	\item[ii)] $Z$ is symmetric in the $L$ spectral parameters;
	\item[iii)] $Z$ satisfies the crossing symmetry
		\[
		\label{Z_crossing}
		Z(\Vector{\lambda})\Big|_{\lambda_j\mapsto-\lambda_j-\gamma} = -\frac{ [2\lambda_j + 2\gamma \,, \theta + \zeta + \lambda_j ]}{ [2\lambda_j \,, \theta + \zeta - \lambda_j - \gamma]} \, Z(\Vector{\lambda}) \ .
		\]
\end{enumerate}
\end{lemma}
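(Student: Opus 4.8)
The plan is to treat the three assertions uniformly: each is obtained by applying a symmetry operation to the functional equation \eqref{FE_v2}, subtracting a suitable multiple of the original equation so that the terms not involving the transformed variable drop out, and then isolating the desired identity from the simple pole that the coefficient $M_0$ develops at $\lambda_0=\lambda_1$. Throughout I would fix $\lambda_2,\To,\lambda_L$ generic and regard $\lambda_1$ as the active variable; by the symmetry established in (ii) it then suffices to argue for $\lambda_1$ alone. For generic parameters $M_0(\lambda_0;\Vector\lambda)$ has a simple pole at $\lambda_0=\lambda_1$ with nonzero residue, coming from the $j=1$ factor of the product in \eqref{M0Mi}, and this is the lever used in every case.

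First I would prove the symmetry (ii). Evaluate \eqref{FE_v2} at the point with $\lambda_1$ and $\lambda_2$ interchanged and subtract the original equation. By the coefficient relations of Lemma~\ref{lem:M0Mi_symm} (namely $M_0$ is symmetric, $M_1|_{\lambda_1\leftrightarrow\lambda_2}=M_2$, and each $M_i$ with $i\geq 3$ is symmetric in $\lambda_1,\lambda_2$), the $\nu=1$ and $\nu=2$ contributions cancel, leaving $M_0(\lambda_0;\Vector\lambda)\,A(\Vector\lambda)+\sum_{i\geq 3}M_i(\lambda_0;\Vector\lambda)\,A(\lambda_0,\To,\widehat{\lambda_i},\To,\lambda_L)=0$, where $A$ denotes the part of $Z$ antisymmetric under $\lambda_1\leftrightarrow\lambda_2$. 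Every $M_i$ with $i\geq 3$ and every $Z$ in the sum is regular at $\lambda_0=\lambda_1$, while $A(\Vector\lambda)$ does not depend on $\lambda_0$; taking the residue at $\lambda_0=\lambda_1$ therefore forces $A\equiv 0$. Relabelling gives full symmetry.

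Crossing (iii) and the quasi-periodicity in (i) follow the same template with a single active variable. For (iii) I would apply $\lambda_1\mapsto-\lambda_1-\gamma$ to \eqref{FE_v2} and subtract $-\tfrac{[2\lambda_1+2\gamma,\,\theta+\zeta+\lambda_1]}{[2\lambda_1,\,\theta+\zeta-\lambda_1-\gamma]}$ times the original; since by Lemma~\ref{lem:M0Mi_symm} the coefficients $M_0$ and $M_i$ ($i\geq 2$) are invariant under this crossing while $M_1$ scales by exactly this factor, the $\nu=1$ term (whose $Z$ is independent of $\lambda_1$) cancels, and—using that the surviving $M_i$ stay regular at $\lambda_0=\lambda_1$ and invoking (ii) to interpret the crossing in the reduced terms—the residue at $\lambda_0=\lambda_1$ yields the crossing relation. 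For the quasi-periodicity I would instead shift $\lambda_1$ by the two lattice periods $\pi\I$ and $\pi\I\tau$. Using the quasi-periods of $f$ recorded in Appendix~\ref{sec:theta}, one reads off from \eqref{M0Mi} that $M_0$ and the $M_i$ with $i\geq 2$ are invariant under each such shift, whereas $M_1$ acquires a definite multiplier (for instance $M_1\mapsto-M_1$ under $\lambda_1\mapsto\lambda_1+\pi\I$). The same cancel-and-residue step transfers this multiplier to $Z$, and after restoring the normalization $\prod_j[\theta+\zeta+\lambda_j]$ the two multipliers are precisely those of a theta function of order $2(L+1)$ and norm $(L-1)\gamma$ in $\lambda_1$.

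The main obstacle is the remaining content of (i), the entirety of $\bar{Z}$. Through $\Lambda_{\mathcal{A}}$ and $\Lambda_{\tilde{\mathcal D}}$ the coefficients carry a single simple pole in $\lambda_1$ at $\theta+\zeta+\lambda_1=0$, which is exactly the factor removed in the definition of $\bar{Z}$; one must argue that the functional equation admits no further poles in $\lambda_1$, i.e.\ that the relevant class is the meromorphic one whose normalization $\bar{Z}$ is holomorphic. Granting this regularity—which is essentially the defining property of the class in Theorem~\ref{thm:unique}—the quasi-periodicity computed above promotes $\bar{Z}$ to a genuine higher-order theta function, and all three parts follow. The only genuinely computational point is the verification of the $\pi\I\tau$-multiplier of $M_1$, whose linear exponent must reproduce the norm $(L-1)\gamma$; this is a routine theta-function bookkeeping that I would carry out directly from \eqref{M0Mi} and the transformation rules of $f$.
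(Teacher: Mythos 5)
Your proposal is correct, but it reaches part~(i) by a genuinely different mechanism than the paper, so a comparison is worthwhile. For parts (ii) and (iii) the difference is mostly packaging: the paper (Appendices \ref{sec:symm}--\ref{sec:crossing}) takes the residue of \eqref{FE_v2} itself at $\lambda_0\to\lambda_i$ and at $\lambda_0\to{-\lambda_i}-\gamma$, using that \emph{both} $M_0$ and $M_i$ have simple poles there with matched residues; this yields invariance under the cycles $\sigma_i$, and cycles of length $2$ and $L$ generate $S_L$. You instead transform the equation (swap or cross a pair of variables), subtract the appropriate multiple so that the $\nu=1,2$ (resp.\ $\nu=1$) terms cancel via Lemma~\ref{lem:M0Mi_symm}, and then use only the pole of $M_0$ at $\lambda_0=\lambda_1$; this gives transposition invariance directly and both routes are sound. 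For part~(i), however, the paper's proof (Appendix~\ref{sec:pol}) clears all denominators, checks that the resulting coefficients $\bar{M}_\nu$ are themselves higher-order theta functions in each variable (order $4L+6$, norm $(L+2)\gamma-\theta$ for $\bar{M}_0$ in $\lambda_0$ against order $2L+4$, norm $3\gamma-\theta$ for the $\bar{M}_i$, and analogously in each $\lambda_i$), and extracts the order $2(L+1)$ and norm $(L-1)\gamma$ of $\bar{Z}$ by subtracting orders and norms across the equation. Your route instead transfers the two quasi-period multipliers of $M_1$ under $\lambda_1\mapsto\lambda_1+\I\pi$ and $\lambda_1\mapsto\lambda_1+\I\pi\tau$ directly onto $Z$ by the same cancel-and-residue step, relying on the ellipticity of $M_0$ and the $M_i$ ($i\geq2$) in $\lambda_1$; I verified the bookkeeping, and indeed $M_1\mapsto -M_1$ under $\I\pi$ while the $\I\pi\tau$-multiplier, after restoring the factor $[\theta+\zeta+\lambda_1]$ of \eqref{Zbar_2}, reproduces exactly the multipliers \eqref{quasiper_orderN} for order $2(L+1)$ and norm $(L-1)\gamma$, so your method works and never needs the cleared-denominator coefficients. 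What each buys: yours handles one active variable at a time and reuses a single residue lever throughout all three parts; the paper's treats $\lambda_0$ and the $\lambda_i$ on the same footing and folds the entirety discussion into the same computation. Finally, the ``main obstacle'' you flag is not actually one: under the lemma's hypothesis $Z$ is analytic and $[\theta+\zeta+\lambda_j]$ is entire, so $\bar{Z}$ in \eqref{Zbar_2} is automatically entire; the poles at $\lambda_j=-\theta-\zeta$ are a feature of the partition function itself (Lemma~\ref{lem:pol_1}), and the paper's remark that $\bar{Z}$ is entire in $\lambda_0$ is read off from the zero/pole structure of the coefficients rather than being an extra assumption, so your appeal to the solution class is acceptable but the concern is vacuous as stated.
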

\noindent The proofs can be found in Appendices \ref{sec:pol}--\ref{sec:crossing}.

Next, since \eqref{FE_v2} involves one more spectral parameter than $Z$ depends on, we can specialize any single spectral parameter to any value we like. Together, \eqref{Lambda} and \eqref{M0Mi} show that the greatest simplification of \eqref{FE_v2} occurs when we set $\lambda_0 = \pm\mu_k-\gamma$ for some $1\leq k \leq L$. Indeed, $M_0(\pm\mu_k-\gamma;\Vector{\lambda}) = \bar{\Lambda}_{\mathcal{A}}(\pm\mu_k-\gamma)$ consists of a single $\Vector{\lambda}$-independent product that is nonzero for generic values of the parameters.
This allows us to express $Z(\Vector{\lambda})$ as a linear combination of $Z$'s that each depend on only $L-1$ free spectral parameters. In particular, when $L=1$ this procedure completely determines the solution up to an overall scale:

\begin{lemma}[Case $L=1$] \label{lem:L=1}
For $L=1$ any solution to the functional equation \eqref{FE_v2} can be written as
 	\[ \label{sol_L=1}
 	Z(\lambda) = \Omega_1 \, [\gamma \,, \zeta+\mu_1] \frac{[\theta+\zeta-\mu_1 \,, \theta+\gamma]}{[\theta+\zeta+\lambda \,, \theta]} \, [2\lambda] = \Omega_1 \, [\gamma] \frac{[2\lambda]}{[2\lambda+\gamma]} \, m_1(z)
 	\]
for some normalization $\Omega_1$ not involving $\lambda$, where $m_1$ was defined in~\eqref{mi}. When $\Omega_1$ is given by \eqref{OmegaL} we recover the partition function~\eqref{PF}.
\end{lemma}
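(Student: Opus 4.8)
The plan is to exploit the single extra spectral parameter in \eqref{FE_v2}, which for $L=1$ reads $M_0(\lambda_0;\lambda_1)\,Z(\lambda_1)+M_1(\lambda_0;\lambda_1)\,Z(\lambda_0)=0$, and to specialize $\lambda_0$ to the value that trivializes the first coefficient. Following the reduction discussed just before the lemma I would set $\lambda_0=\mu_1-\gamma$. Then the factor $[\lambda_0-\mu_1+\gamma]$ in $\Lambda_{\mathcal{A}}(\lambda_0)$ vanishes while the accompanying ratio in \eqref{M0Mi} stays finite for generic $\lambda_1$, so that $M_0(\mu_1-\gamma;\lambda_1)=\bar{\Lambda}_{\mathcal{A}}(\mu_1-\gamma)$ collapses to a single $\lambda_1$-independent product; using \eqref{Lambda} one checks this product is nonzero for generic parameters. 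Solving the equation then gives $Z(\lambda_1)=-\bar{\Lambda}_{\mathcal{A}}(\mu_1-\gamma)^{-1}\,M_1(\mu_1-\gamma;\lambda_1)\,Z(\mu_1-\gamma)$, which simultaneously establishes \emph{uniqueness} up to the scalar $Z(\mu_1-\gamma)$ and exhibits $Z$ once $M_1(\mu_1-\gamma;\lambda_1)$ is made explicit.

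Next I would insert \eqref{Lambda} into the $L=1$ form of $M_1$ from \eqref{M0Mi}. After the cancellations of $[\lambda_1-\mu_1+\gamma]$, $[\theta]$ and $[\lambda_1+\mu_1]$ between the vacuum eigenvalues $\Lambda_{\mathcal{A}}(\lambda_1),\Lambda_{\tilde{\mathcal{D}}}(\lambda_1)$ and their prefactors, the two surviving terms combine into precisely the two terms defining $m_1$ in \eqref{mi} (using $f(-x)=-f(x)$ to fix the signs of $[\zeta-\lambda_1-\gamma]$ and $[\theta-\mu_1-\lambda_1]$), yielding $M_1(\mu_1-\gamma;\lambda_1)=[\gamma]\,\frac{[2\lambda_1]}{[2\lambda_1+\gamma]}\,m_1(\lambda_1)$. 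Absorbing the constant into $\Omega_1\coloneqq-Z(\mu_1-\gamma)/\bar{\Lambda}_{\mathcal{A}}(\mu_1-\gamma)$ produces the second expression in \eqref{sol_L=1}.

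To obtain the first, compact expression I must prove the theta-function identity
\[
[\lambda+\zeta]\,[\theta+\zeta-\lambda]\,[\theta+\lambda-\mu_1+\gamma]\,[\lambda+\mu_1+\gamma] - [\lambda-\zeta+\gamma]\,[\theta+\zeta+\lambda+\gamma]\,[\theta-\lambda-\mu_1]\,[\lambda-\mu_1] = [2\lambda+\gamma]\,[\zeta+\mu_1]\,[\theta+\zeta-\mu_1]\,[\theta+\gamma],
\]
which is exactly the assertion that $[\theta+\zeta+\lambda]\,[\theta]\,m_1(\lambda)$ equals $[2\lambda+\gamma]$ times the numerator of the first form. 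This is the crux of the argument, and I would settle it with the three-term (Weierstrass) addition relation for $f$ recorded in Appendix~\ref{sec:theta}, namely the vanishing of the cyclic sum $f(a+b)f(a-b)f(c+d)f(c-d)+f(b+c)f(b-c)f(a+d)f(a-d)+f(c+a)f(c-a)f(b+d)f(b-d)$. The correct substitution is $a=\zeta+\theta/2$, $b=\lambda-\theta/2$, $c=\lambda+\gamma+\theta/2$, $d=\theta/2-\mu_1$, after which the three products reproduce the two terms on the left and the single term on the right, once the signs supplied by the oddness of $f$ are tracked.

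Finally, to pin down the normalization I would evaluate \eqref{PF} directly at $L=1$: here $\mathcal{Z}(\lambda)=\bra{\bar{0}}\mathcal{B}(\lambda)\ket{0}$ is the single weight-zero matrix element of $\mathcal{T}_0=R_{01}(\lambda-\mu_1)\,K_0(\lambda)\,R_{10}(\lambda+\mu_1)$, which equals $b_+(\lambda-\mu_1)\,k_+(\lambda)\,c_-(\lambda+\mu_1)+c_+(\lambda-\mu_1)\,k_-(\lambda)\,b_+(\lambda+\mu_1)$. Substituting \eqref{weights} and \eqref{reflection} and applying the same three-term identity once more collapses this to $[\gamma]\,[\theta-\gamma]\,[\zeta-\mu_1]\,[\theta+\zeta+\mu_1]\,[2\lambda]/([\theta]\,[\theta+\zeta+\lambda])$; comparing with the first form of \eqref{sol_L=1} reads off $\Omega_1=\frac{[\theta-\gamma]}{[\theta+\gamma]}\frac{[\zeta-\mu_1\,,\theta+\zeta+\mu_1]}{[\zeta+\mu_1\,,\theta+\zeta-\mu_1]}$, which is exactly \eqref{OmegaL} at $L=1$. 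I expect the main obstacle to be purely the bookkeeping of these theta identities: matching the four arguments to $a,b,c,d$ and correctly propagating the signs from $f(-x)=-f(x)$ so that the cyclic relation delivers the stated right-hand sides.
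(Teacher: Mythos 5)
Your proposal is correct and follows essentially the same route as the paper's (much terser) proof: specialize $\lambda_0=\mu_1-\gamma$ so that $M_0$ collapses to the $\lambda$-independent constant $\bar{\Lambda}_{\mathcal{A}}(\mu_1-\gamma)$, read off $Z(\lambda)\propto M_1(\mu_1-\gamma;\lambda)$, and fix $\Omega_1$ by direct inspection of \eqref{PF} at $L=1$. I checked your two applications of the addition rule \eqref{addition_rule} (both substitutions for $a,b,c,d$ work out, signs included) as well as the resulting $\Omega_1=\frac{[\theta-\gamma]}{[\theta+\gamma]}\frac{[\zeta-\mu_1\,,\theta+\zeta+\mu_1]}{[\zeta+\mu_1\,,\theta+\zeta-\mu_1]}$, which indeed agrees with \eqref{OmegaL} for $L=1$; you have simply made explicit the computations the paper leaves to the reader.
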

\begin{proof}
Set $\lambda_0= {\pm\mu_1} - \gamma$ in \eqref{FE_v2} to see that $Z(\lambda) \propto [2\lambda]/[\theta+\zeta+\lambda]$ for any proportionality constant not involving $\lambda$. The match with the partition function $\mathcal{Z}$ is found by direct inspection. [The second equality is recorded anticipating Section~\ref{sec:sol}.]
\end{proof}

For $L\geq 2$ we can further reduce the expression for $Z(\Vector{\lambda})$ by exploiting the following vanishing property for $Z$.

\begin{lemma}[Special zeroes] \label{lem:zeroes}
Suppose that $L\geq 2$ and fix $1\leq k\leq L$. Any solution of \eqref{FE_v2} vanishes at $\lambda_{L-1} = \lambda_+$ and $\lambda_L = \lambda_-$ for all four combinations of $\lambda_\pm \in \{ {\pm \mu_k - \gamma} , {\mp \mu_k} \}$.
\end{lemma}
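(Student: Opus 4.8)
The plan is to exploit the extra variable $\lambda_0$ in the functional equation \eqref{FE_v2}, just as in Lemma~\ref{lem:L=1}, but now specializing \emph{two} of the spectral parameters simultaneously. Concretely, I would set $\lambda_0 = \mu_k - \gamma$ (or one of the other three sign choices) so that, by the observation preceding the lemma, the coefficient $M_0(\mu_k-\gamma;\Vector{\lambda}) = \bar{\Lambda}_{\mathcal{A}}(\mu_k-\gamma)$ collapses to a single $\Vector{\lambda}$-independent nonzero product. This turns \eqref{FE_v2} into an expression for $Z(\Vector{\lambda})$ as a linear combination of the $L$ terms $M_i \, Z(\lambda_0,\To,\widehat{\lambda_i},\To,\lambda_L)$, each of which has $\lambda_0 = \mu_k-\gamma$ substituted into one of its slots.

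The key step is then to examine the remaining coefficients $M_i(\mu_k-\gamma;\Vector{\lambda})$ from \eqref{M0Mi} and check that each term in the resulting sum vanishes when we further impose $\lambda_{L-1} = \lambda_+$ and $\lambda_L = \lambda_-$. I expect the mechanism to be that the $\Lambda_{\mathcal{A}}$ and $\Lambda_{\tilde{\mathcal{D}}}$ factors from \eqref{Lambda}, which appear inside each $M_i$, carry explicit zeroes: $\Lambda_{\mathcal{A}}(\lambda_i)$ contains $\prod_j[\lambda_i - \mu_j + \gamma \,, \lambda_i + \mu_j + \gamma]$, so it vanishes at $\lambda_i = \mu_k - \gamma$ and at $\lambda_i = -\mu_k - \gamma$, while $\Lambda_{\tilde{\mathcal{D}}}(\lambda_i)$ contains $\prod_j[\lambda_i - \mu_j \,, \lambda_i + \mu_j]$, vanishing at $\lambda_i = \pm\mu_k$. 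Pairing the two specialized values $\lambda_\pm \in \{\pm\mu_k - \gamma, \mp\mu_k\}$ so that one matches the $\mathcal{A}$-type zero and the other the $\tilde{\mathcal{D}}$-type zero should force every surviving term in the reduced equation to vanish, leaving $Z(\Vector{\lambda}) = 0$ at the claimed specialization. The symmetry of $Z$ (Lemma~\ref{lem:symm_1}, or part~ii of Lemma~\ref{lem:sol_properties}) lets me place the two special arguments in the last two slots $\lambda_{L-1},\lambda_L$ without loss of generality.

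The main obstacle I anticipate is bookkeeping: after specializing $\lambda_0$, each term of the sum in \eqref{FE_v2} evaluates $Z$ with a \emph{different} omitted argument, so the two imposed conditions $\lambda_{L-1}=\lambda_+$, $\lambda_L=\lambda_-$ interact differently with the $i=L-1$, $i=L$, and $i\le L-2$ terms. For $i \le L-2$ both special arguments remain inside the $Z$-factor, so one needs either an inductive hypothesis or a direct coefficient zero; for $i \in \{L-1,L\}$ one of the special arguments becomes the freshly inserted $\lambda_0 = \mu_k-\gamma$ and the other sits in a $\Lambda$-factor. I would organize the argument by checking, case by case via the explicit zeroes of $\Lambda_{\mathcal{A}}$ and $\Lambda_{\tilde{\mathcal{D}}}$ listed above together with the product prefactors in \eqref{M0Mi}, that each coefficient $M_i$ carries the requisite vanishing, possibly invoking crossing symmetry \eqref{crossing_Lambda} to relate the four sign combinations of $\lambda_\pm$. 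The cleanest route may well be induction on $L$, using the reduction that expresses a length-$L$ solution in terms of length-$(L-1)$ data, so that the vanishing for smaller $L$ feeds the vanishing for larger $L$; I would set this up so that the base case is covered by Lemma~\ref{lem:L=1}.
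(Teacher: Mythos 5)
There is a genuine gap, and it sits exactly where you anticipated trouble: the terms with $i\le L-2$. Your plan is to specialize $\lambda_0=\mu_k-\gamma$ (making $M_0$ collapse to $\bar{\Lambda}_{\mathcal{A}}(\mu_k-\gamma)$) and then argue that every surviving coefficient $M_i(\mu_k-\gamma;\Vector{\lambda})$ vanishes once $\lambda_{L-1}=\lambda_+$, $\lambda_L=\lambda_-$ are imposed. But for $i\le L-2$ the factors $\Lambda_{\mathcal{A}}(\lambda_i)$ and $\Lambda_{\tilde{\mathcal{D}}}(\lambda_i)$ are evaluated at the \emph{generic} $\lambda_i$, not at a special value, and the cross-products $\prod_{j\neq i}$ in \eqref{M0Mi} contain only factors like $[\lambda_\pm\mp\lambda_i+\gamma]$, $[\lambda_\pm+\lambda_i]$, which are generically nonzero; so these coefficients do \emph{not} vanish. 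Worse, each such term multiplies $Z(\mu_k-\gamma,\lambda_1,\To,\widehat{\lambda_i},\To,\lambda_{L-2},\lambda_+,\lambda_-)$, which still carries \emph{both} special arguments --- i.e.\ it is precisely one of the unknown special values you are trying to prove zero. Your equation therefore yields one linear relation among several such unknowns, not the vanishing of any of them; concluding $Z=0$ term by term is circular. The zero-pairing mechanism you describe (an $\mathcal{A}$-type zero matched with a $\tilde{\mathcal{D}}$-type zero, together with cross factors such as $[\lambda_++\lambda_-]$ or $[\lambda_+-\lambda_-+\gamma]$) is real, but it only kills the two coefficients $M_{L-1}$ and $M_L$ --- the terms where one special argument sits in the coefficient while the other remains inside $Z$.

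The paper's proof exploits exactly this: it keeps $\lambda_0$ \emph{generic} (your specialization $\lambda_0=\mu_k-\gamma$ discards the leverage of the extra variable), observes $M_{L-1}(\lambda_0;\lambda_{1,\To,L-2,+,-})=M_L(\lambda_0;\lambda_{1,\To,L-2,+,-})=0$, and so obtains \eqref{FE**}, a homogeneous relation among the $L-1$ unknowns $Z(\lambda_{0,\To,\hat\nu,\To,L-2,+,-})$. Swapping $\lambda_0\leftrightarrow\lambda_j$ then produces $L-1$ such equations, assembled via \eqref{Mrhonu} into a square homogeneous system, and the lemma follows once the coefficient matrix is nonsingular --- a step the paper itself only verifies analytically and numerically for $L\leq 9$. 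Your proposal contains no trace of this linear-algebra step, which is the crux. Finally, your fallback of induction on $L$ via the reduction cannot rescue the argument as stated: Proposition~\ref{prop:red} rests on Corollary~\ref{cor}, which is itself a consequence of Lemma~\ref{lem:zeroes}, so invoking it here would again be circular.
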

\noindent Following \cite{Gal_12, Gal_13a, GL_14} we refer to these $4L$ combinations as `special zeroes' of the solution. Note that for given $k$ the four special zeroes are precisely related by crossing one or both values, $\mu\mapsto-\mu-\gamma$, cf.~Lemma~\ref{lem:sol_properties}~(iii). Lemma~\ref{lem:zeroes} is substantiated in Appendix~\ref{sec:zeroes}. Of course, by part~(ii) of Lemma~\ref{lem:sol_properties}, an analytic solution thus vanishes when \emph{any} two of its arguments are specialized to $\lambda_+$ and~$\lambda_-$. Together with parts~(i) and~(iii) of Lemma~\ref{lem:sol_properties} this has the following consequence.
\begin{corollary} \label{cor}
In the setting of Lemma~\ref{lem:zeroes} suppose that $Z$ is an analytic solution of \eqref{FE_v2}. Then we have for any $i\in\{1,\To,L\}$ that
	\[
	Z(\Vector{\lambda}) \Big|_{\lambda_i \in \{ {\pm \mu_k - \gamma} , {\mp \mu_k} \}} = \tilde{Z}_\pm(\lambda_1,\To,\widehat{\lambda_i},\To,\lambda_L) \, \prod_{\substack{j=1\\j\neq i}}^L [ \lambda_j \mp \mu_k \,, \lambda_j \pm \mu_k+\gamma ] \ , 
	\]
where the $\tilde{Z}_\pm$ are analytic functions in $L-1$ spectral parameters with the properties from Lemma~\ref{lem:sol_properties}, now with $L-1$ instead of $L$.
\end{corollary}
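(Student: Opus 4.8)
The plan is to assemble Corollary~\ref{cor} directly from the three parts of Lemma~\ref{lem:sol_properties} together with Lemma~\ref{lem:zeroes}. Fix $k$ and $i$, and suppose $\lambda_i$ is specialized to one of the values in $\{{\pm\mu_k-\gamma},{\mp\mu_k}\}$. By the symmetry (part~ii of Lemma~\ref{lem:sol_properties}) I may as well relabel so that $i$ plays the role of either $L-1$ or $L$; then Lemma~\ref{lem:zeroes} tells me that $Z$ vanishes whenever a \emph{second} argument $\lambda_j$ ($j\neq i$) is set to the companion value. Concretely, fixing $\lambda_i=-\mu_k$ (the $\lambda_-$ choice), the function $\lambda_j\mapsto Z(\Vector\lambda)|_{\lambda_i=-\mu_k}$ must vanish at $\lambda_j=\mu_k-\gamma$ and, by crossing (part~iii), also at the crossed point $\lambda_j=-\mu_k$; more carefully, the four special zeroes for fixed $k$ come in a crossing-related quadruple, so specializing $\lambda_i$ forces zeroes of the remaining $(L-1)$-variable function at both $\lambda_j=\mu_k-\gamma$ and $\lambda_j=-\mu_k$ for every $j\neq i$.

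The second step is to turn these prescribed zeroes into the explicit product factor. For each fixed $\lambda_j$, part~(i) of Lemma~\ref{lem:sol_properties} says the normalized solution is a higher-order theta function of known order and norm; dividing out the denominator $\prod[\theta+\zeta+\lambda_j]$, the function $\lambda_j\mapsto Z|_{\lambda_i=-\mu_k}$ is (up to that fixed denominator) a theta function whose zero set I now partly control. The key observation is that a theta function of order~$2(L+1)$ and norm~$(L-1)\gamma$ that is known to vanish at $\lambda_j=\mu_k-\gamma$ and $\lambda_j=-\mu_k$ must contain the factor $[\lambda_j-\mu_k+\gamma\,,\lambda_j+\mu_k]$ — which, for the $\lambda_-$ branch, is exactly $[\lambda_j\mp\mu_k\,,\lambda_j\pm\mu_k+\gamma]$ with the lower signs. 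I would invoke the divisibility property for higher-order theta functions (Appendix~\ref{sec:higher_order}): a theta function vanishing at a point is divisible by the corresponding order-one theta factor, with the quotient again a theta function of order reduced by one in each such variable. Extracting this quadratic factor from each of the $L-1$ remaining variables then peels off the full product $\prod_{j\neq i}[\lambda_j\mp\mu_k\,,\lambda_j\pm\mu_k+\gamma]$.

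The remaining task is to check that the cofactor $\tilde Z_\pm$ inherits the claimed properties. After dividing $Z|_{\lambda_i=\mp\mu_k}$ by this product I must verify that $\tilde Z_\pm$ is (a) symmetric in its $L-1$ arguments, (b) of the reduced theta-order $2L$ and norm $(L-2)\gamma$ after renormalizing by $\prod_{j\neq i}[\theta+\zeta+\lambda_j]$, and (c) crossing-symmetric with the $(L-1)$-variable version of \eqref{Z_crossing}. Symmetry of $\tilde Z_\pm$ follows because both $Z$ and the extracted product are symmetric in the $\lambda_j$, $j\neq i$; the order/norm bookkeeping is a direct count — each extracted quadratic factor $[\lambda_j-\mu_k+\gamma\,,\lambda_j+\mu_k]$ lowers the order by two and shifts the norm by $\gamma$ in the variable $\lambda_j$, producing precisely the length-$(L-1)$ data of Lemma~\ref{lem:sol_properties}; and crossing invariance of $\tilde Z_\pm$ follows from comparing the crossing transformation of $Z$ (part~iii) against the crossing transformation of the product factor, using that $[\lambda_j\mp\mu_k\,,\lambda_j\pm\mu_k+\gamma]$ is itself crossing-invariant up to the same prefactor that appears in \eqref{Z_crossing}. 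The main obstacle I anticipate is the bookkeeping in this last step: one must be careful that the explicit crossing prefactor $-[2\lambda_j+2\gamma\,,\theta+\zeta+\lambda_j]/[2\lambda_j\,,\theta+\zeta-\lambda_j-\gamma]$ is reproduced \emph{exactly} after the product is divided out, so that no spurious factor contaminates $\tilde Z_\pm$, and that the $\lambda_-$ versus $\lambda_+$ branches (related by $\mu_k\mapsto-\mu_k-\gamma$) both yield the same structural conclusion — this is where Lemma~\ref{lem:sol_properties}~(iii) and the crossing-invariance of the extracted factor must be matched with care.
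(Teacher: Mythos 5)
Your proposal is correct and follows essentially the same route as the paper, which presents the corollary as an immediate consequence of Lemma~\ref{lem:zeroes} combined with parts (i)--(iii) of Lemma~\ref{lem:sol_properties}: symmetry propagates the special zeroes to any pair of arguments, the characterization of higher-order theta functions from Appendix~\ref{sec:higher_order} lets you peel off the factor $[\lambda_j\mp\mu_k\,,\lambda_j\pm\mu_k+\gamma]$ variable by variable, and your order/norm bookkeeping ($2(L+1)\to 2L$, $(L-1)\gamma\to(L-2)\gamma$) is exactly right. Two harmless slips worth noting: $-\mu_k$ is a $\lambda_+$ value rather than ``the $\lambda_-$ choice'' (your stated zero locations $\mu_k-\gamma$, $-\mu_k$ and the extracted factor are consistent with the genuine $\lambda_-$ specializations $\lambda_i=\mu_k$ or $\lambda_i=-\mu_k-\gamma$, so nothing downstream breaks), and the extracted quadratic factor is \emph{exactly} invariant under $\lambda_j\mapsto-\lambda_j-\gamma$ by oddness of $[\,\cdot\,]$ --- not invariant ``up to the same prefactor'' --- which is precisely why $\tilde{Z}_\pm$ inherits the unchanged crossing prefactor of \eqref{Z_crossing} for length $L-1$.
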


Although the preceding holds for any choice $\lambda_i \in \{ {\pm \mu_k - \gamma} , {\mp \mu_k} \}$ with any $1\leq k\leq L$, only when $k=1$ or $k=L$ it is possible to reduce the problem of solving \eqref{FE_v2} for length $L$ to that for $L-1$. The results for these two choices for $k$ turn out to be the same, cf.\ the proof of Theorem~\ref{thm:unique} and Remarks \ref{rmk:alternative_1} and~\ref{rmk:alternative_2} below, yet the case $k=L$ is easier to treat. The four resulting special zeroes can be treated simultaneously, yet for now we consider the two choices $(\mu_L-\gamma,\lambda_*)$ with $\lambda_* \in \{{-\mu_L}-\gamma,\mu_L\}$ to avoid cumbersome notation. This is consistent with crossing symmetry, implying that the result should not depend on this choice at any rate. We obtain the following recursive relation.

\begin{proposition}[Reduction] \label{prop:red}
If $Z$ is an analytic solution of the functional equation~\eqref{FE_v2} for length~$L$ then we can write 
	\[ \label{Z_via_Ztilde}
	Z (\Vector{\lambda}) = \, \Omega \, \sum_{i=1}^{L} M_i(\mu_L-\gamma;\Vector{\lambda}) \, \tilde{Z} (\lambda_1,\To,\widehat{\lambda_i},\To,\lambda_L) \, \prod_{\substack{j=1\\j\neq i}}^L [ \lambda_j-\mu_L \,, \lambda_j+\mu_L+\gamma ] \ ,
		\]
where $\tilde{Z}$ is an analytic solution of \eqref{FE_v2} for length~$L-1$, and the proportionality constant~$\Omega$ is independent of $\Vector{\lambda}$.
\end{proposition}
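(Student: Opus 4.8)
The plan is to establish the recursion \eqref{Z_via_Ztilde} by specializing the extra variable $\lambda_0$ in \eqref{FE_v2} to a value that collapses most of the coefficients. Concretely, I would set $\lambda_0 = \mu_L - \gamma$ in the functional equation. As noted just before Lemma~\ref{lem:L=1}, this choice makes $M_0(\mu_L-\gamma;\Vector{\lambda}) = \bar{\Lambda}_{\mathcal{A}}(\mu_L-\gamma)$ a single $\Vector{\lambda}$-independent product, nonzero for generic parameters. The equation \eqref{FE_v2} then reads
	\[
	\bar{\Lambda}_{\mathcal{A}}(\mu_L-\gamma) \, Z(\Vector{\lambda}) = -\sum_{i=1}^L M_i(\mu_L-\gamma;\Vector{\lambda}) \, Z(\mu_L-\gamma,\lambda_1,\To,\widehat{\lambda_i},\To,\lambda_L) \ ,
	\]
so that $Z(\Vector{\lambda})$ is expressed as a linear combination of evaluations of $Z$ in which one argument has been frozen to $\mu_L-\gamma$ and one of the original arguments $\lambda_i$ has been dropped.

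The next step is to feed in Corollary~\ref{cor}. Each term on the right has $Z$ evaluated with one argument equal to $\mu_L-\gamma$, which is exactly the specialization $\lambda \in \{\mu_k-\gamma\}$ of Lemma~\ref{lem:zeroes}/Corollary~\ref{cor} with $k=L$ (taking the upper sign). Applying the Corollary with the omitted index playing the role of $i$, each factor $Z(\mu_L-\gamma,\lambda_1,\To,\widehat{\lambda_i},\To,\lambda_L)$ factorizes as
	\[
	\tilde{Z}_+(\lambda_1,\To,\widehat{\lambda_i},\To,\lambda_L) \, \prod_{\substack{j=1\\j\neq i}}^L [\lambda_j - \mu_L \,, \lambda_j + \mu_L + \gamma] \ ,
	\]
where $\tilde{Z}_+$ is an analytic function of $L-1$ spectral parameters enjoying the properties of Lemma~\ref{lem:sol_properties} at length $L-1$. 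Crucially, by the symmetry of $Z$ (Lemma~\ref{lem:sol_properties}~(ii)) the factorized function $\tilde{Z}_+$ is \emph{the same} for every term $i$ — it is just $Z$ with the first argument frozen to $\mu_L-\gamma$, read as a symmetric function of the remaining $L-1$ variables. Substituting this common factor, writing $\tilde{Z} \coloneqq \tilde{Z}_+$, and absorbing the nonzero scalar $-1/\bar{\Lambda}_{\mathcal{A}}(\mu_L-\gamma)$ into the constant $\Omega$ yields precisely \eqref{Z_via_Ztilde}.

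The remaining task, and the main obstacle, is to verify that this $\tilde{Z}$ is genuinely a solution of the functional equation \eqref{FE_v2} \emph{at length} $L-1$, not merely an analytic function with the right polynomial and symmetry structure. I would argue this by taking the length-$L$ functional equation, exchanging $\lambda_0 \leftrightarrow \lambda_\rho$ to obtain the family of equations with coefficients \eqref{Mrhonu}, and then specializing a second variable to $\mu_L-\gamma$ (or rather specializing one argument so as to trigger the factorization of Corollary~\ref{cor} uniformly across all terms). After dividing out the common product $\prod [\lambda_j-\mu_L \,, \lambda_j+\mu_L+\gamma]$, the surviving relation among the $\tilde{Z}$'s must be checked to coincide with \eqref{FE_v2} at length $L-1$; this amounts to matching the reduced coefficients against the length-$(L-1)$ versions of \eqref{M0Mi}. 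This identification of coefficients is the delicate computation: it requires tracking how the shifts $\theta \mapsto \theta \pm \gamma$ and the $L$-dependent factors in \eqref{Lambda} and \eqref{M0Mi} transform under the reduction $L \to L-1$, and confirming that the frozen variable contributes exactly the multiplicative factors that convert the length-$L$ data into length-$(L-1)$ data. Crossing symmetry (Lemma~\ref{lem:sol_properties}~(iii) and Lemma~\ref{lem:M0Mi_symm}) guarantees consistency across the four choices of special zero, so it suffices to carry out the matching for the single pair $(\mu_L-\gamma,\lambda_*)$, as the statement already anticipates.
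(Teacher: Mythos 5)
Your first half coincides with the paper's proof: specializing $\lambda_0=\mu_L-\gamma$ in \eqref{FE_v2}, solving for $Z(\Vector{\lambda})$, and invoking Corollary~\ref{cor} (with $k=L$, upper sign) yields \eqref{Z_via_Ztilde} with $\Omega=-\bar{\Lambda}_{\mathcal{A}}(\mu_L-\gamma)^{-1}$, and your remark that the factorized function $\tilde{Z}_+$ is common to all $L$ terms is correct. One local slip: your first suggestion of specializing a \emph{second} variable to $\mu_L-\gamma$ would not trigger the needed collapse, since two arguments frozen at the same upper-sign value do not form a special zero in the sense of Lemma~\ref{lem:zeroes}; the second frozen value must come from the complementary set, and indeed only for $\lambda_L=\lambda_*\in\{-\mu_L-\gamma,\mu_L\}$ do the products $[\lambda_j-\mu_L\,,\lambda_j+\mu_L+\gamma]$ in \eqref{Z_via_Ztilde} vanish, so that each expansion collapses to the single term omitting $\lambda_*$. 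Your parenthetical self-correction points the right way, and this is exactly what the paper does: rewrite every $Z$ in \eqref{FE_v2} via \eqref{Z_via_Ztilde}, set $\lambda_L=\lambda_*$, and collect the surviving terms into the reduced equation \eqref{FE_red} with the explicitly bilinear coefficients \eqref{M0Mi_red}.

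The genuine gap is in your concluding step. You plan to finish by ``matching the reduced coefficients against the length-$(L-1)$ versions of \eqref{M0Mi}'', i.e.\ you expect \eqref{FE_red} to become the length-$(L-1)$ equation after bookkeeping of $\theta$-shifts and $L$-dependent factors. This direct identification fails: as the paper states, \eqref{FE_red} for $L=2$ happens to be proportional to \eqref{M0Mi} with $L=1$, but this ``does unfortunately not persist for $L\geq 3$''. The correct statement is weaker — the left-hand side of \eqref{FE_red} is some linear combination of the left-hand sides of the $L$ equations of the length-$(L-1)$ family obtained by swapping $\lambda_0\leftrightarrow\lambda_j$ (coefficients \eqref{Mrhonu}) — and the paper casts it as the vanishing of the determinant of the $L\times L$ matrix whose last row is \eqref{M0Mi_red}. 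Even this is only verified analytically and numerically for $L\leq 7$; the author explicitly concedes that no rigorous general proof is available. So the ``delicate computation'' you defer to would not terminate in the coefficient matching you envisage, and closing the argument requires precisely this determinant identity (or an equivalent), which is the one unproven ingredient in the paper's own proof as well. Your appeal to crossing symmetry to reduce the four special-zero choices to the single pair $(\mu_L-\gamma,\lambda_*)$ is, by contrast, sound and agrees with the paper's observation that the two choices of $\lambda_*$ alter \eqref{M0Mi_red} only by a constant overall factor.
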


\begin{proof}
Let us employ the following shorthand for the arguments of various functions. We write e.g.\ $\lambda_{0,1,\To}$ for $\lambda_0,\lambda_1,\cdots$. The omission of a spectral parameter is indicated by a caret on the corresponding index; the presence of $\lambda_*$ is shown by an asterisk.

Suppose that $Z$ satisfies \eqref{FE_v2} for $L$. The idea is to use suitable specializations of $\lambda_0$ and $\lambda_L$. In view of the discussion preceding Lemma~\ref{lem:L=1} we first specialize $\lambda_0= \mu_L-\gamma$ in \eqref{FE_v2} and solve for $Z(\Vector{\lambda})$ to get
	\[
	Z(\Vector{\lambda}) = -\bar{\Lambda}_{\mathcal{A}}(\mu_L-\gamma)^{-1}  \sum_{i=1}^{L} M_i(\mu_L-\gamma;\Vector{\lambda}) \, Z(\lambda_{0,1,\To,\hat{\imath},\To,L})\big|_{\lambda_0= \mu_L -\gamma} \ .
	\]
Applying Corollary~\ref{cor} we obtain \eqref{Z_via_Ztilde} with $\Omega = -\bar{\Lambda}_{\mathcal{A}}(\mu_L-\gamma)^{-1}$. [An expression for $\tilde{Z}$ in terms of $Z$ can be found by setting $\lambda_L = \mu_L$ in \eqref{Z_via_Ztilde}.]

To see that $\tilde{Z}$ satisfies a functional equation of the form \eqref{FE_v2} for $L-1$ we proceed as follows. Carefully using \eqref{Z_via_Ztilde} to rewrite each $Z$ in our functional equation~\eqref{FE_v2} and specializing $\lambda_L = \lambda_* \in \{ -\mu_L -\gamma , \mu_L \}$ we find that $\tilde{Z}$ satisfies the functional equation
	\[ \label{FE_red}
	\sum_{\nu=0}^{L-1} \tilde{M}_\nu (\lambda_0;\lambda_{1,\To,L-1}) \, \tilde{Z}(\lambda_{0,1,\To,\hat{\nu},\To,L-1}) = 0 \ ,
	\]
with coefficients explicitly given in terms of \eqref{M0Mi} by
	\[ \label{M0Mi_red}
	\begin{aligned}
	\tilde{M}_\nu (\lambda_0;\lambda_{1,\To,L-1}) = \, & \Big( M_\nu(\lambda_0;\lambda_{1,\To,L-1,*}) \, 	M_L(\mu_L-\gamma;\lambda_{0,1,\To,\hat{\nu},\To,L-1,*}) \\
	& \ \, + M_L(\lambda_0;\lambda_{1,\To,L-1,*}) \, 	 M_{\nu+1}(\mu_L-\gamma;\lambda_{0,1,\To,L-1}) \Big) \\
	& \qquad \times \prod_{\substack{\rho= 0 \\ \rho\neq\nu}}^{L-1} [ \lambda_\rho-\mu_L \,, \lambda_\rho+\mu_L+\gamma ] \ .
	\end{aligned}
	\]
One can check that the two possible choices of $\lambda_*$ only lead to a constant ($\Vector{\lambda}$-independent) overall ($\nu$-independent) factor
for the coefficients \eqref{M0Mi_red}. Since this factor can be dropped in \eqref{FE_red} we have not indicated the dependence of the $\tilde{M}_\nu$ on this choice.

Clearly the reduced functional equation~\eqref{FE_red} has the same structure as our original functional equation when we take $L-1$ instead of $L$ in \eqref{FE_v2}; the coefficients \eqref{M0Mi_red} also exhibit the symmetries from Lemma~\ref{lem:M0Mi_symm} for length $L-1$. Although \eqref{FE_red} for $L=2$ happens to be proportional to \eqref{M0Mi} with $L=1$, unlike in earlier works such as \cite{Gal_12} (but like in \cite{GL_14}), this simple relation does unfortunately not persist for $L\geq 3$.

At the end of Section~\ref{sec:AF} we noticed that interchanging $\lambda_0 \leftrightarrow \lambda_j$ in \eqref{FE_v2} for some $1\leq j\leq L$ yields another functional equation. Thus, for length~\mbox{$L-1$}, in fact we have $L$ different functional equations: \eqref{FE_v2} together with the $L-1$ equations obtained by switching $\lambda_0 \leftrightarrow \lambda_j$. (Recall that not all of these equations are linearly independent.) We claim that \emph{for any length~$L$ the left-hand side of the reduced functional equation~\eqref{FE_red} can be written as some linear combination of the left-hand sides of these $L$ functional equations corresponding to length $L-1$.} To see that this is indeed the case we form the $L\times L$ matrix with entries given by \eqref{Mrhonu} for length~\mbox{$L-1$}, except that the last row is replaced by the coefficients \eqref{M0Mi_red} of the reduced equation. Our claim is true if this matrix has zero determinant. Whilst we have not managed to prove this rigorously, analytic and numerical investigations for $L\leq 7$ confirm that this is indeed the case for these system sizes, and we see no obstruction for this pattern to continue for larger $L$ too.
\end{proof}

Due to crossing symmetry, cf.~Lemma~\ref{lem:sol_properties}~(iii), the other special zeroes, $({-\mu_L}-\gamma,\lambda_*)$ with $\lambda_* \in \{\mu_L-\gamma,{-\mu_L}\}$, yield the same result. 
Another way to understand this is the fact that the proof of Proposition~\ref{prop:red} does not involve any algebraic choices, unlike e.g.\ for the equation $\lambda^2 = 1$, so that the procedure only has one possible result. (It may be instructive to convince oneself of this for $L=1$ and $L=2$.) Correspondingly it is now easy to prove the uniqueness, up to a constant overall factor, of analytic solutions of \eqref{FE_v2}.

\begin{proof}[Proof of Theorem~\ref{thm:unique}]
We use induction on $L$. The base case is furnished by Lemma~\ref{lem:L=1}. Suppose that $Z$ is an analytic solution of \eqref{FE_v2} for length $L$. Then by Proposition~\ref{prop:red} the function $\tilde{Z}$ in \eqref{Z_via_Ztilde} is analytic and solves the equation for length $L-1$. Hence, according to the induction hypothesis, $\tilde{Z}$ is unique up to a constant normalization factor. But \eqref{Z_via_Ztilde} determines $Z$ in terms of $\tilde{Z}$, again up to a constant factor. This proves the theorem.
\end{proof}

Note that the proof crucially depends on the recursion between the functional equation for successive lengths. In particular it also applies to the equations derived in \cite{Gal_12, Gal_13a, GL_14}.

\begin{remark}\label{rmk:alternative_1}
In view of the preceding proof one may wonder about the dependence on the choice $k=L$ in Proposition~\ref{prop:red}. The only other choice of $k$ that allows one to relate the reduced functional equation~\eqref{FE_red} to the equations for length one lower is $k=1$. In that case one finds that any analytic solution of \eqref{FE_v2} for length~$L$ can be written in terms of a solution $\check{Z}$ of \eqref{FE_v2} for length~$L-1$ and with inhomogeneities $(\mu_2, \To, \mu_L)$ instead of $(\mu_1, \To, \mu_{L-1})$:
	\[ \label{Z_via_Ztilde_alt}
	Z(\Vector{\lambda}) = \, \check{\Omega} \, \sum_{i=1}^{L} M_i(\mu_1-\gamma;\Vector{\lambda}) \, \Big( \check{Z}(\lambda_{1,\To,\hat{\imath},\To,L})\big|_{\mu_j\mapsto\mu_{j+1}}\Big) \, \prod_{\substack{j=1\\j\neq i}}^L [ \lambda_j-\mu_1 \,, \lambda_j+\mu_1+\gamma ] \ .
	\]
At the end of the next section we will see that the result of this alternative approach is the same as that for $k=L$, in accordance with Theorem~\ref{thm:unique}.
\end{remark}

\subsection{Multiple-integral formula}
\label{sec:sol}

In this section we solve our functional equation \eqref{FE_v2} by induction on $L$, based on the analysis of the previous section and Proposition~\ref{prop:red} in particular, culminating in the multiple-integral formula from Theorem~\ref{thm:sol}. 

As in the proof of Theorem~\ref{thm:unique} we proceed by induction on $L$. Iterating the relation from Proposition~\ref{prop:red} we can find a closed expression for the solution to \eqref{FE_v2}, which by Theorem~\ref{thm:unique} is unique up to an overall constant factor.
\begin{proposition}[Solution as symmetrized sum] \label{prop:sol}
Up to a constant overall normalization factor, the solution to the functional equation \eqref{FE_v2} for $L\geq 2$ can be written as the following symmetrized sum:
	\[ \label{sol_symm_sum} 
	\begin{aligned}
	Z(\lambda_1,\To,\lambda_L) = \ \sum_{\sigma \in S_L} & Z(\lambda_{\sigma(1)}) \prod_{l=2}^{L}  M_l(\mu_l-\gamma;\lambda_{\sigma(1)},\lambda_{\sigma(2)},\To,\lambda_{\sigma(l)}) \\
	& \qquad \times \prod_{1\leq i < j \leq L} [ \lambda_{\sigma(i)} - \mu_j \,, \lambda_{\sigma(i)} + \mu_j + \gamma ] \\
	= \ \sum_{\sigma \in S_L} & \left( \prod_{l=1}^{L} [\gamma] \frac{[2\lambda_{\sigma(l)}]}{[2\lambda_{\sigma(l)}+\gamma]} \,  m_l(\lambda_{\sigma(1)},\lambda_{\sigma(2)},\To,\lambda_{\sigma(l)}) \right) \\
	& \qquad \times \prod_{1\leq i < j \leq L} [ \lambda_{\sigma(i)} - \mu_j \,, \lambda_{\sigma(i)} + \mu_j + \gamma ] \ ,
	\end{aligned}
	\]
where $S_L$ denotes the symmetric group in $L$ symbols, $Z(\lambda)$ is given by \eqref{sol_L=1}, the factor $M_l$ is to be understood as given by \eqref{M0Mi} for length $l$, and $m_l$ was defined in \eqref{mi}.
\end{proposition}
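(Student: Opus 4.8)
The plan is to establish \eqref{sol_symm_sum} by induction on $L$, iterating the reduction of Proposition~\ref{prop:red}. The base case $L=1$ is precisely Lemma~\ref{lem:L=1}: the sum over $S_1$ contains the single term $Z(\lambda_{\sigma(1)})$, both products $\prod_{l=2}^{1}$ and $\prod_{1\le i<j\le 1}$ are empty, and the two displayed forms collapse to the two expressions in \eqref{sol_L=1}. For $L\ge 2$ I would assume the proposition for length $L-1$ with inhomogeneities $(\mu_1,\cdots,\mu_{L-1})$ and insert it into the recursion \eqref{Z_via_Ztilde}. Since Proposition~\ref{prop:red} produces a genuine solution $\tilde{Z}$ of one lower length, the recursion may be applied repeatedly, and Theorem~\ref{thm:unique} guarantees that the outcome is the solution up to one overall constant, into which all the stage-by-stage factors $\Omega$ merge.

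The first key step is the combinatorial reorganization yielding the first equality in \eqref{sol_symm_sum}. Substituting the induction hypothesis for $\tilde{Z}(\lambda_1,\cdots,\widehat{\lambda_i},\cdots,\lambda_L)$ turns \eqref{Z_via_Ztilde} into a double sum: an outer sum over the peeled-off index $i\in\{1,\cdots,L\}$ and an inner sum over $S_{L-1}$ permuting the remaining parameters. These combine into a single sum over $S_L$ by setting $\sigma(L)=i$ and letting the inner permutation fill positions $1,\cdots,L-1$. To write the resulting coefficient uniformly with the special argument in the last slot I would invoke Lemma~\ref{lem:M0Mi_symm} (namely $M_i|_{\lambda_i\leftrightarrow\lambda_j}=M_j$ together with symmetry in the remaining arguments), which identifies $M_i(\mu_L-\gamma;\Vector{\lambda})$ with $M_L(\mu_L-\gamma;\lambda_{\sigma(1)},\cdots,\lambda_{\sigma(L)})$. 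Tracking factors, the product $\prod_{j\neq i}[\lambda_j-\mu_L,\lambda_j+\mu_L+\gamma]$ in \eqref{Z_via_Ztilde} supplies exactly the $j=L$ column of the double product in \eqref{sol_symm_sum}, the lower stages supply the columns $j<L$, the coefficients accumulate as $\prod_{l=2}^{L}M_l(\mu_l-\gamma;\lambda_{\sigma(1)},\cdots,\lambda_{\sigma(l)})$, and the base term $Z(\lambda_{\sigma(1)})$ emerges at the bottom of the recursion.

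The second key step, giving the second equality in \eqref{sol_symm_sum}, rests on the elementary identity
\[ \label{Mm_identity}
M_l(\mu_l-\gamma;z_1,\cdots,z_l) = [\gamma]\,\frac{[2z_l]}{[2z_l+\gamma]}\,m_l(z_1,\cdots,z_l) \ ,
\]
relating the length-$l$ coefficient \eqref{M0Mi} at $\lambda_0=\mu_l-\gamma$ to the function \eqref{mi}. I would verify \eqref{Mm_identity} by direct substitution of the eigenvalues \eqref{Lambda} at length $l$: in the $\Lambda_{\mathcal{A}}$-term the factor $[z_l-\mu_l+\gamma]$ coming from the $j=l$ piece of $\prod_{j=1}^{l}[z_l-\mu_j+\gamma,\cdots]$ cancels the $[\lambda_l-\lambda_0]=[z_l-\mu_l+\gamma]$ in the denominator, reproducing the first two lines of \eqref{mi}; in the $\Lambda_{\tilde{\mathcal{D}}}$-term the factor $[z_l+\mu_l]$ from the $j=l$ piece of $\prod_{j=1}^{l}[z_l-\mu_j,z_l+\mu_j]$ cancels $[\lambda_l+\lambda_0+\gamma]=[z_l+\mu_l]$, and the parity $[\zeta-z_l-\gamma]=-[z_l-\zeta+\gamma]$ of $f$ (Appendix~\ref{sec:theta}) produces the last two lines of \eqref{mi}. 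Combining \eqref{Mm_identity} with the second expression for $Z(\lambda_{\sigma(1)})$ in Lemma~\ref{lem:L=1} and absorbing the spurious constants into the normalization converts $Z(\lambda_{\sigma(1)})\prod_{l=2}^{L}M_l$ into $\prod_{l=1}^{L}[\gamma]\,[2\lambda_{\sigma(l)}]/[2\lambda_{\sigma(l)}+\gamma]\,m_l(\lambda_{\sigma(1)},\cdots,\lambda_{\sigma(l)})$.

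I expect the main obstacle to be the bookkeeping in the combinatorial step: making the bijection between the nested removal choices and $S_L$ precise, and verifying that the symmetry of Lemma~\ref{lem:M0Mi_symm} genuinely allows the peeled-off parameter to be moved into the last slot without altering the coefficient. Once this is in place the remaining factor-by-factor matching is routine, and the identity \eqref{Mm_identity} is a mechanical cancellation using \eqref{Lambda} and the oddness of $f$.
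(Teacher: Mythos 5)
Your proposal is correct and follows essentially the same route as the paper's own proof: induction on $L$ via the reduction of Proposition~\ref{prop:red}, merging the outer sum over the peeled-off index $i$ with the inner $S_{L-1}$-sum through the bijection $\{1,\To,L\}\times S_{L-1}\xrightarrow{\ \sim\ }S_L$ (the paper writes it as $(i,\sigma)\longmapsto(i,i+1,\To,L)\circ\sigma'$, which is exactly your ``$\sigma(L)=i$'' prescription), using the symmetries of Lemma~\ref{lem:M0Mi_symm} to place the peeled parameter in the last slot, and obtaining the second equality by direct substitution of \eqref{Lambda}, \eqref{M0Mi} and \eqref{sol_L=1} --- your identity $M_l(\mu_l-\gamma;z_1,\To,z_l)=[\gamma]\,\tfrac{[2z_l]}{[2z_l+\gamma]}\,m_l(z_1,\To,z_l)$, including the cancellation of the $j=l$ factors and the use of oddness of $f$, checks out. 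The only cosmetic difference is that you anchor the induction at $L=1$ via Lemma~\ref{lem:L=1} while the paper starts at $L=2$ directly from Proposition~\ref{prop:red}; both are valid.
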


\begin{proof}
The second equality follows directly from \eqref{Lambda}, \eqref{M0Mi} and \eqref{sol_L=1}. The proof of the first equality is by induction on $L$. For $L=2$ the statement follows directly from Proposition~\ref{prop:red}. The inductive step is straightforward, using Proposition~\ref{prop:red}, the properties of the coefficients $M_l$ recorded at the end of Section~\ref{sec:AF}, and the bijection of labelling sets $\{1,\To,L\} \times S_{L-1} \xrightarrow{\ \sim \ } S_L$ given by $(i,\sigma) \longmapsto (i,i+1,\To,L) \circ \sigma'$, where $\sigma'\in S_L$ is the extension of $\sigma$ fixing $L$. 
\end{proof}

The remaining step in the proof of Theorem~\ref{thm:sol} is based on the following trick, which appears to be common lore.

\begin{lemma}[Multiple-integral formula for symmetrized sums] \label{lem:sum_via_oint}
Consider $L\geq 1$ and let $\Vector{\lambda} \in \mathbb{C}^L$ be generic. Suppose that $g \colon \mathbb{C}^L \to \mathbb{C}$ is a meromorphic function that is regular in a neighbourhood of $z_i =\lambda_j$ for all $i,j\in\{1,\To,L\}$, and that $f\colon \mathbb{C}\to\mathbb{C}$ is analytic in a neighbourhood of the origin and satisfies $f(0)=0\neq f'(0)$. Then we can write
	\[
	\label{sum_via_oint} 
	\sum_{\sigma\in S_L} g(\lambda_{\sigma(1)},\To,\lambda_{\sigma(L)}) = f'(0)^L \oint\limits_{\Gamma_{\Vector{\lambda}}^{\, \times L}} \! \frac{\D^L \Vector{z}}{(2\pi\I)^L} \, \frac{\prod_{\substack{i,j=1 \\ i \neq j}}^L f(z_i-z_j)}{\prod_{i,j=1}^L f(z_i-\lambda_j)} \, g(\Vector{z}) \ ,
	\]
where each $z_i$ is integrated over the contour $\Gamma_{\Vector{\lambda}}$ consisting of small counter-clockwise oriented loops around all the $\lambda_j$, $1\leq j \leq L$.
\end{lemma}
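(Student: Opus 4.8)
The plan is to evaluate the right-hand side of \eqref{sum_via_oint} by residues and show that the resulting sum over choices of poles reproduces exactly the symmetrized sum on the left. The key observation is that, because $g$ is regular near each $z_i = \lambda_j$, the only poles of the integrand enclosed by $\Gamma_{\Vector{\lambda}}$ come from the denominator $\prod_{i,j} f(z_i - \lambda_j)$; since $f(0)=0 \neq f'(0)$, each factor $f(z_i - \lambda_j)$ contributes a \emph{simple} pole at $z_i = \lambda_j$. First I would compute the one-variable integral $\oint_{\Gamma_{\Vector\lambda}} \D z_i/(2\pi\I)$ for fixed values of the other variables: collapsing the contour onto the enclosed poles gives a sum over $j$ of residues at $z_i = \lambda_j$, each residue producing a factor $f'(0)^{-1}$ from $f(z_i-\lambda_j)$ together with the evaluation of the remaining integrand at $z_i = \lambda_j$.

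The main structural point is a \emph{distinctness} argument. After integrating all $L$ variables, the right-hand side becomes a sum over all functions $\phi\colon\{1,\To,L\}\to\{1,\To,L\}$, where $\phi(i)$ records which pole $\lambda_{\phi(i)}$ the variable $z_i$ is evaluated at. I claim that only the bijective $\phi$, i.e.\ the permutations, contribute. The reason is the numerator factor $\prod_{i\neq j} f(z_i - z_j)$: if two variables $z_i$ and $z_{i'}$ are sent to the \emph{same} pole $\lambda_{\phi(i)} = \lambda_{\phi(i')}$, then after evaluation the factor $f(z_i - z_{i'}) = f(\lambda_{\phi(i)} - \lambda_{\phi(i')}) = f(0) = 0$ kills that term. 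Hence every surviving term corresponds to an injective, and therefore bijective, assignment $\phi = \sigma \in S_L$.

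It then remains to check that the surviving term indexed by $\sigma \in S_L$ equals precisely $g(\lambda_{\sigma(1)},\To,\lambda_{\sigma(L)})$. For this I would track the prefactors carefully: the $L$-fold residue produces an overall $f'(0)^{-L}$ from the diagonal denominator factors $f(z_i - \lambda_{\sigma(i)})$, which exactly cancels the explicit $f'(0)^L$ in front of the integral. The off-diagonal denominator factors $\prod_{i\neq j} f(\lambda_{\sigma(i)} - \lambda_j)$ combine with the numerator $\prod_{i\neq j} f(\lambda_{\sigma(i)} - \lambda_{\sigma(j)})$; since $\sigma$ is a bijection these two products range over the same set of arguments $\{\lambda_a - \lambda_b : a\neq b\}$ and so cancel identically. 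What is left is simply $g(\lambda_{\sigma(1)},\To,\lambda_{\sigma(L)})$, as desired, and summing over $\sigma\in S_L$ recovers the left-hand side.

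The step I expect to require the most care is the bookkeeping of signs and the precise cancellation of the off-diagonal factors, together with confirming that iterating the one-variable residue calculation is legitimate, i.e.\ that after collapsing $z_i$ onto a pole the integrand in the remaining variables still has its only enclosed singularities at the points $\lambda_j$ (which holds because $g$ is regular there and the newly introduced factors $f(z_{i'} - \lambda_{\phi(i)})$ in the numerator are entire in the remaining variables). No genuine obstacle arises; the identity is essentially a packaging of the residue theorem, consistent with its being described as common lore.
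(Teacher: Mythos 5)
Your proof is correct, but it is organized differently from the paper's. The paper proceeds by induction on $L$: it applies the residue theorem to the single variable $z_L$, absorbs the resulting factor $\prod_{j=1}^{L-1} f(\lambda_i - z_j)$ into a modified integrand, invokes the induction hypothesis for the $L$ functions $g(\Vector{z})|_{z_L=\lambda_i}\prod_{j=1}^{L-1}f(\lambda_i-z_j)$, and finishes with the labelling bijection $\{1,\To,L\}\times S_{L-1}\xrightarrow{\ \sim\ } S_L$. You instead expand the full $L$-fold integral directly into a sum over all assignments $\phi\colon\{1,\To,L\}\to\{1,\To,L\}$ of variables to poles, observe that non-injective assignments die because the numerator factors $f(z_i-z_{i'})$ vanish at coincident poles (more precisely, once $z_{i'}$ has been collapsed onto $\lambda_{\phi(i')}$, the double numerator zero from $f(z_i-z_{i'})\,f(z_{i'}-z_i)$ beats the simple denominator zero, so the would-be pole of $z_i$ at $\lambda_{\phi(i')}$ is removable and contributes no residue at all), and then verify the exact cancellation $\prod_{i\neq j} f(\lambda_{\sigma(i)}-\lambda_{\sigma(j)}) = \prod_i\prod_{j\neq\sigma(i)} f(\lambda_{\sigma(i)}-\lambda_j)$ for bijective $\phi=\sigma$, with the diagonal residues supplying $f'(0)^{-L}$ against the explicit prefactor. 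Both arguments rest on the same mechanism, so the difference is one of packaging: your direct expansion makes the combinatorics (survival of bijections, global cancellation of off-diagonal factors) fully explicit at the cost of having to justify the iterated residue evaluation in all variables at once, which you do address; the paper's induction hides that bookkeeping inside the induction hypothesis and needs only the one-variable residue theorem at each step, at the cost of leaving the cancellations implicit. Note that the genericity you invoke is exactly the paper's working assumption $f(\lambda_i-\lambda_j)\neq 0$ for $i\neq j$, which your cancellation step also needs so that neither product contains a vanishing factor.
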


\begin{proof} 
Again one proceeds by induction on $L$. For $L=1$ the statement follows immediately from Cauchy's residue theorem for the single pole at $z=\lambda$, with residue~$1/f'(0)$.

For $L\geq 2$ we assume that $\Vector{\lambda} \in \mathbb{C}^L$ is such that $f(\lambda_i-\lambda_j)\neq 0$ for all $i\neq j$; in particular this means that the components~$\lambda_j$ should all be distinct. The inductive step entails applying the residue theorem to integrate over $z_L$, then employing the induction hypothesis~\eqref{sum_via_oint} to the $L$ functions $g(\Vector{z})|_{z_L=\lambda_i} \prod_{j=1}^{L-1}f(\lambda_i - z_j)$, and finally using $\{1,\To,L\} \times S_{L-1} \xrightarrow{\ \sim \ } S_L$ as in the proof of Proposition~\ref{prop:sol} above.
\end{proof}

It is now easy to prove our final result.
\begin{proof}[Proof of Theorem~\ref{thm:sol}]
The function $f(\lambda) \propto \vartheta_1(\I \lambda|\tau)$ and the function $g(\Vector{\lambda})$ determined by the second line in \eqref{sol_symm_sum} satisfy the assumptions of Lemma~\ref{lem:sum_via_oint}, allowing us to pass to repeated contour integrals.
\end{proof}

\begin{remark}\label{rmk:alternative_2}
If one proceeds along the lines of Remark~\ref{rmk:alternative_1} rather than Proposition~\ref{prop:red} one finds an equivalent symmetrized sum:
	\[
	\label{sol_symm_sum_alt} 
	\begin{aligned}
	Z(\lambda_1,\To,\lambda_L) = \ \sum_{\sigma \in S_L} & \left( \prod_{l=1}^{L-1} \left. M_1(\mu_1-\gamma;\lambda_{\sigma(l)},\lambda_{\sigma(l+1)},\To,\lambda_{\sigma(L)})\right|_{\mu_k\mapsto\mu_{k+l-1}} \right) Z(\lambda_{\sigma(L)}) \\
	& \qquad \times \prod_{L\geq i > j \geq 1} [ \lambda_{\sigma(i)} + \mu_j + \gamma \,, \lambda_{\sigma(i)} - \mu_j ] \\
	= \ \sum_{\sigma \in S_L} & \left( \prod_{l=1}^{L} [\gamma] \frac{[2\lambda_{\sigma(l)}]}{[2\lambda_{\sigma(l)}+\gamma]} \,  m_{L-l+1}(\lambda_{\sigma(L)},\lambda_{\sigma(L-1)},\To,\lambda_{\sigma(l)}) \right) \\
	& \qquad \times \prod_{L \geq i > j \geq 1} [ \lambda_{\sigma(i)} + \mu_j + \gamma \,, \lambda_{\sigma(i)} - \mu_j ] \ ,
	\end{aligned}
	\]
Here the $l$th factor of $M_1$ is understood to be given by \eqref{M0Mi} for length \mbox{$L-l+1$} and with all inhomogeneity parameters shifted as indicated. One can verify that this expression is equal to \eqref{sol_symm_sum}. The resulting representation of the partition function as a multiple contour integral does look slightly different from \eqref{sol}, and is immediately seen to generalize the result of \cite{GL_14}.
\end{remark}

\section{Conclusion}
\label{sec:conclusion}

In this work we consider the elliptic \textsc{sos} model with domain-wall boundaries and one reflecting end. Using the dynamical reflection algebra we derive a cyclic linear functional equation for the model's partition function (Theorem~\ref{thm:fun}). By construction this functional equation has a solution. We also prove its uniqueness, up to normalization, within the class of analytic functions on $\mathbb{C}^L$ (Theorem~\ref{thm:unique}). This allows us to obtain a closed expression for the partition function as a symmetrized sum (Proposition~\ref{prop:sol}), which can be concisely rewritten as a multiple contour integral (Theorem~\ref{thm:sol}). By taking the trigonometric limit analogous results are obtained for the trigonometric \textsc{sos} model on the same lattice. Along the way special attention is paid to the properties of our equation. Our results generalize those of Galleas and the author \cite{GL_14}.

The equation that we obtain is by no means the only one that can be found in this way. Indeed, as we mention in Section~\ref{sec:AF} our functional equation may be said to be of `type~\textsc{a}'. One can similarly derive functional equations of `type~\textsc{c}', which we expect to be significantly more involved (cf.\ \cite{Gal_12}), or of `type~\textsc{d}'. In each case the form is as in \eqref{FE} but with coefficients different from \eqref{M0Mi}. Since as in \cite{GL_14}, but unlike in e.g.~\cite{Gal_14, *Gal_13b}, our functional equation alone already determines the partition function (Theorem~\ref{thm:unique}) we do not pursue these other possibilities.

A common feature of our functional equation and others obtained via the algebraic-functional method \cite{Gal_10, Gal_11, Gal_12, Gal_13a, Gal_14, *Gal_13b, GL_14} is its structure \eqref{FE}: it can be described as a cyclic linear 
functional equation \cite{Kuc_64}. Of course the coefficients, presently \eqref{M0Mi}, differ from case to case. Another notable difference with the functional equation from \cite{Gal_13a} is that, due to the reflecting boundary, our equation does not determine the behaviour of the partition function on the dynamical parameter~$\theta$. This can already be anticipated at the algebraic level by comparing the $\theta$-dependence of the monodromy matrices in the defining relations \eqref{dyba} of the dynamical Yang-Baxter algebra~$\mathfrak{A} = A_{\tau,\gamma}(\mathfrak{sl_2})$ and \eqref{dREA} of the dynamical reflection algebra~$\mathfrak{B} \subseteq \mathfrak{A}$.

There are several problems left that we do not address in this work. These include the homogeneous limit, in which our expressions for the partition function might be easier to handle than the determinant formulas obtained by Filali and Kitanine \cite{FK_10, Fil_11}, the thermodynamic limit, which has been studied for the ordinary six-vertex model with the same boundary conditions in \cite{KR_15}, and the relation to combinatorial problems at the ice-like point~$\gamma = \I \pi/3$ \cite{Kup_95, Ros_09}. For the trigonometric \textsc{sos} model it should also be quite straightforward to convert our functional equation into a family of partial differential equations as in \cite{GL_14}, which might give new insights into the symmetries of the partition function. Another possible direction is to investigate if this method can also be applied to models associated with Lie algebras $\mathfrak{sl}_N$ of higher rank, or even Lie superalgebras. It would be interesting to return to such issues in the future.

\subsection*{Acknowledgements} 

I am indebted to W.~Galleas for bringing the problem tackled in this paper to my attention, and am grateful to him as well as G.~Arutyunov and A.~Henriques for useful discussions and comments on the manuscript. I also thank \textsc{desy} for the kind hospitality during the course of this work. This work is supported by the Netherlands Organization for Scientific Research (\textsc{nwo}) under the \textsc{vici} grant 680-47-602 and by the \textsc{erc} Advanced Grant 246974, \textit{Supersymmetry: a window to non-perturbative physics}. I further acknowledge the \textsc{d-itp} consortium, an \textsc{nwo} program funded by the Dutch Ministry of Education, Culture and Science (\textsc{ocw}).

\begin{center}
	$\cdot \ \ \cdot \ \ \cdot \ \ - \ \ - \ \ - \ \ \cdot \ \ \cdot \ \ \cdot$
\end{center}

\appendix

\section{Theta functions}

In this appendix we collect the properties of Jacobi theta functions and their higher-order versions that are needed for this work.

\subsection{The odd Jacobi theta function}
\label{sec:theta}

The elliptic solution of the dynamical Yang-Baxter equation~\eqref{dybe} features the function
	\[ \label{ell2} 
	\begin{aligned}
	f(\lambda) \coloneqq \ -\frac{\I}{2} \, \E^{-\I \pi \tau/4} \, \vartheta_1(\I\lambda|\tau) & = \frac{1}{2} \sum_{n \in \mathbb{Z}} (-1)^{n} \, \E^{\I n(n+1) \pi \tau} \, \E^{-(2n+1)\lambda} \\
	& = \sum_{n \in \mathbb{N}_{0} } (-1)^{n} \, \E^{\I n(n+1) \pi \tau} \sinh\!\big((2n+1)\lambda\big) \ .
	\end{aligned}
	\] 
The series in \eqref{ell2} converge absolutely when the elliptic nome $\E^{\I \pi \tau}$ satisfies $|\E^{\I \pi \tau}|< 1$. When $\E^{\I \pi \tau}$ tends to zero the theta function degenerates into a trigonometric function, and in fact $\lim_{\tau \to \I\infty} f (\lambda)= \sinh(\lambda)$. 

In this work we often use the short-hand $[\lambda]\coloneqq f(\lambda)$ and $[\lambda_1 \,, \lambda_2 \,, \cdots ] \coloneqq [\lambda_1 ] \, [\lambda_2] \cdots$. The properties that we need to work with the special function $f$ are as follows:
\begin{enumerate}
\item[i)] \textit{Analytic structure:} $f$ is entire (holomorphic on all of $\mathbb{C}$) and only has simple zeroes;
\item[ii)] \textit{Double quasiperiodicity:} 
	\[
	[\lambda+\I\pi] = -[\lambda] \ , \qquad [\lambda+\I\pi\tau] = -\E^{-2\lambda} \E^{-\I\pi\tau}  [\lambda] \ ;
	\]
\item[iii)] \textit{Oddness:} 
	\[
	[-\lambda]=-[\lambda] \ ;
	\] 
\item[iv)] \textit{Addition rule:}
	\[ \label{addition_rule}
	\begin{aligned}
	& [\lambda_1+\lambda_3 \,, \lambda_1-\lambda_3 \,, \lambda_2+\lambda_4 \,, \lambda_2-\lambda_4] - [\lambda_1+\lambda_4 \,, \lambda_1-\lambda_4 \,, \lambda_2+\lambda_3 \,, \lambda_2-\lambda_3 ] \\
	& \qquad\qquad = [\lambda_1+\lambda_2 \,, \lambda_1-\lambda_2 \,, \lambda_3+\lambda_4 \,, \lambda_3-\lambda_4] \ .
	\end{aligned}
	\]
\end{enumerate}
Properties (i)--(iii) readily follow from the series \eqref{ell2}. The zeroes of $f$ form a lattice, $\I\pi\mathbb{Z} + \I\pi\tau\mathbb{Z} \subseteq \mathbb{C}$. Using Liouville's theorem in complex analysis it is easy to see that (i)--(iii) imply property (iv). Reversely, property (iv) implies (iii) and, up to a constant factor, (i) and (iv) uniquely characterize $f$ along with its trigonometric and rational limiting cases.

Since we occasionally encounter $f'(0)$ let us also mention the identities $\vartheta_1'(0|\tau) = \vartheta_2(0|\tau) \, \vartheta_3(0|\tau) \, \vartheta_4(0|\tau) = 2 \, \E^{\I \pi \tau/4} \, (\E^{2 \I \pi \tau}; \E^{2 \I \pi \tau})_\infty^{\, 3}$, with $(p;q)_\infty \coloneqq \prod_{n\in\mathbb{N}_0} (1-p\,q^n)$ the $q$-Pochhammer symbol. 
More about Jacobi theta functions can be found in \cite{MM_99, *WW_1927}.

\subsection{Higher-order theta functions}
\label{sec:higher_order}

Fix $\tau\in\mathbb{C}$ with $\Im(\tau)>0$ and consider the function $f$ from \eqref{ell2}. For $N\in\mathbb{N}_0$ and $t\in\mathbb{C}$ one defines a \emph{theta function of order~$N$ and norm~$t$} to be a complex function $F(\lambda)$ for which there exist numbers $\Omega,t_1,\To,t_N\in\mathbb{C}$ with $\sum_{n=1}^N t_n = t$ such that $F$ can be written in the factorized form 
	\[
	F(\lambda) = \Omega \prod_{n=1}^N [\lambda + t_n] = \Omega \, [\lambda + t_1 \,, \To , \lambda + t_N] \ .
	\]
Let $\Theta_{N\mspace{-1mu},\mspace{1mu}t}$ be the set of theta functions of order~$N$ and norm~$t$ with respect to the variable~$\lambda$. A classic result \cite[\textsection15]{Web_1891} is that $F(\lambda) \in \Theta_{N\mspace{-1mu},\mspace{1mu}t}$ if and only if $F(\lambda)$ is entire and doubly quasiperiodic with quasiperiods $\I\pi$ and $\I\pi\tau$ such that
	\[ \label{quasiper_orderN}
	F(\lambda + \I \pi) = (-1)^N F(\lambda) \ , \qquad F(\lambda + \I \pi \tau) = \E^{-2 t} \, (-\E^{-2\lambda} \E^{-\I\pi\tau})^N F(\lambda) \ .
	\]
For example one can verify that $F(\lambda) \coloneqq [n\lambda+\gamma]$ lies in $\Theta_{n^2\mspace{-1mu},\mspace{1mu}n\gamma}$ with respect to $\lambda$. As a corollary of~\eqref{quasiper_orderN} we see that $\Theta_{N\mspace{-1mu},\mspace{1mu}t}$ is a vector space: any linear combination of functions in $\Theta_{N\mspace{-1mu},\mspace{1mu}t}$ also satisfies \eqref{quasiper_orderN}. This factorization property for higher-order theta functions is very useful. 

For completeness we also mention that when $N\geq 2$ the dimension of $\Theta_{N\mspace{-1mu},\mspace{1mu}t}$ is equal to $N$, and that there is an interpolation formula expressing $F(\lambda) \in \Theta_{N\mspace{-1mu},\mspace{1mu}t}$ in terms of its values at $N$ generic points $\lambda_n \in \mathbb{C}$: 
	\[
	F(\lambda) = \sum_{n=1}^N  F(\lambda_n)  \frac{[\lambda - \lambda_n + t + \sum_{m=1}^N \lambda_m]}{[t + \sum_{m=1}^N \lambda_m]} \prod_{\substack{m=1\\m\neq n}}^N \frac{[\lambda-\lambda_m]}{[\lambda_n-\lambda_m]} \ .
	\]
Further details can be found in \cite{FS_99,PRS_08}.

\section{Vacuum eigenvalues}
\label{sec:vevs}

In this appendix we ascertain that the pseudovacua \eqref{zero} are eigenvectors of $\mathcal{A}(\lambda)$ and $\tilde{\mathcal{D}}(\lambda)$ and derive the expressions quoted in \eqref{Lambda} for the eigenvalues \eqref{Lambda_def}.

In terms of the ordinary monodromy matrices \eqref{abcd} and the reflection matrix \eqref{kmat} the entries \eqref{ABCD} of the double-row monodromy matrix \eqref{full_mono} read
	\[ \label{ABCD_via_abcd}
	\begin{aligned}
	\mathcal{A}(\lambda) & = k_{+}(\lambda,\theta) \, A(\lambda,\theta)\, \bar{A}(\lambda,\theta)  + k_{-}(\lambda,\theta) \, B(\lambda,\theta)\, \bar{C}(\lambda,\theta) \ , \\
	\mathcal{B}(\lambda) & = k_{+}(\lambda,\theta) \, A(\lambda,\theta)\, \bar{B}(\lambda,\theta)  + k_{-}(\lambda,\theta)  \,B(\lambda,\theta)\, \bar{D}(\lambda,\theta) \ , \\
	\mathcal{C}(\lambda) & = k_{+}(\lambda,\theta) \, C(\lambda,\theta)\, \bar{A}(\lambda,\theta)  + k_{-}(\lambda,\theta) \,D(\lambda,\theta)\, \bar{C}(\lambda,\theta) \ , \\
	\mathcal{D}(\lambda) & = k_{+}(\lambda,\theta) \, C(\lambda,\theta)\, \bar{B}(\lambda,\theta)  + k_{-}(\lambda,\theta) \,D(\lambda,\theta)\, \bar{D}(\lambda,\theta) \ .
	\end{aligned}
	\]
Thus we first compute the action of the generators of the dynamical Yang-Baxter algebra~$\mathfrak{A}$ on the vectors \eqref{zero}. Due to \eqref{rmat} and \eqref{weights}, $\ket{0}$ is a simultaneous eigenvector of $A,C,D$ and $\bar A,\bar C,\bar D$, with corresponding eigenvalues
	\[ \label{acd_action}
	\begin{aligned} 
	& \Lambda_A(\lambda,\theta) = \prod_{j=1}^L [\lambda-\mu_j+\gamma] \ , &&  \Lambda_{\bar{A}}(\lambda,\theta) = \prod_{j=1}^L [\lambda+\mu_j+\gamma] \ , \\ 
	& \Lambda_C(\lambda,\theta) = 0 \ ,  &&  \Lambda_{\bar{C}}(\lambda,\theta) = 0 \ ,  \\
	& \Lambda_D(\lambda,\theta) = \frac{[\theta+\gamma]}{[\theta-(L-1)\gamma]} \prod_{j=1}^L [\lambda-\mu_j] \ ,  && \Lambda_{\bar{D}}(\lambda,\theta) = \frac{[\theta-L\gamma]}{[\theta]} \prod_{j=1}^L [\lambda+\mu_j] \ .
	\end{aligned}
	\]
Likewise $\bra{\bar{0}}$ is an eigenvector of these operators, with eigenvalues
	\[ \label{acd_action2}
	\begin{aligned}
	& \bar\Lambda_A(\lambda,\theta) = \frac{[\theta-\gamma]}{[\theta+(L-1)\gamma]} \prod_{j=1}^L [\lambda-\mu_j] \ , && \bar\Lambda_{\bar{A}}(\lambda,\theta) = \frac{[\theta+ L\gamma]}{[\theta]} \prod_{j=1}^L [\lambda+\mu_j] \ , \\
	& \bar\Lambda_C(\lambda,\theta) = 0 \ ,  && \bar\Lambda_{\bar{C}}(\lambda,\theta) = 0 \ , \\
	& \bar\Lambda_D(\lambda,\theta) = \prod_{j=1}^L [\lambda-\mu_j+\gamma] \ , && \bar\Lambda_{\bar{D}}(\lambda,\theta) = \prod_{j=1}^L [\lambda+\mu_j+\gamma] \ . 
	\end{aligned}
	\]
Combining these with \eqref{ABCD_via_abcd} we directly find that $\Lambda_{\mathcal{A}}$ is given by the expression in \eqref{Lambda}. 
In contrast, neither pseudovacua is an eigenvector of $B$ or $\bar{B}$. This prevents a simple evaluation of $C\, \bar{B} \ket{0}$ needed for $\Lambda_{\mathcal{D}}$ in view of \eqref{ABCD_via_abcd}. This issue can be circumvented by using 
the relation \eqref{dyba_ttbar}, and \eqref{ABCD_ice}, to rewrite the problematic term as
	\[ \label{cbb}
	\begin{aligned}
	C(\lambda,\theta) \, \bar{B}(\lambda,\theta) = \ & \bar{B}(\lambda,\theta+\gamma) \, C(\lambda,\theta+\gamma) \\ & + \frac{[\gamma]}{[2\lambda+\gamma]} \, \bigg( \frac{[\theta-\gamma(H-1)+2\lambda]}{[\theta-\gamma(H-1)]} \, \bar{A}(\lambda,\theta+\gamma) \, A(\lambda,\theta+\gamma) \\ 
	& \hphantom{+ \frac{[\gamma]}{[2\lambda+\gamma]} \, \bigg( } \qquad - \frac{[\theta+\gamma+2\lambda]}{[\theta+\gamma]} \, D (\lambda,\theta) \, \bar{D}(\lambda,\theta) \bigg) \ .
	\end{aligned}
	\]
Together with \eqref{ABCD_via_abcd} and \eqref{acd_action} this yields the result for $\Lambda_{\tilde{\mathcal{D}}}$ from \eqref{Lambda}, where $\tilde{\mathcal{D}}$ was defined in \eqref{D_tilde} and we also used the addition rule \eqref{addition_rule} to rewrite the prefactor.

For $\bar\Lambda_{\mathcal{A}}$ we proceed analogously. The evaluation of $\bra{\bar{0}} B \, \bar{C}$ is avoided by exploiting the 
following relation contained in \eqref{dyba_ttbar}:
	\[ \label{bcc}
	\begin{aligned}
	B(\lambda,\theta) \, \bar{C}(\lambda,\theta) = \ & \bar{C}(\lambda,\theta-\gamma) \, B(\lambda,\theta-\gamma) \\
	& + \frac{[\gamma]}{[2\lambda+\gamma]} \bigg( \frac{[\theta-\gamma(H+1)-2\lambda]}{[\theta-\gamma(H+1)]} \, \bar{D}(\lambda,\theta-\gamma) \, D(\lambda,\theta-\gamma) \\ 
	& \hphantom{ + \frac{[\gamma]}{[2\lambda+\gamma]} \bigg( }\qquad - \frac{[\theta-\gamma-2\lambda]}{[\theta-\gamma]} \, A(\lambda,\theta) \, \bar{A}(\lambda,\theta) \bigg) \ .
	\end{aligned}
	\]
In combination with \eqref{ABCD_via_abcd} and \eqref{acd_action2} this establishes the last expression in \eqref{Lambda}, again invoking \eqref{addition_rule} for the prefactor.

%

\section{Properties of solutions}

In this appendix we derive various properties of solutions of our functional equation.

\subsection{Polynomial structure}
\label{sec:pol}

Here we investigate the analytic structure of any given solution of the functional equation~\eqref{FE_v2}. In view of Lemma~\ref{lem:pol_1} we expect a simple pole at ${-\theta}-\zeta$. Part~(i) of Lemma~\ref{lem:sol_properties} states that up to these poles, the solution is a higher-order theta function whose order and norm in the spectral parameters are determined by the functional equation.

\begin{proof}[Proof of Lemma~\ref{lem:sol_properties}~(i)]
Define $\bar{Z}$ as in \eqref{Zbar_2} and for each $\nu \in \{ 0,1,\To,L \}$ set
	\[
	\begin{aligned}
	\bar{M}_\nu(\lambda_0;\Vector{\lambda}) \coloneqq \ & [\theta+(L-1)\gamma \,, \theta+\zeta+\lambda_\nu] \,  M_\nu(\lambda_0;\Vector{\lambda})  \\ 
		& \qquad\qquad \times \prod_{\rho=0}^L [2\lambda_\rho+\gamma] \prod_{j = \rho+1}^L [\lambda_j - \lambda_\rho \,, \lambda_j + \lambda_\rho + \gamma]
	\end{aligned}
	\]
to remove all denominators. (These coefficients have the symmetry properties from Lemma~\ref{lem:M0Mi_symm}.) If $Z$ solves \eqref{FE_v2} then $\bar{Z}$ obeys the functional equation
	\[
	\label{FE_bar}
	\sum_{\nu = 0}^L \bar{M}_\nu(\lambda_0;\Vector{\lambda}) \, \bar{Z}(\lambda_0,\lambda_1,\To,\widehat{\lambda_\nu},\To,\lambda_L) = 0 \ . 
	\] 

First we focus on the dependence on $\lambda_0$. The explicit form of the coefficients, see \eqref{Lambda} and \eqref{M0Mi}, show that $\bar{M}_0$ has no poles in $\lambda_0$ whilst (for generic $\Vector{\lambda}$) the $\bar{M}_i$ only have simple zeroes in~$\lambda_0$. Therefore $\bar{Z}$ is entire in $\lambda_0$. Next one checks that $\bar{M}_0$ is a theta function of order~$4L+6$ and norm~$(L+2)\gamma-\theta$ whereas each $\bar{M}_i$ is a theta function of order~$2L+4$ and norm~$3\gamma-\theta$. Thus comparing the $\lambda_0$-dependence of the terms in \eqref{FE_bar} we conclude that $\bar{Z}$ must be a theta function of order~$2(L+1)$ and norm~$(L-1)\gamma$ in $\lambda_0$. 

One proceeds likewise for the dependence on $\lambda_i$ for any fixed $1\leq i\leq L$. With respect to this variable $\bar{M}_i$ is a theta function of order~$4L+6$ and norm~$(2L+1)\gamma$ while each $\bar{M}_\nu$ with $\nu\neq i$ is a theta function of order~$2L+4$ and norm~$(L+2)\gamma$. We conclude that $Z$ is also a theta function of order~$2(L+1)$ and norm~$(L-1)\gamma$ in the $\lambda_i$.
\end{proof}

It is worth pointing out that, unlike for the functional equation in \cite{Gal_12, Gal_13a}, similarly focussing on the dependence on the dynamical parameter~$\theta$ in \eqref{FE_bar} does not give us more information: the $\bar{M}_\nu$ are theta functions of the same order
and norm 
in~$\theta$ 
for each $\nu\in\{0,1,\To,L\}$.
This is in accordance with our notation, in which we do not indicate the dependence on $\theta$ of the coefficients or the solution (or partition function).

\subsection{Symmetric solutions}
\label{sec:symm}

In the proof of Theorem~\ref{thm:fun} we used the symmetry of the partition function $\mathcal{Z}$ from \eqref{PF}, cf.~Lemma~\ref{lem:symm_1}, to derive our functional equation. Here we show that although \eqref{FE_v2} is not manifestly symmetric in \emph{all} spectral parameters, cf.\ the end of Section~\ref{sec:AF}, reversely any analytic solution to \eqref{FE_v2} is symmetric.

\begin{proof}[Proof of Lemma~\ref{lem:sol_properties}~(ii)]
Consider the functional equation \eqref{FE_v2} in the limit $\lambda_0 \to \lambda_i$ for some $1 \leq i \leq L$. When $Z$ is analytic so that part~(i) of Lemma~\ref{lem:sol_properties} applies, \eqref{M0Mi} shows that the only singularities in \eqref{FE_v2} are the simple poles in $M_0$ and $M_i$, with opposite residues
	\[ \label{res}
	\begin{aligned}
	\Res_{\lambda_0 = \lambda_i\vphantom{^k}} M_0(\lambda_0;\Vector{\lambda}) & = {- \Res_{\lambda_0 = \lambda_i\vphantom{^k}} M_i(\lambda_0;\Vector{\lambda})} \\
	& = \frac{[2\lambda_i \,, \gamma]}{f'(0) \, [2\lambda_i+\gamma]} \Lambda_{\mathcal{A}} (\lambda_i) \prod_{\substack{j=1 \\ j\neq i}}^{L} \frac{[\lambda_j-\lambda_i+\gamma \,, \lambda_j+\lambda_i]}{[\lambda_j-\lambda_i \,, \lambda_j+\lambda_i+\gamma]} \ .
	\end{aligned}
	\]
Computing the residue of \eqref{FE_v2} as $\lambda_0 \to \lambda_i$ we thus obtain
	\[ \label{res1}
	\big( Z(\Vector{\lambda}) - Z(\Vector{\lambda}_{\sigma_i}) \big) \, \Res_{\lambda_0 = \lambda_i\vphantom{^k}} M_0(\lambda_0;\Vector{\lambda}) = 0 \ ,
	\]
where $\Vector{\lambda}_{\sigma_i} = (\lambda_i,\lambda_1,\To,\lambda_{i-1},\lambda_{i+1},\To,\lambda_L)$ is the image of $\Vector{\lambda}$ under the (right) action of the permutation $\sigma_i = (i,i-1,\To,2,1) \in S_L$: its $j$th component is $(\lambda_{\sigma_i})_j = \lambda_{\sigma_i(j)}$. Since \eqref{res} is generically nonzero and $Z$ is analytic, \eqref{res1} implies that $Z(\Vector{\lambda}) = Z(\Vector{\lambda}_{\sigma_i})$. Hence $Z$ is invariant under the cycle $\sigma_i$ of length~$i$. 

Now any two cycles of length $2$ and $L$ already generate the full permutation group~$S_L$. Using the above argument for $i=2$ and $i=L$ we may thus conclude that $Z$ is invariant under the action of $S_L$ permuting the variables $\lambda_j$, as we wanted to show.
\end{proof}

\subsection{Crossing symmetry}
\label{sec:crossing}

The final part of Lemma~\ref{lem:sol_properties} states that all solutions of the functional equation enjoy crossing symmetry. The proof is analogous to that of part~(ii) above.

\begin{proof}[Proof of Lemma~\ref{lem:sol_properties}~(iii)]
We investigate \eqref{FE_v2} as $\lambda_0 \to {-\lambda_i}-\gamma$ for some $1 \leq i \leq L$. If the solution $Z$ is analytic, by part~(i) of Lemma~\ref{lem:sol_properties} the sole singularities in \eqref{FE_v2} are again the simple poles in $M_0$ and $M_i$, with residues given by
	\[ \label{res_crossing}
	\begin{aligned}
	\Res_{\lambda_0 = {-\lambda_i}-\gamma\vphantom{^k}} M_0(\lambda_0;\Vector{\lambda}) & = { -\frac{ [2\lambda_i + 2\gamma \,, \theta + \zeta + \lambda_i] }{ [2\lambda_i \,, \theta + \zeta - \lambda_i - \gamma] } } \, \Res_{\lambda_0 = {-\lambda_i}-\gamma\vphantom{^k}} M_i(\lambda_0;\Vector{\lambda}) \ , \\
	\Res_{\lambda_0 = {-\lambda_i}-\gamma\vphantom{^k}} M_i(\lambda_0;\Vector{\lambda}) & = { -\frac{[ \gamma \,, \theta - (L-1)\gamma ]}{ f'(0) \,  [\theta - L \gamma] } } \, \Lambda_{\tilde{\mathcal{D}}}(\lambda_i) \prod_{\substack{j=1 \\ j\neq i}}^{L} \frac{[\lambda_i-\lambda_j+\gamma \,, \lambda_i+\lambda_j + 2\gamma]}{[\lambda_i-\lambda_j \,, \lambda_i+\lambda_j+\gamma]} \ .
	\end{aligned}
	\]
Here we used \eqref{crossing_Lambda}. Using the symmetry of $Z$ from part~(ii) of Lemma~\ref{lem:sol_properties} the residue of \eqref{FE_v2} as $\lambda_0 \to {-\lambda_i}-\gamma$ becomes
	\[ \label{res1_crossing}
	\left( { -\frac{ [2\lambda_i + 2\gamma \,, \theta + \zeta + \lambda_i] }{ [2\lambda_i \,, \theta + \zeta - \lambda_i - \gamma] } } \, Z(\Vector{\lambda}) - Z(\Vector{\lambda})\big|_{\lambda_i \mapsto {-\lambda_i}-\gamma} \right) \, \Res_{\lambda_0 = \lambda_i\vphantom{^k}} M_i(\lambda_0;\Vector{\lambda}) = 0 \ .
	\]
As \eqref{res_crossing} is generically nonzero and $Z$ is analytic, \eqref{res1_crossing} implies that $Z$ is invariant under crossing $\lambda_i \mapsto {-\lambda_i}-\gamma$, which is what we had to demonstrate.
\end{proof}

\subsection{Special zeroes}
\label{sec:zeroes}

Let $L\geq 2$. In this appendix we argue that for any $1\leq k\leq L$, a solution $Z$ of the functional equation~\eqref{FE_v2} vanishes when $\lambda_{L-1}=\lambda_+$ and $\lambda_L = \lambda_-$, for any choice of $\lambda_\pm \in \{ {\pm\mu_k} - \gamma , {\mp\mu_k}\}$. This is the content of Lemma~\ref{lem:zeroes}.

Let us use shorthand-notation analogous to that in the proof of Proposition~\ref{prop:red}. It is easy to see that $M_{L-1}(\lambda_0;\lambda_{1,\To,L-2,+,-}) = M_L(\lambda_0;\lambda_{1,\To,L-2,+,-}) = 0$ when $\lambda_\pm$ are as above. Setting $\lambda_{L-1}=\lambda_+$ and $\lambda_L = \lambda_-$ in \eqref{FE_v2} thus yields
	\[
	\label{FE**}
	\sum_{\nu=0}^{L-2} M_\nu(\lambda_0;\lambda_{1,\To,L-2,+,-}) \, Z(\lambda_{1,\To,\hat{\nu},\To,L-2,+,-}) = 0 \ .
	\]
Like we described at the end of Section~\ref{sec:AF} one can swap $\lambda_0\leftrightarrow\lambda_j$ in \eqref{FE**} to obtain $L-1$ further functional equations of a similar form. It follows that
	\[
	\begin{pmatrix}
	M_{0,0}(\lambda_{0,\To,L-2,+,-}) & \cdots & M_{0,L-2}(\lambda_{0,\To,L-2,+,-}) \\
	\vdots & \ddots & \vdots \\
	M_{L-2,0}(\lambda_{0,\To,L-2,+,-}) & \cdots & M_{L-2,L-2}(\lambda_{0,\To,L-2,+,-})
	\end{pmatrix}
	\begin{pmatrix}
	Z(\lambda_{1,\To,L-2,+,-}) \\ \vdots \\ Z(\lambda_{0,\To,L-3,+,-})
	\end{pmatrix} = 0 \ ,
	\]
where the matrix entries are given in \eqref{Mrhonu}.

Our task is to show that the determinant of this matrix is nonzero. Although we have not been able to prove this rigorously, analytic and numerical inspection for $L\leq 9$ reveal that this is indeed the case. We see no reason to doubt that this is the case for larger $L$ as well.


\addcontentsline{toc}{section}{References}
\bibliography{references2}

\begin{thebibliography}{10}
\ifx\href\asklfhas\newcommand{\href}[2]{#2}\fi
\ifx\arxivref\asklfhas\newcommand{\arxivref}[2]{\href{http://arxiv.org/abs/#1}{#2}}\fi
\ifx\doiref\asklfhas\newcommand{\doiref}[2]{\href{http://dx.doi.org/#1}{#2}}\fi
\raggedright
\small
\parskip 0pt

\bibitem{Ben_84}
P.~Bennema,
\textit{``Spiral growth and surface roughening: developments since Burton,
  Cabrera and Frank''},
\textsf{\doiref{10.1016/0022-0248(84)90027-7}{J.~Cryst.~Growth~69,~182~(1984)}}.

\bibitem{GN_84}
J.-L.~Gervais and A.~Neveu,
\textit{``Novel triangle relation and abscense of tachyons in Liouville string
  field theory''},
\textsf{\doiref{10.1016/0550-3213(84)90469-3}{Nucl.~Phys.~B~B238,~125~(1984)}}.

\bibitem{Fel_94b}
G.~Felder,
\textit{``Elliptic quantum groups''},
\texttt{\arxivref{hep-th/9412207}{hep-th/9412207}},
in: \textit{``Proc. ICMP, Paris 1994''},
ed.: D.~Iagolnitzer,
Intern. Press Boston (1995),
211-218p.

\bibitem{EF_98}
B.~Enriquez and G.~Felder,
\textit{``Elliptic Quantum Groups $E_{\tau, \eta}(\mathfrak{sl}_2)$ and
  Quasi-Hopf Algebras''},
\textsf{\doiref{10.1007/s002200050407}{Comm.~Math.~Phys.~195,~651~(1998)}},
\texttt{\arxivref{q-alg/9703018}{q-alg/9703018}}.

\bibitem{Bax_73}
R.~J.~Baxter,
\textit{``Eight-vertex model in lattice statistics and one-dimensional
  anisotropic Heisenberg chain. II. Equivalence to a generalized ice-type
  model''},
\textsf{\doiref{10.1016/0003-4916(73)90440-5}{Ann.~Phys.~76,~25~(1973)}}.

\bibitem{TF_79}
L.~A.~Takhtadzhan and L.~D.~Faddeev,
\textit{``The quantum method of the inverse problem and the Heisenberg XYZ
  model''},
\textsf{\doiref{10.1070/RM1979v034n05ABEH003909}{Russ.~Math.~Surv.~34,~11~(1979)}}.

\bibitem{FV_96b}
G.~Felder and A.~Varchenko,
\textit{``Algebraic Bethe ansatz for the elliptic quantum group $E_{\tau ,
  \eta} (sl_2)$''},
\textsf{\doiref{10.1016/S0550-3213(96)00461-0}{Nucl.~Phys.~B~480,~485~(1996)}},
\texttt{\arxivref{q-alg/9605024}{q-alg/9605024}}.

\bibitem{ACF_97}
G.~E.~Arutyunov, L.~O.~Chekhov and S.~A.~Frolov,
\textit{``$R$-Matrix quantization of the elliptic Ruijsenaars-Schneider
  model''},
\textsf{\doiref{10.1007/BF02634266}{Theor.~Math.~Phys.~111, Issue
  2,~536~(1997)}},
\texttt{\arxivref{q-alg/9612032}{q-alg/9612032}}.

\bibitem{Kor_82}
V.~E.~Korepin,
\textit{``Calculation of norms of Bethe wave functions''},
\textsf{\doiref{10.1007/BF01212176}{Comm.~Math.~Phys.~86,~391~(1982)}}.

\bibitem{Kup_95}
G.~Kuperberg,
\textit{``Another proof of the alternating sign matrix conjecture''},
\textsf{\doiref{10.1155/S1073792896000128}{Inter.~Math.~Res.~Notes~1996,~139~(1996)}},
\texttt{\arxivref{math/9712207}{math/9712207}}.

\bibitem{Ros_09}
H.~Rosengren,
\textit{``An Izergin-Korepin type identity for the 8VSOS model with
  applications to alternating sign matrices''},
\textsf{\doiref{10.1016/j.aam.2009.01.003}{Adv.~Appl.~Math.~43,~137~(2009)}},
\texttt{\arxivref{0801.1229}{arxiv:0801.1229}}.

\bibitem{Ize_87}
A.~G.~Izergin,
\textit{``Partition function of the six-vertex model in finite volume''},
\textsf{Sov.~Phys.~Dokl.~32,~878~(1987)}.

\bibitem{Tsu_98}
O.~Tsuchiya,
\textit{``Determinant formula for the six-vertex model with reflecting end''},
\textsf{\doiref{10.1063/1.532606}{J.~Math.~Phys.~39,~5946~(1998)}},
\texttt{\arxivref{solv-int/9804010}{solv-int/9804010}}.

\bibitem{FK_10}
G.~Filali and N.~Kitanine,
\textit{``The partition function of the trigonometric SOS model with a
  reflecting end''},
\textsf{\doiref{10.1088/1742-5468/2010/06/L06001}{J.~Stat.~Mech.~06,~L06001~(2010)}},
\texttt{\arxivref{1004.1015}{arxiv:1004.1015}}.

\bibitem{Fil_11}
G.~Filali,
\textit{``Elliptic dynamical reflection algebra and partition function of SOS
  model with reflecting end''},
\textsf{\doiref{10.1016/j.geomphys.2011.01.002}{J.~Geom.~Phys.~61,~1789~(2011)}},
\texttt{\arxivref{1012.0516}{arxiv:1012.0516}}.

\bibitem{KZ_00}
V.~Korepin and P.~Zinn-Justin,
\textit{``Thermodynamic limit of the six-vertex model with domain wall boundary
  conditions''},
\textsf{\doiref{10.1088/0305-4470/33/40/304}{J.~Phys.~A:~Math.~Gen.~33,~7053~(2000)}},
\texttt{\arxivref{cond-mat/0004250}{cond-mat/0004250}}.

\bibitem{TRK_15a}
T.~S.~Tavares, G.~A.~P.~Ribeiro and V.~E.~Korepin,
\textit{``The entropy of the six-vertex model with variety of different
  boundary conditions''},
\texttt{\arxivref{1501.02818}{arxiv:1501.02818}}.

\bibitem{TRK_15b}
T.~S.~Tavares, G.~A.~P.~Ribeiro and V.~E.~Korepin,
\textit{``Influence of boundary conditions on bulk properties of six-vertex
  model''},
\texttt{\arxivref{1509.06324}{arxiv:1509.06324}}.

\bibitem{ICK_92}
A.~G.~Izergin, D.~A.~Coker and V.~E.~Korepin,
\textit{``Determinant formula for the six-vertex model''},
\textsf{\doiref{10.1088/0305-4470/25/16/010}{J.~Phys.~A:~Math.~Gen.~25,~4315~(1992)}}.

\bibitem{PRS_08}
S.~Pakuliak, V.~Rubtsov and A.~Silantyev,
\textit{``SOS model partition function and the elliptic weight function''},
\textsf{\doiref{10.1088/1751-8113/41/29/295204}{J.~Phys.~A:~Math.~Gen.~41,~295204~(2008)}},
\texttt{\arxivref{0802.0195}{arxiv:0802.0195}}.

\bibitem{Gal_12}
W.~Galleas,
\textit{``Multiple integral representation for the trigonometric SOS model with
  domain wall boundaries''},
\textsf{\doiref{10.1016/j.nuclphysb.2012.01.006}{Nucl.~Phys.~B~858,~117~(2012)}},
\texttt{\arxivref{1111.6683}{arxiv:1111.6683}}.

\bibitem{Gal_13a}
W.~Galleas,
\textit{``Refined functional relations for the elliptic SOS model''},
\textsf{\doiref{10.1016/j.nuclphysb.2012.10.014}{Nucl.~Phys.~B~867,~855~(2013)}},
\texttt{\arxivref{1207.5283}{arxiv:1207.5283}}.

\bibitem{GL_14}
W.~Galleas and J.~Lamers,
\textit{``Reflection algebra and functional equations''},
\textsf{\doiref{10.1016/j.nuclphysb.2014.07.016}{Nucl.~Phys.~B~886,~1003~(2014)}},
\texttt{\arxivref{1405.4281}{arxiv:1405.4281}}.

\bibitem{Gal_08}
W.~Galleas,
\textit{``Functional relations from the Yang-Baxter algebra: Eigenvalues of the
  XXZ model with non-diagonal twisted and open boundary conditions''},
\textsf{\doiref{10.1016/j.nuclphysb.2007.09.011}{Nucl.~Phys.~B~790,~524~(2008)}},
\texttt{\arxivref{0708.0009}{arxiv:0708.0009}}.

\bibitem{Gal_10}
W.~Galleas,
\textit{``Functional relations for the six-vertex model with domain wall
  boundary conditions''},
\textsf{\doiref{10.1088/1742-5468/2010/06/P06008}{J.~Stat.~Mech.~06,~P06008~(2010)}},
\texttt{\arxivref{1002.1623}{arxiv:1002.1623}}.

\bibitem{Gal_11}
W.~Galleas,
\textit{``A new representation for the partition function of the six-vertex
  model with domain wall boundaries''},
\textsf{\doiref{10.1088/1742-5468/2011/01/P01013}{J.~Stat.~Mech.~01,~P01013~(2011)}},
\texttt{\arxivref{1010.5059}{arxiv:1010.5059}}.

\bibitem{Gal_14}
W.~Galleas,
\textit{``Scalar product of Bethe vectors from functional equations''},
\textsf{\doiref{10.1007/s00220-014-1976-2}{Comm.~Math.~Phys.~329,~141~(2014)}},
\texttt{\arxivref{1211.7342}{arxiv:1211.7342}}.

\bibitem{Gal_13b}
W.~Galleas,
\textit{``Functional relations and the Yang-Baxter algebra''},
\textsf{\doiref{10.1088/1742-6596/474/1/012020}{J.~Phys.:~Conf.~Ser.~474,~012020~(2013)}},
\texttt{\arxivref{1312.6816}{arxiv:1312.6816}}.

\bibitem{Lie_67}
E.~H.~Lieb,
\textit{``Residual entropy of square lattice''},
\textsf{\doiref{10.1103/PhysRev.162.162}{Phys.~Rev.~162,~162~(1967)}}.

\bibitem{Bei_77}
H.~van~Beijeren,
\textit{``Exactly Solvable Model for the Roughening Transition of a Crystal
  Surface''},
\textsf{\doiref{10.1103/PhysRevLett.38.993}{Phys.~Rev.~Lett.~38,~993~(1977)}}.

\bibitem{FV_96a}
G.~Felder and A.~Varchenko,
\textit{``On representations of the elliptic quantum group $E_{\tau , \eta}
  (sl_2)$''},
\textsf{\doiref{10.1007/BF02101296}{Comm.~Math.~Phys.~181,~741~(1996)}},
\texttt{\arxivref{q-alg/9601003}{q-alg/9601003}}.

\bibitem{BPO_96}
R.~E.~Behrend, P.~A.~Pearce and D.~O'Brien,
\textit{``Interaction-round-a-face models with fixed boundary conditions: The
  ABF fusion hierarchy''},
\textsf{\doiref{10.1007/BF02179576}{J.~Stat.~Phys.~84,~1~(1996)}},
\texttt{\arxivref{hep-th/9507118}{hep-th/9507118}}.

\bibitem{Skl_88}
E.~K.~Sklyanin,
\textit{``Boundary conditions for integrable quantum systems''},
\textsf{\doiref{10.1088/0305-4470/21/10/015}{J.~Phys.~A:~Math.~Gen.~21,~2375~(1988)}}.

\bibitem{MS_96}
J.~M.~Maillet and J.~Sanchez~de~Santos,
\textit{``Drinfel'd twists and algebraic Bethe ansatz''},
\texttt{\arxivref{q-alg/9612012}{q-alg/9612012}},
in: \textit{``L. D. Faddeev's Seminar on Mathematical Physics''},
ed.: M.~Semenov-Tian-Shansky,
AMS (2000).

\bibitem{ABFPR_00}
T.-D.~Albert, H.~Boos, R.~Flume, R.~H.~Poghossian and K.~Ruhlig,
\textit{``An F-Twisted XYZ Model''},
\textsf{\doiref{10.1023/A:1011017232464}{Lett.~Math.~Phys.~53,~201~(2000)}},
\texttt{\arxivref{nlin/0005023}{nlin/0005023}}.

\bibitem{Kuc_64}
M.~Kuczma,
\textit{``A Survey of the Theory of Functional Equations''},
\textsf{Univ.~Beograd.~Publ.~Elektrotehn.~Fak.,~Ser.~Mat.~Fiz.~130,~1~(1964)}.

\bibitem{CLSW_03}
J.~Cao, H.-Q.~Lin, K.-J.~Shi and Y.~Wang,
\textit{``Exact solution of XXZ spin chain with unparallel boundary fields''},
\textsf{\doiref{10.1016/S0550-3213(03)00372-9}{Nucl.~Phys.~B~663,~487~(2003)}},
\texttt{\arxivref{cond-mat/0212163}{cond-mat/0212163}}.

\bibitem{YZ_07}
W.-L.~Yang and Y.-Z.~Zhang,
\textit{``On the second reference state and complete eigenstates of the open
  XXZ chain''},
\textsf{\doiref{10.1088/1126-6708/2007/04/044}{JHEP~2007,~044~(2007)}},
\texttt{\arxivref{hep-th/0703222}{hep-th/0703222}}.

\bibitem{KR_15}
V.~E.~Korepin and G.~A.~P.~Ribeiro,
\textit{``Thermodynamic limit of the six-vertex model with reflecting end''},
\textsf{\doiref{10.1088/1751-8113/48/4/045205}{J.Phys.~A:~Math~Theor.~48,~045205~(2015)}},
\texttt{\arxivref{1409.1212}{arxiv:1409.1212}}.

\bibitem{MM_99}
H.~MacKean and V.~Moll,
\textit{``Elliptic Curves: Function Theory, Geometry, Arithmetic''},
Cambridge University Press (1999).

\bibitem{WW_1927}
E.~T.~Whittaker and G.~N.~Watson,
\textit{``A Course of Modern Analysis''},
fourth edition,
Cambridge University Press (1927).

\bibitem{Web_1891}
H.~Weber,
\textit{``Elliptische Functionen und algebraische Zahlen''},
fourth edition,
Friedrich Vieweg und Sohn (1891),
Braunschweig.

\bibitem{FS_99}
G.~Felder and A.~Schorr,
\textit{``Separation of Variables for Quantum Integrable Systems on Elliptic
  Curves''},
\textsf{\doiref{10.1088/0305-4470/32/46/302}{J.~Phys.~A:~Math.~Gen.~32,~8001~(1999)}},
\texttt{\arxivref{math/9905072}{math/9905072}}.

\end{thebibliography}

\end{document}